\crefname{enumi}{}{}
\crefname{equation}{}{}
\crefname{claim}{Claim}{Claims}
\newtheorem{theorem}{Theorem} 
\newtheorem{lemma}[theorem]{Lemma}
\newtheorem{claim}[theorem]{Claim}
\newtheorem{definition}[theorem]{Definition}
\newtheorem*{definition*}{Definition}
\theoremstyle{remark}
\newtheorem{fact}{Fact}
\newcommand{\Prob}[1]{\mathbf{Pr}\left[#1\right]}
\newcommand{\Expc}[1]{\mathbf{E}\left[#1\right]}
\newcommand{\pa}[1]{\left( #1 \right)}
\newcommand{\minop}{\mathcal{O}}
\newcommand{\red}{\textsc{Red}}
\newcommand{\orange}{\textsc{Orange}}
\newcommand{\yellow}{\textsc{Yellow}}
\newcommand{\blue}{\textsc{Blue}}
\newcommand{\green}{\textsc{Green}}
\newcommand{\minority}{$k$-\textsc{minority}}
\newcommand{\majority}{$k$-\textsc{majority}}
\newcommand{\gossip}{\mbox{GOSSIP}}
\newcommand{\pull}{\mbox{PULL}}
\newcommand{\local}{\mbox{LOCAL}}
\newcommand{\minor}{m} 				
\newcommand{\eventA}{\mathcal{A}} 	
\newcommand{\eventB}{\mathcal{B}} 	
\renewcommand{\leq}{\leqslant}
\renewcommand{\le}{\leqslant}
\renewcommand{\geq}{\geqslant}
\renewcommand{\ge}{\geqslant}
\title{The Minority Dynamics and the Power of Synchronicity}
\author{
Luca Becchetti\footnote{\textit{Sapienza} University of Rome, Rome, Italy. becchetti@diag.uniroma1.it. Supported by the ERC Advanced Grant 788893 AMDROMA, PNRR MUR project PE0000013-FAIR”.}
\and
Andrea Clementi\footnote{\textit{Tor Vergata} University of Rome, Rome, Italy. clementi/pasquale@mat.uniroma2.it. Partially supported by Spoke1  ``FutureHPC \& BigData'' of the \textit{Italian Research Center on High-Performance Computing, Big Data and Quantum Computing (ICSC)} funded by \textit{MUR Missione 4 Componente 2 Investimento 1.4: Potenziamento strutture di ricerca e creazione di ``campioni nazionali'' di R\&S (M4C2-19) - Next Generation EU (NGEU).} }
\and
Francesco Pasquale$^{\dagger}$
\and
Luca Trevisan\footnote{Bocconi University, Milan, Italy. l.trevisan/isabella.ziccardi@unibocconi.it. This project has received funding from the European Research Council (ERC) under the European Union’s Horizon 2020 research and innovation programme (grant agreement No. 834861).}
\and
Robin Vacus\footnote{CNRS, IRIF, Paris, France. rvacus@irif.fr. } 
\and
Isabella Ziccardi$^{\ddagger}$
}
\date{}
\newcommand{\highlight}[1]{#1}
\begin{document}
\maketitle

\begin{abstract}

We study the minority-opinion dynamics over a fully-connected network of $n$ nodes with binary opinions.
Upon activation, a node receives a sample of opinions from a limited number of neighbors chosen uniformly at random.
Each activated node then adopts the opinion that is least common within the received sample.

Unlike all other known consensus dynamics, we prove that this elementary protocol behaves in dramatically different ways, depending on whether activations occur sequentially or in parallel.
Specifically, we show that its expected consensus time is exponential in $n$ under asynchronous models, such as  asynchronous $\gossip$.
On the other hand, despite its chaotic nature, we show that it converges within $O(\log^2 n)$ rounds with high probability under synchronous models, such as synchronous $\gossip$.

Finally, our results shed light on the bit-dissemination problem, that was previously introduced to model the spread of information in biological scenarios.
Specifically, our analysis implies that the minority-opinion dynamics is the first \textit{stateless} solution to this problem, in the parallel {passive-communication} setting, achieving convergence within a polylogarithmic number of rounds.
This, together with a known lower bound for sequential stateless dynamics, implies a parallel-vs-sequential gap for this problem that is nearly quadratic in the number $n$ of nodes. This is in contrast to all known results for problems in this area, which exhibit a linear gap between the parallel and the sequential setting.

\bigskip

\noindent
\textbf{Keywords:} Distributed Algorithms, Consensus Problems, Dynamics, Information Spreading, Random Processes
\end{abstract}

\section {Introduction}

\subsection{Dynamics and Consensus}

In distributed computing, the term {\em dynamics} is used to refer to  distributed processes in which each node of a network updates its state on the basis of a simple update rule applied to messages received in the last round of communication; furthermore, the network is  {\em anonymous}, meaning that nodes do not have distinguished names and incoming messages lack any identification of the sender.

The study of dynamics, of their global properties and of their applications  is an active research topic that touches several scientific areas \cite{AAE08,BCNPT16,BCN20,DGMSS11,FHK14,MNT14}. Among other applications, dynamics are suitable to model the restrictions on computation and communication of IoT protocols, and to study the way information spreads and groups self-organize in biological models and in models of animal behavior.

In this paper we consider {\em opinion dynamics}. These are dynamics in which every node has an opinion, which is an element of a finite set, and possibly, additional state information. Over time, these opinions are affected by the information exchanged by nodes. Typically, one is interested in whether all nodes eventually reach a \textit{consensus}  configuration in which they all share the same opinion, and in how quickly this happens. 

The \textit{consensus} problem is a fundamental task in distributed computing \cite{Aspnes12,DeGroot74,DKS10,R83} and a problem of considerable interest in the study of dynamics. We provide the definition of a basic version of the problem.

\begin{definition}[Consensus Problem] We say that a dynamics in which every node, in every round, holds an opinion solves the {\em consensus problem} if
\begin{itemize}
    \item For every initial configuration of opinions, the dynamics  reaches with probability 1 a configuration in which all nodes have the same opinion, and such opinion was held by at least one node in the initial configuration;
    \item Once all nodes have the same opinion, the configuration of opinions does not change in subsequent rounds.
\end{itemize}
The number of rounds until all nodes share the same opinion for the first time is called  {\em convergence time}. For every initial configuration, convergence time is a random variable over the randomness of the dynamics.
\end{definition}

Since dynamics are meant to capture agents with very limited computational and communication abilities, several models have been defined and analyzed that place restrictions on how communication takes place (e.g. $\local$, various versions of $\gossip$,  POPULATION PROTOCOL, etc. \cite{aspnes_introduction_2009,peleg2000distributed,shah2009gossip}).
All dynamics we consider in this paper operate in a standard model in which  communication in each round is restricted, so that each active node (that is, each node that updates its opinion in that round) is only able to receive messages from $k$ randomly chosen neighbors (where $k$ is a parameter of the model). This is called $k$-uniform $\gossip$~$\pull$ in the literature \cite{boyd2006randomized,clementi2020consensus,shah2009gossip} and we will refer to it as $k$-$\pull$ in the remainder.

Moreover, we will always assume that the underlying communication network is the complete graph on $n$ nodes. We will be interested in two well-studied variants of this model \cite{boyd2006randomized}, which differ for the schedule with which nodes are activated: in the {\em asynchronous sequential} $k$-$\pull$, in each round exactly one node becomes active, while in the {\em synchronous parallel} $k$-$\pull$, all nodes are active and their updates are simultaneous in each round.

In both settings described above, a solution to the \textit{consensus} problem is given by the \textsc{voter model} \cite{A13,L12,peleg2000distributed}, which is the dynamics in which each active node receives the opinion of one random neighbor and then adopts that opinion as its own (thus the voter model operates as a $1$-$\pull$ dynamics).

It is known that, for every initial Boolean configuration of opinions, the \textsc{voter model}'s expected convergence time is $\Theta(n^2)$ in the asynchronous sequential setting \cite{A13,L12} and  $\Theta (n)$  in the synchronous parallel setting  \cite{HP01}.

In the \majority\ dynamics, each active node receives the Boolean opinions of $k$ random neighbors ($k$ odd) and updates its opinion to the majority opinion among those $k$. This is a $k$-$\pull$ dynamics, and the \textsc{voter model} can be seen as implementing $1$-\textsc{majority}. It is easy to see that \majority\ also provides a solution to the \textit{consensus} problem for every odd $k$, and it is known  \cite{DGMSS11} that, for odd $k\geq 3$,  in the synchronous parallel model, 
for every initial configuration, convergence time is at most $O(\log n)$ both in expectation and in high probability\footnote{We say that an event holds {\em with high probability} if it holds with probability $1-n^{\Omega(1)}$.}. It is folklore that for odd $k\geq 3$, the convergence time in the asynchronous sequential model is $O(n\log n)$. The \majority\ dynamics has other desirable properties, such as being fault-tolerant and such that the consensus opinion is likely to be the majority, or close to the majority, of the original opinions. Moreover, bounds on the convergence time of   $3$-\textsc{majority} have been derived   for  the case of  non-binary opinions in \cite{BCNPT16,BCN20,BCEKMN17,DGMSS11,GL17}.

We note that in both above examples above, there is a $\Theta (n)$ gap between parallel and sequential convergence times. Moreover, a  $\tilde \Theta(n)$ gap on the convergence time holds between sequential and parallel communication models for   other forms of \textit{consensus} as well \cite{AAE08,BCN20,ghaffari2016polylogarithmic}.

 This is natural, because we are applying the same update rule, but $n$ times simultaneously in each round in the parallel case, and once per round in the sequential case. Indeed, in all the dynamics for which we have tight bounds both in the sequential and parallel models we see a $\tilde \Theta(n)$ gap between them. Because of synchronicity of the updates however, one round in the parallel model might potentially have different effects than $n$ rounds in the sequential model, and we will return to this point in our main results.

\subsection{The bit-dissemination problem}

More recently, inspired by the study of distributed biological systems,  a variant of \textit{consensus}, called the \textit{bit-dissemination} problem has been considered in \cite{boczkowski_minimizing_2017,bastide_self-stabilizing_2021,DBLP:conf/podc/KormanV22,BCKPT23}.

\begin{definition} [Bit-Dissemination Problem] In the \emph{bit-dissemination} problem,  every node, at every round, holds an opinion. One of the nodes, called the {\em source},  holds an opinion that it knows to be correct, and never changes it. The other nodes update their opinions according to the update rule and do not know the identity of the source. We say that a dynamics solves the {\em bit-dissemination} problem if:
\begin{itemize}
    \item For every initial configuration, with  probability 1,
    the dynamics eventually reaches a configuration in which all nodes share the same opinion as the source;
    \item Once all nodes have the same opinion as the source, the configuration of opinions does not change in subsequent rounds.
\end{itemize}
The number of rounds until all nodes have the same opinion as the source for the first time  is called the {\em convergence time}.
\end{definition}

This information-dissemination problem finds its main motivations in modeling communication processes that take place in biological systems \cite{aspnes_introduction_2009,vicsek_collective_2012}, where the correct information may include, e.g., knowledge about a preferred migration
route~\cite{johnstone_information_2002,lindauer_communication_1957}, the location of a food source~\cite{couzin_uninformed_2011}, or the need to recruit agents for a particular task~\cite{razin_desert_2013}. 

In these applications, it is well motivated to restrict to dynamics that use {\em passive-communication}, meaning that the only kind of message that a node can receive from a neighbor is the neighbor's current opinion. Note that the \textsc{voter model} and the \majority\ dynamics that we described above are examples of passive-communication opinion dynamics.

Another desirable property in these \highlight{applications} is for dynamics to be {\em self-stabilizing}. This means that not only does the dynamics converge to the desired stable configuration for every initial configuration of the opinions, but it also converges regardless of the initial contents of any additional state that the update rule relies on. The reason that this is called a self-stabilizing property is that if, at some point, an adversary corrupts the internal state of some nodes, convergence is still assured (because we could take the round in which this corruption happens as the first round).

We call an opinion dynamics {\em stateless} if the update rule by which an active node updates its opinion depends only on the messages received in that round, and if nodes do not keep any additional state information except for their opinions.  Note that a stateless dynamics is always self-stabilizing because there is nothing for an adversary to corrupt. The \textsc{voter model} and \majority\ dynamics are examples of stateless dynamics.

In \cite{DBLP:conf/podc/KormanV22}, Korman and Vacus  discuss the difficulty of developing passive-communication and self-stabilizing dynamics for the \textit{bit-dissemination} problem.
To see why passive-communication is difficult to achieve for the \textit{bit-dissemination} problem, consider that if, at some round, the opinion of the source is in the minority, then we want to use an update rule that increases the number of nodes with the minority opinion. On the contrary, if the opinion of the source is in the majority, then we want an update rule that decreases the number of nodes with the minority opinion. These two scenarios, however, are indistinguishable to nodes that are only allowed to see other nodes' opinions and that do not know the identity of the source.

Indeed, previous protocols for this problem exploit non-passive forms of communication and/or do not achieve self-stabilizing \textit{bit-dissemination}~\cite{ben-or_fast_2008,boczkowski_minimizing_2017,bastide_self-stabilizing_2021}.
Korman and Vacus \cite{DBLP:conf/podc/KormanV22} develop and analyze a protocol for \textit{bit-dissemination} in the synchronous parallel $k$-$\pull$ with passive communication   in which every node, at every round, sees the opinion of $k=\Theta(\log n)$ random nodes (that is, communication follows the $k$-$\pull$ model), and they show that their protocol converges with high probability in  $O(\log^{5/2}n)$ rounds. 
Their protocol is not stateless, as it involves a trend-following update rule: roughly speaking, each node adopts the opinion whose number of occurrences in the messages of the current round has increased compared to the previous round, breaking ties arbitrarily. This requires each node to store the number of opinions of either type seen in the previous round, which take $\log k = \log\log n + O(1)$ bits of storage. Their protocol  is self-stabilizing, and so it converges even if, in the first round,  nodes start with adversarially chosen ``false memories'' of a previous round.
Their work leaves open whether convergence  in ${\rm poly}\log n$ rounds for the \textit{bit-dissemination} problem can be achieved, in the synchronous parallel $k$-$\pull$, by a passive-communication {\em stateless} dynamics, as is possible for the \textit{consensus} problem.

Becchetti et al. \cite{BCKPT23} show that in the asynchronous sequential $k$-$\pull$, there exists an $\Omega(n^2)$     lower bound for the expected convergence time of \emph{any} passive-communication stateless dynamics for the \textit{bit-dissemination} problem. This result holds under the even stronger model that, in every round, the active node is given full access to the opinions of {\em all} other nodes.

Given that there is typically a $\tilde \Theta (n)$ gap between the convergence time of stateless passive-communication dynamics in the parallel versus sequential model, it would seem natural to conjecture a $\tilde \Omega (n)$  lower bound for the convergence time of  \textit{bit-dissemination} in the synchronous parallel passive-communication $k$-$\pull$ for stateless dynamics, and that the memory of the past round used by Korman and Vacus would be essential to achieve ${\rm poly}\log n$  convergence time. A bit surprisingly however, we show this is not the case, as we discuss below.

\subsection {Our results}

We study the {\em minority} update rule,
that defines a stateless dynamics in the passive-communication setting in the $k$-$\pull$ communication model.

\begin{definition}[\minority\ Dynamics\footnote{\highlight{The authors acknowledge Amos Korman for proposing this protocol to solve the bit-dissemination problem.}}]
In each round, each active node, upon seeing the opinions of $k$ random neighbors, updates its opinion to the minority opinion among those held by the $k$ neighbors. Ties are broken randomly.
\end{definition}

We will only consider binary opinions and, for simplicity, odd $k$, so that ties never occur. According to this rule, if a node sees unanimity among its $k$ sampled neighbors, it adopts those neighbors' unanimous opinion. Otherwise, it adopts whatever opinion is held by the fewer neighbors.

In the \textit{consensus} problem, for any initial configuration, this dynamics eventually, with probability 1, reaches a configuration in which all the nodes have the same opinion, and, in the \textit{bit-dissemination} problem, it reaches with probability 1 the configuration in which all nodes agree with the source. This is because, in the Markov chain that describes the change of the configuration of opinions over rounds, the only absorbing states are those in which all nodes agree, and such configurations are reachable from any initial configuration.

Concerning the speed with which such convergence happens, consider the sequential setting first.
    In the asynchronous sequential model, if we start from a balanced configuration in which each opinion is held by 50\% of the nodes, it is very difficult for the \minority~dynamics to make progress, because as soon as an opinion gains more followers than the other, in subsequent rounds there will be a bias toward adopting the minority opinion and the configuration will drift back towards the initial 50\%-50\% configuration. Indeed, we can prove an exponential lower bound for the round complexity of \minority.  

\begin{theorem} \label{thm:lb_sequential}
    There exist initial configurations from which the expected convergence time of \minority~in the asynchronous sequential $k$-$\pull$ is $2^{\Omega(k n)}$.
\end{theorem}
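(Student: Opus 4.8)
The plan is to track the random walk performed by the "imbalance" of the configuration and show it behaves like a strongly mean-reverting walk on the integers, so that escaping to an absorbing endpoint takes exponential time. Let me set up notation: for a configuration with $a$ nodes holding opinion $0$ and $n-a$ holding opinion $1$, let $x = a - n/2$ measure the deviation from the balanced state (start from $x=0$, i.e. $a = n/2$, assuming $n$ even). In the asynchronous sequential $k$-$\pull$, a single round picks a uniformly random node $v$ and a $k$-sample (with replacement, or without — the analysis is the same up to lower-order terms) of the opinions; $v$ switches to the minority opinion in that sample. The key structural observation is the drift: if currently opinion $0$ is the global minority ($x<0$), then a random $k$-sample is more likely to have opinion $1$ as its sample-minority, so the chosen node is biased toward adopting opinion $0$ — i.e. the walk is pushed back toward $x=0$. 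Symmetrically for $x>0$. So $\{x_t\}$ is a birth-death chain on $\{-n/2,\dots,n/2\}$ with a restoring drift everywhere except at the endpoints, where $x=\pm n/2$ are absorbing (unanimity is a fixed point of \minority).

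The concrete steps I would carry out: (1) Compute, for a configuration with minority fraction $p = \min\{a,n-a\}/n \le 1/2$, the probability $q_+$ that the activated node ends up holding the current-majority opinion and the probability $q_-$ that it ends up holding the current-minority opinion; show $q_+ - q_- \ge c\cdot(\tfrac12 - p)$ for an absolute constant $c>0$ when $p$ is bounded away from $0$, and more carefully that even when $p = (\text{minority count})/n$ is very small — say the minority has $m$ nodes — the sample-minority is opinion-$1$ with probability $1 - O(\binom{k}{1}(m/n)) \ge 1 - O(km/n)$, so that from a state with minority size $m$ the chain moves to minority size $m+1$ with probability $O(km/n)$ and to minority size $m-1$ with probability $\Omega(1/n)$ (there are $\Theta(1)$ fraction of sample-slots... actually $\Theta(m/n)$ chance the activated node currently holds the minority opinion and sees unanimity-majority and flips out). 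Care is needed here with the exact combinatorics of "minority within a $k$-sample", and I'd phrase it so the ratio of the up-probability to the down-probability, $r_m := \Prob{m\to m+1}/\Prob{m\to m-1}$, satisfies $r_m = O(k m)$ — small when $m$ is small, large when $m$ is large — this is exactly the wrong-way bias that traps the chain. (2) Apply the standard birth-death hitting-time formula: the expected time to reach an absorbing endpoint from the balanced state is $\sum$ of products $\prod r_i$-type terms, and since $r_m = \Theta(km/n)$ is $\ge$ some constant $>1$ once $m \ge \Omega(n/k)$, the product $\prod_{i}^{m} (1/r_i)$ over the "ascending" stretch from $\Theta(n/k)$ up to $n/2$ is $2^{\Omega(n)}$ after accounting for the $\log$ of a product of $\Theta(n)$ terms each of constant-or-larger size — more precisely $\prod_{i=cn/k}^{n/2} r_i^{-1} \le \prod (C i k / n)^{-1}$, and since a constant fraction of these factors are $\le 1/2$, the product is $2^{-\Omega(n)}$; feeding this into the electric-network / commute-time formula gives expected hitting time $2^{\Omega(n)}$, and since each factor of $r_i$ also carries the $k$ (for $i = \Theta(n/k)$ through $i=\Theta(n)$ the factor is $\Theta(ki/n)$, so once $i \gg n/k$ it is $\gg 1$, and the bulk contributes a factor $\prod (ki/n)$ which is like $(k)^{\Theta(n)}$ up to Stirling terms), one upgrades the bound to $2^{\Omega(kn)}$. (3) Conclude: by symmetry the same holds approaching either endpoint, so from the balanced (or near-balanced) start the expected convergence time is $2^{\Omega(kn)}$, proving the theorem.

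The main obstacle I expect is step (1): getting a clean, correct lower bound on the restoring drift that is uniform across the whole range of $m$ from $1$ up to $n/2$, and in particular pinning down the precise dependence on $k$ in the transition probabilities near the balanced state — the probability that the sample-minority of a $k$-sample from a slightly-biased urn is the global-majority opinion involves $\sum_{j < k/2}\binom{k}{j}p^j(1-p)^{k-j}$-type sums, and quantifying how the gap between this and its mirror grows with the bias $\tfrac12-p$ (it should be $\Theta(k(\tfrac12-p))$ for small bias, by a local CLT / Bernstein-type estimate, which is where the extra factor $k$ in the exponent comes from) requires a careful binomial computation rather than a soft argument. A secondary but routine concern is handling sampling with vs.\ without replacement and the $\pm 1$ boundary bookkeeping (the states adjacent to the absorbing ones), which only affect constants. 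Once the drift bound with the right $k$-dependence is in hand, the hitting-time estimate is the textbook Gambler's-ruin-with-drift computation and the $2^{\Omega(kn)}$ bound drops out by taking logs of the product of ratios.
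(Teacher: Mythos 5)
Your overall architecture -- model the minority size as a birth--death chain with a restoring drift and lower-bound the hitting time of an absorbing endpoint by the product of transition-probability ratios -- is exactly the paper's. But there is a genuine quantitative gap in steps (1)--(2) that prevents you from reaching $2^{\Omega(kn)}$. You estimate the ratio of the ``away from balance'' to the ``toward balance'' transition probabilities in the bulk as $\Theta(ki/n)$, i.e.\ polynomial in $k$, and then compute that the product over $\Theta(n)$ states is ``like $k^{\Theta(n)}$''. That product is $2^{\Theta(n\log k)}$, not $2^{\Theta(nk)}$; the final sentence ``one upgrades the bound to $2^{\Omega(kn)}$'' is a non sequitur, since no per-step estimate you state can deliver an exponent linear in $k$. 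Your intuition that the extra factor of $k$ comes from a local-CLT drift of order $k(\tfrac12-p)$ near the balanced state is also misplaced: near balance the ratio of up- to down-probabilities is only $1+o(1)$ per step and contributes sub-exponentially.

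The missing ingredient is a large-deviation (Chernoff), not CLT, estimate in a band where the minority is a \emph{constant fraction bounded away from both $0$ and $1/2$}. In such a band (the paper uses minority size in $[n/6,n/3]$), the activated node can move the chain toward consensus only if its $k$-sample either is unanimous for the majority (probability at most $(1-\alpha)^k$) or contains the true majority as its sample-minority (probability at most $e^{-2k(1/2-\beta)^2}$ by Chernoff). Hence a single wrong-way step has probability $e^{-\Omega(k)}$ while a right-way step has probability $\Omega(1)$, so each ratio $q_i/p_{i-1}$ in the birth--death formula is $e^{\Omega(k)}$ -- exponential in $k$, not $\Theta(ki/n)$ -- and the product over the $\Theta(n)$ states of that band alone gives $e^{\Omega(nk)}$. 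Note also that your claim that the down-probability is $\Omega(1/n)$ uniformly is false in this band for $k\gg\log n$ (it is $e^{-\Omega(k)}$), and that restricting attention to a constant-width band sidesteps the two regimes you were worried about (weak drift near $n/2$, reversed drift near the absorbing states), where a uniform bound is neither available nor needed.
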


In the synchronous parallel communication model, the \minority\ dynamics exhibits chaotic behavior. If, for example, the dynamics starts from a balanced configuration, in the next round we may expect a slight imbalance due to random noise, let's say slightly more 0s than 1s. In the subsequent round, this will cause a drift in the opposite direction, and we will see more 1s than 0s, with an increased imbalance. The majority value will keep alternating, and the imbalance   will keep increasing, until a round in which the imbalance becomes such that the minority opinion is held by $\ll n/k$ nodes. At that point the number of nodes with the minority opinion will grow by about a factor of $k$ each time, and then the majority value will start alternating again, and so on.

The main technical contribution of this paper is that, despite this chaotic behavior, when $k =\Omega( \sqrt{n\log n})$, we are able to analyze the behavior of the minority dynamics, by dividing the set of possible configurations into a finite number of ranges, and by understanding how likely it is for the configuration to move across these ranges. 

Our analysis shows that \minority\ converges in polylogarithmic time in the parallel setting, both when all nodes follow the protocol (\textit{consensus} problem) and when one node is a source that never changes its opinion (\textit{bit-dissemination} problem).

\begin{theorem} \label{thm:main}
If $ 185 \sqrt{n \log n} \leq k \leq  \frac{n}{2}$, then, in the synchronous parallel $k$-$\pull$,  \minority\ solves   the \emph{consensus}  and its convergence time is $O(\log n)$ in expectation and $O(\log^2 n)$ rounds w.h.p. 
\end{theorem}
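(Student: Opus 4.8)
The first step is to collapse the dynamics to a one--dimensional Markov chain. Let $X_t$ be the number of nodes holding opinion $1$ after round $t$ and set $p_t := X_t/n$. Conditioned on $X_t$, the round-$(t{+}1)$ opinions are mutually independent and, up to an $O(1/n)$ correction coming from self--exclusion and from sampling the $k$ neighbours without replacement, each node holds opinion $1$ with probability
\[
f(p)\;:=\;\Prob{\mathrm{Bin}(k,p)\le\tfrac{k-1}{2}}\;-\;(1-p)^k\;+\;p^k ,
\]
the first term being the probability of seeing a strict minority of $1$'s and the last two terms correcting for unanimous samples. Thus $X_{t+1}$ is, up to lower--order terms, $\mathrm{Bin}\!\big(n,f(p_t)\big)$; the states $0$ and $n$ are the only absorbing ones, and everything reduces to controlling this chain. (The same bookkeeping with one coordinate frozen would handle \textit{bit-dissemination}, but the present claim concerns \consensus.) We use throughout that $k\ge 185\sqrt{n\log n}\ge 185\log n$.

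The second step establishes, via Chernoff bounds and a Berry--Esseen / anticoncentration estimate for the binomial, the behaviour of $f$, for a suitable absolute constant $\gamma\in(0,\tfrac14)$:
\begin{itemize}
\item[(i)] $f$ is decreasing, $f(\tfrac12)=\tfrac12$, and $f(p)+f(1-p)=1$;
\item[(ii)] \emph{(absorbing band)} if $p\in[\gamma,\tfrac12-\gamma]$ then $f(p)\ge 1-e^{-\Omega(k)}\ge 1-n^{-2}$, and symmetrically $f(p)\le n^{-2}$ for $p\in[\tfrac12+\gamma,1-\gamma]$;
\item[(iii)] \emph{(multiplicative growth)} if the minority has size $M$ with $1\le M\le n/(2k)$, then in expectation the minority size next round lies in $[\tfrac12 kM,\,kM]$, and since this mean is $\ge k/2\ge 92\log n$ it concentrates up to a $(1\pm o(1))$ factor;
\item[(iv)] \emph{(repelling centre)} for $p=\tfrac12+\Delta$ with $|\Delta|$ below a small constant, $\bigl|f(p)-\tfrac12\bigr|\ge\min\{\beta\sqrt k\,|\Delta|,\ \gamma\}$ for an absolute constant $\beta>0$, so the imbalance $|X_t-n/2|$ is amplified by a factor $\Omega(\sqrt k)$ until it becomes a constant fraction of $n$.
\end{itemize}
I then partition the states by $M_t:=\min(X_t,n-X_t)$ into $\mathcal{R}_{\mathrm{done}}=\{M_t=0\}$, $\mathcal{R}_{\mathrm{small}}=\{1\le M_t<\gamma n\}$, $\mathcal{R}_{\mathrm{abs}}=\{\gamma n\le M_t\le(\tfrac12-\gamma)n\}$, and $\mathcal{R}_{\mathrm{bal}}=\{(\tfrac12-\gamma)n<M_t\le n/2\}$.

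The third and central step is the transition analysis. From $\mathcal{R}_{\mathrm{abs}}$, property (ii) gives $X_{t+1}\in\{0,n\}$ w.h.p., i.e.\ consensus in one round. From $\mathcal{R}_{\mathrm{small}}$, property (iii) makes the minority grow by a factor $\Theta(k)$ per round (with concentration), so within $O(\log n/\log k)=O(1)$ rounds it exceeds $n/(2k)$; the first round after that, a short computation with $f(p)\approx p^k\approx e^{-\Theta(kM/n)}$ shows the new minority has size $\approx n\bigl(1-e^{-\Theta(kM/n)}\bigr)$, which lands the configuration in $\mathcal{R}_{\mathrm{done}}$ (when $kM/n\ge\ln n$), or in $\mathcal{R}_{\mathrm{abs}}$, or in $\mathcal{R}_{\mathrm{bal}}$, or back in $\mathcal{R}_{\mathrm{small}}$. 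The hard case is $\mathcal{R}_{\mathrm{bal}}$: writing $Y_t=X_t-n/2$, the chain behaves like $\Expc{Y_{t+1}}\approx-\beta\sqrt k\,Y_t$ plus an independent $\Theta(\sqrt n)$ binomial fluctuation. By anticoncentration of $\mathrm{Bin}(n,\cdot)$, within $O(1)$ rounds $|Y_t|\ge c\sqrt n$ with at least a constant probability; and once $|Y_t|\ge c\sqrt n$, the very next round satisfies $\bigl|2\sqrt k\,(p_t-\tfrac12)\bigr|\gtrsim k/\sqrt n\ge 185\sqrt{\log n}$, which --- precisely because $k\ge 185\sqrt{n\log n}$ --- exceeds $2\sqrt{\ln n}$ with room to spare, so $f(p_t)\le n^{-2}$ (or $\ge 1-n^{-2}$) and consensus follows w.h.p.; the only alternative is that the amplification overshoots into $\mathcal{R}_{\mathrm{small}}$, which reverts via the previous case. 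Hence no range is a trap, and from every configuration the chain reaches $\mathcal{R}_{\mathrm{done}}$ with at least some absolute constant probability $\rho>0$ within $O(\log n)$ rounds (the $\log n$ generously absorbing repeated bouncing between $\mathcal{R}_{\mathrm{bal}}$ and $\mathcal{R}_{\mathrm{small}}$, with per--round failure probabilities summed by a union bound).

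Finally, since the previous statement is uniform over starting configurations, restarting it every $c\log n$ rounds gives $\Prob{\tau>c\,j\log n}\le(1-\rho)^j$ for the convergence time $\tau$; taking $j=\Theta(\log n)$ yields $\tau=O(\log^2 n)$ w.h.p., and summing the geometric tail yields $\Expc{\tau}=O(\log n)$. I expect the \emph{main obstacle} to be entirely the $\mathcal{R}_{\mathrm{bal}}$ regime: near $p=\tfrac12$ the repelling drift is dominated by the binomial noise, so one must show that the $\Theta(\sqrt n)$ fluctuation reliably seeds an imbalance and that $\sqrt k$--amplification then, in a single further step, drives $p$ past the point where $f$ is within $n^{-2}$ of $\{0,1\}$ --- in effect ruling out the chaotic orbit wandering forever between the balanced and near--unanimous regions without ever hitting an absorbing state, and quantifying the constant probability of "landing cleanly" per excursion uniformly in where the excursion begins.
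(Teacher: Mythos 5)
Your high-level architecture matches the paper's: reduce to a one-dimensional chain tracking the minority size, partition the state space into ranges, prove a constant-probability escape from each range in $O(\log n)$ rounds, and conclude via the restart argument (which is verbatim the paper's final step, giving $O(\log^2 n)$ w.h.p.\ and $O(\log n)$ in expectation). Your treatments of the near-balanced regime (variance seeds a $\Theta(\sqrt n)$ imbalance, which is then amplified) and of the tiny-minority regime (growth by a factor $\Theta(k)$ per round while $M\le n/(2k)$) are both in the spirit of the paper's \orange\ and $\blue_1$ analyses.

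However, there is a genuine gap in your handling of $\mathcal{R}_{\mathrm{small}}$ for $n/(2k)\lesssim M\lesssim \tfrac{3n\log n}{k}$. In that window the one-step map is $M\mapsto \Expc{U_t\mid m_t=M}=n(1-M/n)^k\approx n e^{-kM/n}$, which is \emph{decreasing} and has an interior fixed point $\Bar m$ with $k\Bar m/n=\Theta(\log(k/\log n))$ — the paper's \yellow\ area. At $\Bar m$ the expected minority size does not move, your multiplicative-growth property (iii) no longer applies (it requires $M\le n/(2k)$), and the absorbing-band property (ii) does not apply either (it requires $M\ge \gamma n$). Your ``short computation'' asserts the process lands in one of your four regions, but landing back in $\mathcal{R}_{\mathrm{small}}$ at or near $\Bar m$ is exactly the case where your argument provides no escape mechanism, so ``reverts via the previous case'' is circular. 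The paper devotes \cref{lem:out_of_yellow} and \cref{lem:expectation_in_yellow2} to showing this fixed point is unstable: variance seeds a deviation $|m_t-\Bar m|\ge\sqrt{n/k}$ with constant probability, the deviation is then multiplied by $\tfrac{\log n}{16}$ per round with probability $1-n^{-1/33}$, and — crucially — one must also show that upon exiting, the process avoids the narrow band around $M=\tfrac{n\log2}{k}$ (the \red\ area, where $\Expc{U_t}\approx n/2$ and the process is thrown back to the balanced regime), to rule out endless cycling. A second, repairable, issue is the arithmetic in $\mathcal{R}_{\mathrm{bal}}$: from $|Y_t|\ge c\sqrt n$ one amplification step gives $|p_{t+1}-\tfrac12|\approx c\sqrt{k/n}$, for which $2\sqrt k\,|p_{t+1}-\tfrac12|\approx 2c\,k/\sqrt n\ge 370c\sqrt{\log n}$ only on the \emph{following} step; a single step yields $2c\sqrt{k/n}=o(1)$, so you need two amplification rounds (and a check that the landing point is not so extreme that unanimity events reintroduce the $\mathcal{R}_{\mathrm{small}}$ problem) before $f$ is within $n^{-2}$ of $\{0,1\}$.
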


This also provides an efficient (i.e.~in a polylogarithmic number of rounds) solution to the \textit{bit-dissemination} problem via a stateless passive-communication dynamics, answering a question left open by the work of Korman and Vacus \cite{DBLP:conf/podc/KormanV22}.

\begin{theorem}\label{cor:main}
If $185 \sqrt{n \log n} \leq k
\leq \frac{n}{2}$, then, in the synchronous parallel $k$-$\pull$, \minority\ solves the \emph{bit-dissemination} problem 
and its convergence time is $O(\log n)$ in expectation and $O(\log^2 n)$ rounds w.h.p.
\end{theorem}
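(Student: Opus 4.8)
The plan is to obtain Theorem~\ref{cor:main} by re-running the analysis that proves Theorem~\ref{thm:main}, now with one node --- the source --- turned into a \emph{zealot} that never updates its opinion $b$. Write $m_t$ for the number of nodes holding $b$ after round $t$, so that $1 \le m_t \le n$ for all $t$. The only structural change from the \textit{consensus} setting is that the configuration $\mathbf{1}_b$ in which every node holds $b$ is now the \emph{unique} absorbing state; in particular the ``wrong'' near-consensus configuration $S_0$, in which the source alone holds $b$ while the other $n-1$ nodes hold $1-b$, is no longer absorbing. Accordingly the argument splits into two parts: (i) adapt the range-crossing analysis of Theorem~\ref{thm:main}, with the zealot in place, to show that the process reaches $\mathbf{1}_b$ or $S_0$ within $O(\log n)$ rounds in expectation and $O(\log^2 n)$ rounds with high probability; and (ii) show that from $S_0$ the process is driven back towards $\mathbf{1}_b$, so that passing through $S_0$ costs only $O(1)$ extra rounds (with constant probability) and cannot trap the dynamics.

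For part (ii): starting from $S_0$, in one round each non-source node adopts $b$ precisely when its sample contains the source --- which happens independently with probability $k/(n-1)$ --- and otherwise sees unanimity of $1-b$ and keeps $1-b$; hence $m_{t+1}$ is a $\mathrm{Binomial}(n-1,\,k/(n-1))$ random variable, concentrated around $k$. More generally, whenever $1 \le m_t \ll n/k$ the expected number of copies of $b$ in a sample is $\approx k m_t/n \ll 1$, so a $\approx k m_t/n$ fraction of nodes see exactly one copy of $b$ and adopt it, giving $\Expc{m_{t+1}\mid m_t}\approx k\,m_t$ with good concentration: the number of $b$-holders grows by roughly a factor $k$ per round. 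Since $k \ge 185\sqrt{n\log n}$ we have $\log_k n = O(1)$, so only $O(1)$ such rounds are needed for $m_t$ to leave the regime $m_t \ll n/k$, after which the configuration is again one to which part~(i) applies --- and in fact, as soon as $k m_t/n \ge \log n$ with $k$ bounded away from $n/2$, a Chernoff bound shows that with high probability \emph{every} node sees $b$ as a strict minority in its sample, so every node adopts $b$ and the process reaches $\mathbf{1}_b$ directly. This is the only place where the full strength of the hypothesis $k=\Omega(\sqrt{n\log n})$ is used, rather than merely $k=\omega(\log n)$.

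For part (i): the zealot perturbs a node's sample by at most one extra copy of $b$, and this perturbation is felt only by the $\approx k$ nodes that sample the source in a given round. One checks that this is absorbed by the slack in the drift and concentration estimates behind Theorem~\ref{thm:main}: when $b$ is a clear minority the zealot only helps (it keeps $m_t\ge 1$); when $b$ is a clear majority the extra copy of $b$ is irrelevant, since $1-b$ is already a strict minority in essentially every sample and every node adopts $1-b$; and near the balanced configuration the aggregate effect of the zealot on $m_{t+1}$ is $O(\sqrt{k}) = O((n\log n)^{1/4})$, which is dominated by the $\Theta(\sqrt n)$ sampling noise inherent in one parallel round. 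Consequently the partition of configurations into ranges, and the bounds on the probabilities of moving across them --- in particular the geometric growth of the imbalance by a factor $\Theta(\sqrt{k})$ per round in the near-balanced ``chaotic'' regime --- carry over essentially verbatim, yielding that from every initial configuration the process reaches $\mathbf{1}_b$ or $S_0$ within $O(\log n)$ expected rounds and $O(\log^2 n)$ rounds with high probability. Combining the two parts gives the theorem: for the expectation bound, let $M=\sup_{c}\Expc{\,\text{time to reach }\mathbf{1}_b\mid\text{start in }c\,}$, which is finite since $\mathbf{1}_b$ is reachable from every configuration $c$; part~(i) reaches $\{\mathbf{1}_b,S_0\}$ in $O(\log n)$ expected rounds, and from $S_0$ part~(ii) reaches $\mathbf{1}_b$ within $O(\log n)$ rounds with probability at least some constant $q>0$, so $M \le O(\log n) + (1-q)M$ and hence $M=O(\log n)$. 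For the high-probability bound, split the first $\Theta(\log^2 n)$ rounds into $\Theta(\log n)$ phases of $\Theta(\log n)$ rounds; by part~(i) with Markov's inequality, followed by part~(ii) from $S_0$, within each phase --- regardless of the configuration in which it begins --- the process reaches the absorbing state $\mathbf{1}_b$ with probability at least a constant $c>0$, and a Chernoff bound over the phases then gives that $\mathbf{1}_b$ is reached within $O(\log^2 n)$ rounds with high probability.

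The main obstacle is part~(i): invoking the machinery of Theorem~\ref{thm:main} requires verifying that none of its lemmas --- in particular the large-ratio growth of the imbalance in the near-balanced regime and the concentration estimates governing the factor-$k$ regime --- break when the zealot is added. I expect this to be conceptually routine but laborious, precisely because the zealot's influence is always either negligible next to the sampling noise or directed in favour of $b$; one should never lose more than a constant factor in the relevant slacks, and indeed for this part of the argument the constant $185$ in the hypothesis could be relaxed.
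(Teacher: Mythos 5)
Your proposal is correct and follows essentially the same route as the paper: run the consensus analysis of \cref{thm:main}, treating the source as a negligible perturbation, until the process reaches either consensus on $b$ or the configuration in which the source is the lone dissenter, then observe that from the latter the $\Theta(k)$ nodes that sample the source adopt $b$, landing in the safe (green) range from which unanimity on $b$ follows in one more round, and finally combine the two phases with a geometric restart argument. Your part~(ii) is precisely the paper's \cref{lem:wrong_consensus} (proved via \cref{lem:out_of_blue} and \cref{lem:out_of_green}), and your part~(i) is asserted at essentially the same level of detail as in the paper's own proof.
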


We mentioned above that for all the dynamics for which we have a tight analysis there is a $\tilde \Theta(n)$ gap between the sequential and parallel convergence time, at least in the context of stateless opinion dynamics with passive-communication.

Our results show a natural example of a  dynamics for which this gap is exponential in $n$, going from $O(\log^2 n)$ to $2^{\Omega(kn)}$. Moreover, we have a natural problem, the \textit{bit-dissemination} problem, for which there is a nearly-quadratic gap between parallel passive-communication stateless solutions, for which we show an $O(\log^2 n)$ upper bound, and sequential passive-communication stateless solutions, for which an $
\Omega(n^2)$ lower bound was known.

\paragraph{Roadmap.}
The rest of this paper is organized as follows.
\cref{sec:par_overview} introduces notation and notions that will be used throughout the paper and then provides a self-contained overview of the main technical part, proving poly-logarithmic   convergence time of \minority\ for the \textit{consensus} and \textit{bit-dissemination} problems. In more detail, in \cref{subse:main_results}, we provide the high-level proofs of \cref{thm:main} and \cref{cor:main}, we state our main technical lemmas, we discuss the main ideas in our proofs, and we describe the main technical challenges that we have to overcome. 
Full proofs of all technical results above are given in \cref{sec:minority}, presented in the same order as in \cref{subse:main_results}. 
\cref{sec:lowerbound_seq} presents the proof of \cref{thm:lb_sequential}, based on the analysis of the birth-death chain of \minority.
Finally, \cref{sec::open} discusses some technical or more general questions that this work leaves open and that in our opinion deserve further investigation.

\section{\texorpdfstring{$k$-Minority}{k-Minority} in the Parallel Model }\label{sec:par_overview}

In this section, we analyze the \minority\ dynamics in the synchronous parallel communication model, over the complete graph $K_n$. We consider the case in which the sample size parameter $k$ depends on $n$, i.e. $k=k(n)$ and we prove \cref{thm:main} and \cref{cor:main} stated in the previous section.

\subsection{Notation and preliminaries}
\label{sec:notation_preliminaries}

We use the following Markov chain to describe the evolution of the \minority\ dynamics on $K_n$. Its state space is $\{0,\dots,\frac{n}{2}\} \times \{0,1\}$, and,  at every round $t\geq 0$, the state of the process at round $t$  is given  by the size $\{m_t\}_t$  and the opinion $\{\mathcal{O}_t\}_t$ of the minority.
To get some intuition about the key quantities that govern the evolution of the process, we  introduce the following notions.


\begin{definition}[Wrong Nodes]\label{def:agent_wrong}
    We say node $v$ is \textit{wrong} in round $t$ if and only if: i)  $\minor_t > 0$, and ii)  $v$ adopts the majority opinion at time $t+1$. Observe that, if $m_t > 0$, the following events may cause $v$ to be wrong in round $t$:
 \begin{align*}
      \eventA_t(v) &= \text{\{$v$ samples more than $\tfrac{k}{2}$ nodes holding the minority opinion\}}\\
      \eventB_t(v) &= \text{\{$v$ samples $k$ nodes holding the majority opinion\}}.
 \end{align*}
We define  $W_t = \sum_{v \in V}\mathds{1}_{\eventA_t(v)}$ as the r.v.~counting the  number of nodes that sample more than $\frac{k}{2}$ nodes with the minority opinion in round  $t$, and  $U_t = \sum_{v \in V}\mathds{1}_{\eventB(v)}$ as the r.v.~counting   the number of nodes that sample a unanimity of nodes holding the majority opinion in round $t$. 
\end{definition}

The following is a simple fact that relates $m_{t+1}$ to the number of wrong nodes in the previous round:
\begin{fact}\label{fa:wrong_vs_minor}
	We deterministically have that 
	\[
		m_{t+1} = 
			\begin{cases}
				W_t + U_t, &\text{if $W_t + U_t < n/2$},\\
				n - W_t - U_t, &\text{otherwise}.
			\end{cases}
	\]
    Moreover, $\mathcal{O}_{t+1} = 1 - \mathcal{O}_t$ in the first case and $\mathcal{O}_{t+1} = \mathcal{O}_t$ in the second.
\end{fact}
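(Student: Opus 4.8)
The plan is to argue round by round, conditioning on a round $t$ with $m_t > 0$ (if $m_t = 0$ the configuration is absorbing, $W_t = 0$, $U_t = n$, and both sides of the claimed identity equal $0$). The crux is a complete case analysis of the minority update rule for a single node $v$. Write $x_v$ for the number of the $k$ neighbors sampled by $v$ at round $t$ that hold the minority opinion $\mathcal{O}_t$; since $k$ is odd, $x_v \neq k/2$, so $v$'s sample is never tied. By the rule, at round $t+1$ node $v$ adopts the unanimous opinion if $x_v \in \{0, k\}$, and otherwise the scarcer opinion in its sample. Spelling this out, $v$ ends up with the majority opinion $1 - \mathcal{O}_t$ exactly when $x_v > k/2$ (non-unanimous sample, with $\mathcal{O}_t$ the scarcer side) or $x_v = 0$ (unanimous on the majority); in every other case ($0 < x_v < k/2$, or $x_v = k$) it ends up with $\mathcal{O}_t$. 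Hence the wrong nodes are precisely those for which $\mathcal{A}_t(v)$ holds (i.e.\ $x_v > k/2$) or $\mathcal{B}_t(v)$ holds (i.e.\ $x_v = 0$), and these two events are disjoint since $\mathcal{A}_t(v)$ forces $x_v \geq 1$.

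Next I would count occupation numbers at round $t+1$. The number of nodes holding opinion $1 - \mathcal{O}_t$ at round $t+1$ is exactly the number of wrong nodes, which, by disjointness of $\mathcal{A}_t(v)$ and $\mathcal{B}_t(v)$, equals $\sum_{v} \mathds{1}_{\mathcal{A}_t(v)} + \sum_{v} \mathds{1}_{\mathcal{B}_t(v)} = W_t + U_t$. Consequently $n - W_t - U_t$ nodes hold opinion $\mathcal{O}_t$ at round $t+1$.

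Finally I would read off $m_{t+1}$ and $\mathcal{O}_{t+1}$ as the size and label of the scarcer opinion at round $t+1$. If $W_t + U_t < n/2$, opinion $1 - \mathcal{O}_t$ is the strict minority, so $m_{t+1} = W_t + U_t$ and $\mathcal{O}_{t+1} = 1 - \mathcal{O}_t$; if $W_t + U_t > n/2$, opinion $\mathcal{O}_t$ is the strict minority, so $m_{t+1} = n - W_t - U_t$ and $\mathcal{O}_{t+1} = \mathcal{O}_t$; and in the boundary case $W_t + U_t = n/2$ (only possible for even $n$) both counts equal $n/2$ and the stated tie-breaking convention puts this in the ``otherwise'' branch with $\mathcal{O}_{t+1} = \mathcal{O}_t$. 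This is exactly the claimed piecewise formula.

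I expect no real obstacle: the statement is deterministic and elementary. The single point deserving care in the write-up is that a node becomes wrong in two genuinely distinct ways -- a strict minority of the global-majority opinion in its sample, versus full unanimity on the global-majority opinion -- so both $W_t$ and $U_t$, not $W_t$ alone, are needed, and, dually, the unanimity case $x_v = k$ must be excluded. Checking the $W_t + U_t = n/2$ boundary against the stated convention is the remaining bookkeeping.
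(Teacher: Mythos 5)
Your proposal is correct and follows essentially the same route as the paper's proof: count the nodes that adopt the majority opinion at round $t+1$ as $W_t+U_t$, then identify $m_{t+1}$ and $\mathcal{O}_{t+1}$ according to whether this count falls below $n/2$. You are in fact somewhat more careful than the paper's two-line argument, handling the $m_t=0$ and $W_t+U_t=n/2$ boundary cases and noting that a node sampling an all-minority unanimity is \emph{not} wrong --- which requires reading $\mathcal{A}_t(v)$ as excluding $x_v=k$, the same implicit convention the paper relies on.
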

\begin{proof}
    In the first case, $n - W_t - U_t > \frac{n}{2}$ nodes will correctly assess and adopt the minority opinion, which will thus become majoritarian in the next round. The remaining (wrong) $W_t + U_t$ nodes will adopt the majority opinion, which will become minoritarian in round $t+1$. I.e., $m_{t+1} = W_t + U_t$ and $\mathcal{O}_{t+1} = 1 - \mathcal{O}_t$ in this case.
    The second case is similar, this time with $m_{t+1} = n - W_t - U_t \le \frac{n}{2}$ and $\mathcal{O}_{t+1} = \mathcal{O}_t$.
\end{proof}

The next lemma clarifies that the events introduced in \cref{def:agent_wrong} play distinct roles in different regimes of the process. More precisely, they are \emph{almost} mutually exclusive, in the sense that the probability that two nodes $u$ and $v$ exist, such that both $\eventA_t(u)$ and $\eventB_t(v)$ occur in any given round $t$, is essentially negligible.
\begin{lemma}
\label{lem:no_wrong}
Assume $k \geq 5 \log n$. The following holds: i) if $m_t \leq \frac{n}{2}-n\sqrt{\frac{1.5\log n}{k}}$, then $W_t = 0$ w.h.p.; ii) if $m_t \geq \frac{3n\log n}{k}$, then $U_t = 0$ w.h.p.
\end{lemma}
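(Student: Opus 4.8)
The plan is to prove both items by a union bound over the $n$ nodes, after bounding the probability of the relevant single-node event by $n^{-3}$ via a Chernoff/Hoeffding-type concentration inequality; no clever argument is needed, and the only real care is in checking that the constants line up.

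For item i), fix a node $v$ and let $X$ be the number of minority-opinion nodes among the $k$ neighbors it samples in round $t$. Whether neighbors are sampled with or without replacement, $X$ is a sum of $k$ indicators with $\Expc{X}=k\,m_t/n$, and the hypothesis $m_t\le \frac n2 - n\sqrt{\frac{1.5\log n}{k}}$ gives $\Expc{X}\le \frac k2 - \sqrt{1.5\,k\log n}$. The event $\eventA_t(v)$ is exactly $\{X>k/2\}$, i.e.\ a deviation of more than $\sqrt{1.5\,k\log n}$ above the mean, so Hoeffding's inequality (which holds verbatim for sampling without replacement) yields $\Prob{X>k/2}\le \exp\!\big(-2(\sqrt{1.5\,k\log n})^2/k\big)=\exp(-3\log n)=n^{-3}$. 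A union bound over the $n$ choices of $v$ gives $\Prob{W_t\ge 1}\le n^{-2}$. For item ii), fix $v$ and observe that $\eventB_t(v)$ is the event that none of $v$'s $k$ sampled neighbors holds the minority opinion. With replacement this probability is exactly $(1-m_t/n)^k$; without replacement it equals $\prod_{i=0}^{k-1}\frac{n-m_t-i}{n-i}\le (1-m_t/n)^k$, since each factor is at most $\frac{n-m_t}{n}$. Using $1-x\le e^{-x}$ and $m_t\ge \frac{3n\log n}{k}$, we get $(1-m_t/n)^k\le e^{-k m_t/n}\le e^{-3\log n}=n^{-3}$, and a union bound over the $n$ nodes gives $\Prob{U_t\ge 1}\le n^{-2}$.

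The assumption $k\ge 5\log n$ plays only an auxiliary role: it (together with the exponent $3$ obtained above) keeps the per-node failure probabilities below $n^{-3}$ so that "w.h.p." survives the union bound, and it ensures the thresholds in i) and ii) delimit a non-trivial portion of $\{0,\dots,n/2\}$ (when $k$ is smaller, the relevant regime is empty and the claim is vacuous). I do not anticipate a genuine obstacle here; the two points that need to be stated carefully are (a) that Hoeffding's bound is applicable in the without-replacement setting, and (b) that the constant $1.5$ in i) and $3$ in ii) are precisely what is needed to make the single-node bound $n^{-3}$, hence $n^{-2}$ after union bound.
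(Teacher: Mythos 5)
Your proof is correct and follows essentially the same route as the paper's (which establishes these facts as parts (c) and (d) of Lemma~\ref{lem:preliminaries}): a Chernoff/Hoeffding bound, respectively an exact computation, on the per-node events $\eventA_t(v)$ and $\eventB_t(v)$, followed by a union bound over the $n$ nodes --- which is the same first-moment calculation the paper phrases as bounding $\Expc{W_t}$ and $\Expc{U_t}$ and applying Markov's inequality. Your extra remarks on sampling without replacement and on why the constants $1.5$ and $3$ are exactly what the union bound needs are accurate but do not change the argument.
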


\cref{lem:no_wrong} is proved as part of \cref{lem:preliminaries} (points (c) and (d)), while  we next discuss some consequences. First, in every  round $t$, w.h.p. we    have either $W_t=0$ or $U_t=0$, which implies that the process $m_{t}$ can essentially be described by one of the two  random variables $U_t$ or $W_t$. 
Moreover, it implies that if $\frac{3n\log n}{k} \leq m_t \leq \frac{n}{2}-n\sqrt{\frac{1.5n\log n}{k}}$, then \minority\ achieves consensus in round $t+1$, w.h.p. We call the above safe range  for $m_t$  the \emph{green area}. 

Thanks to the behaviour of variables $U_t$ and $W_t$ discussed above, we can identify a constant number of  areas (see \cref{fig:main_proof_structure}), each defined by a specific range for $m_t$, showing that with constant probability, \minority\ reaches the green area (and thus achieves consensus in the next round w.h.p.) within a logarithmic number of rounds: This is the   the basic strategy towards proving \cref{thm:main}. 
From this, the same bound can be shown for the {\em bit-dissemination} problem (\cref{cor:main}).

As simple as it may seem, implementing the above strategy is not straightforward. The reason is that the evolution of $m_t$ is highly non-monotonic, exhibiting a behaviour that strongly depends on the area $m_t$ belongs to. In particular, the process might jump over non-contiguous areas and might, in principle, jump back and forth between any two of them. Moreover, the presence of two fixed points for $\Expc{m_{t+1} \mid m_t}$ highlights the existence of two (unstable, as we shall see) equilibria, in which the process might get stuck. We also remark that in principle, one might attempt a proof strategy that differs from the one outlined by \cref{fig:main_proof_structure}(a). However, while this may be possible, it does not seem a trivial endeavour, since \cref{fig:main_proof_structure} actually seems to summarize ``preferred'' (or more likely) state transitions in the global Markov chain that describes the process.

\subsection{Proof of main results}\label{subse:main_results}
In this section, we prove \cref{thm:main} and \cref{cor:main}.

\paragraph{\highlight{Overview.}}
For the sake of the analysis, we partition the set of all possible configurations into a small, constant number of \textit{areas}, \highlight{that in the sequel will be named \textit{colors}}, each of them identified by a specific interval of the range $\{0,1, \ldots, \frac{n}{2}\}$ of possible values for the random variable $m_t$.
\highlight{Informally, the warmer the color, the slower the convergence   will be from the corresponding subset, as illustrated in \cref{fig:main_proof_structure}.
Formally, they are defined as follows:}
\begin{align*}
    \blue_1 &= \left\{m \in \mathbb{N}: 1 \leq m \leq   \tfrac{n\log 2}{k} -\sqrt{\tfrac{n\log 2}{k}}\right\} \\
    \red &= \left\{m \in \mathbb{N}:\tfrac{n\log 2}{k} -\sqrt{\tfrac{n\log 2}{k}} \leq m \leq \tfrac{n\log 2}{k} +\sqrt{\tfrac{n\log 2}{k}}\right\} \\
    \blue_2 &= \left\{m \in \mathbb{N}:\tfrac{n\log 2}{k} +\sqrt{\tfrac{n\log 2}{k}} \leq m \leq \tfrac{n}{k} \log \left( \tfrac{k}{4\log n}\right)\right\} \\
    \yellow &= \left\{m \in \mathbb{N}: \tfrac{n}{k} \log \left( \tfrac{k}{4\log n}\right) \leq m \leq \tfrac{3n\log n}{k}\right\} \\
    \green &= \left\{m \in \mathbb{N}: \tfrac{3n\log n}{k} \leq m \leq \tfrac{n}{2}-n \sqrt{\tfrac{2 \log n}{k}}\right\} \\
    \orange &= \left\{m \in \mathbb{N}:\tfrac{n}{2}-n \sqrt{\tfrac{2 \log n}{k}} \leq m \leq \tfrac{n}{2}\right\} 
\end{align*}


\highlight{In order to understand the behaviour of the process in the different areas, it is useful to consider the function~$f$ represented on \cref{fig:main_proof_structure} (b).
Informally, $f$ can be seen as an approximation of $m \mapsto \Expc{m_{t+1} \mid m_t=m}$. Formally, we define it as follows:
\begin{itemize}
    \item when $m \in A = \blue_1 \cup \red \cup \blue_2$, we set
    \begin{equation*}
        f(m) = \min\{\Expc{U_t \mid m_{t}=m}, n-\Expc{U_t \mid m_t=m}\},
    \end{equation*}
    since in this range, $W_t = 0$ w.h.p.
    \item when $m \in B = \green$, we set $f(m) = 0$, since in this range, $m_{t+1}=0$ w.h.p.
    \item when $m \in C = \orange$, we set $f(m) = \Expc{W_t \mid m_t = m}$, since in this range, $U_t = 0$ w.h.p.
\end{itemize}
Using these insights, we will show that, starting from any initial configuration,  the process $m_t$ follows a short path over the colored areas, eventually landing into the green area and \highlight{then} consensus within an additional step. This typical behaviour is summarized by \cref{fig:main_proof_structure} (a).
However, this path is somewhat chaotic, in the sense that $m_t$ does not tend to evolve monotonically; instead, it can jump across non-adjacent areas over consecutive rounds.
}

\begin{figure} [htbp]
    \centering
    \includegraphics[width=\textwidth]{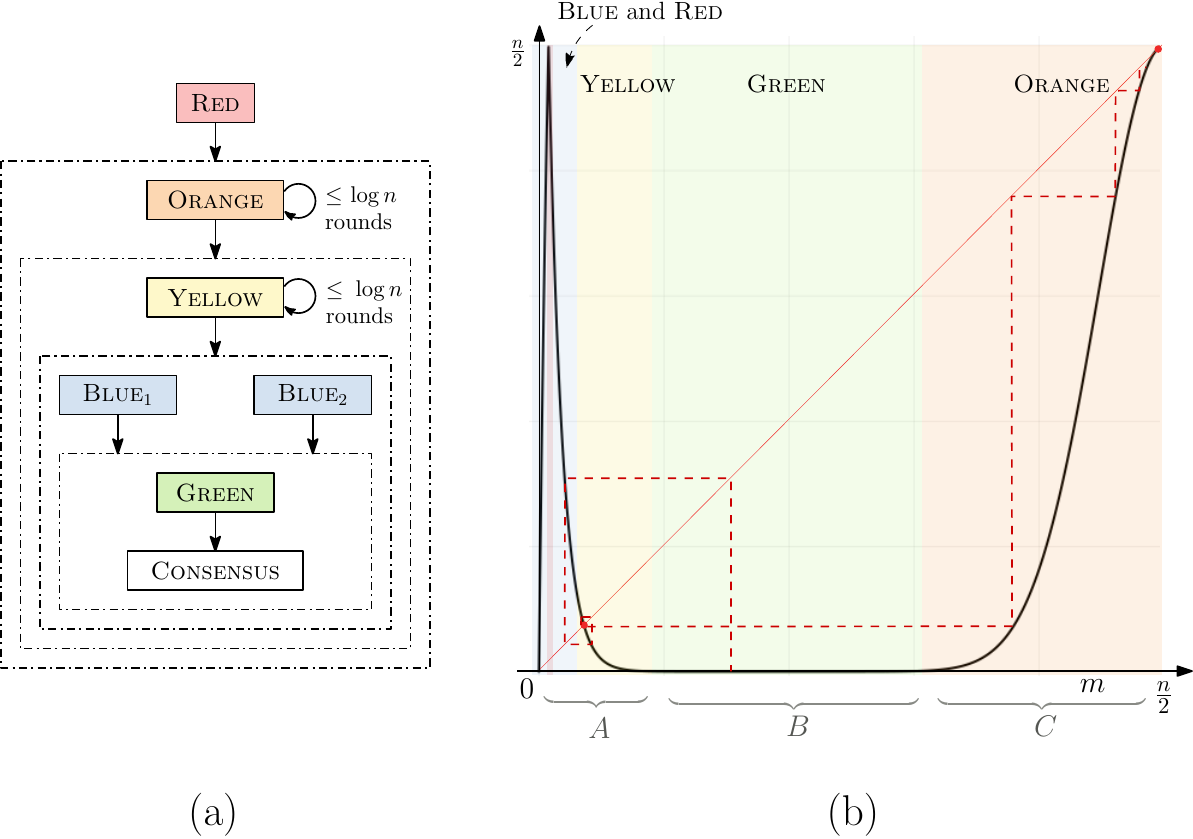}
    \caption{\highlight{\small {\bf (a)} Roadmap for the  proof of \cref{thm:main}. Unless explicitly indicated, transitions occur with constant probability in a constant number of rounds. Next to self-loops, we give the number of rounds the process spends in corresponding areas. {\bf (b)} Function~$f$, satisfying $f(m) \approx \Expc{m_{t+1} \mid m_t=m}$, and the colored areas. The red circles are the fixed points of $f(m)$, i.e. $m_1 = f(m_1)$ and $m_2 = f(m_2)=\frac{n}{2}$. The red dashed track represent some iteration of $f(m)$, starting with a configuration close to $\frac{n}{2}$ -- illustrating the non-monotonic, chaotic behavior of the process.}}
\label{fig:main_proof_structure}
\end{figure}

\paragraph{\highlight{Intermediate Results.}}
\highlight{In order to formalize the aforementioned arguments, we state a few intermediate lemmas, which we will prove in subsequent sections.
}

\begin{lemma}[Red Area] \label{lem:out_of_red}
   For any $t \geq 1$ and $m \in [\frac{n}{2}]$, $\Prob{m_{t+1} \not \in \red \mid m_t =m} \geq c$, for a suitable positive constant $c$.
\end{lemma}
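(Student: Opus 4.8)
The goal is to show that from any state $m_t = m \in \red$, the probability of leaving $\red$ in one step is at least a constant. Since $\red$ is the tiny interval of width $2\sqrt{n\log 2 / k}$ centered at $m_1 = (n\log 2)/k$, the natural strategy is to argue that $m_{t+1}$, which is governed by $U_t$ w.h.p.\ (by \cref{lem:no_wrong}, since $m \le \frac{n\log 2}{k} + \sqrt{\cdot} \ll \frac n2 - n\sqrt{1.5\log n/k}$), has enough variance that it cannot concentrate inside a window of width $O(\sqrt{n/k})$ around its own mean. First I would compute $\Expc{U_t \mid m_t = m} = n\cdot p(m)$ where $p(m) = \Prob{\text{a node samples } k \text{ majority opinions}} = \prod_{i=0}^{k-1}\frac{n-m-i}{n-i} \approx (1 - m/n)^k \approx e^{-mk/n}$; at $m \in \red$ this is $\approx e^{-\log 2} = 1/2$, so $\Expc{U_t} \approx n/2$ and $f(m)\approx n/2$ — exactly the delicate regime where $m_{t+1} = \min\{U_t, n - U_t\}$ sits near the "fold" at $n/2$. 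I would then lower-bound $\Var{U_t}$: the indicators $\mathds 1_{\eventB_t(v)}$ are negatively correlated (or nearly independent, with covariance $O(k^2/n^2)$ corrections), so $\Var{U_t} = \Theta(n)$ since each has variance $\approx 1/4$.

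The core of the argument is then an anti-concentration claim: a random variable $U_t$ with mean $\approx n/2$ and variance $\Theta(n)$ satisfies $\Prob{|U_t - \Expc{U_t}| \ge C\sqrt{n}} \ge c'$ for suitable constants, by (reverse) Chebyshev / Paley–Zygmund, or more cleanly by a Berry–Esseen / CLT-type estimate for sums of weakly dependent bounded indicators. Combined with $\min\{U_t, n-U_t\}$: if $U_t$ lands at distance $\ge C\sqrt n$ from $n/2$ on the lower side, then $m_{t+1} = \min\{U_t, n-U_t\} = U_t \le n/2 - C\sqrt n$, which is far below the top of $\red$ (width only $2\sqrt{n\log2/k} = o(\sqrt n)$ since $k \ge 185\sqrt{n\log n} \gg 1$), hence $m_{t+1} \notin \red$; symmetrically if $U_t$ is large then $m_{t+1} = n - U_t \le n/2 - C\sqrt n \notin \red$ as well. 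Either way, with constant probability $m_{t+1}$ escapes downward out of $\red$. I should also handle the case where $\Expc{U_t\mid m_t=m}$ is itself already bounded away from $n/2$ by more than the width of $\red$ for $m$ at the endpoints of $\red$ — there a one-sided concentration bound suffices and the argument is easier.

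A subtlety is that the conditioning $m_t = m$ fixes only the minority size, not the opinion, but by the symmetry of the dynamics and \cref{fa:wrong_vs_minor} the distribution of $m_{t+1}$ depends only on $m$, so this causes no difficulty. Another point to be careful about: the w.h.p.\ statement $W_t = 0$ from \cref{lem:no_wrong} must be fed into the $\min$ in \cref{fa:wrong_vs_minor} correctly — on the high-probability event, $m_{t+1} = \min\{U_t, n - U_t\}$ exactly (ignoring the $O(1/\mathrm{poly}\,n)$ failure probability, which is absorbed into the constant $c$), so the anti-concentration of $U_t$ transfers directly to $m_{t+1}$.

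\textbf{Main obstacle.} The hard part is the anti-concentration / variance lower bound for $U_t$ near its mean $n/2$, carried out with honest control of the dependence among the indicators $\mathds 1_{\eventB_t(v)}$ (the samples of distinct nodes in $k$-$\pull$ overlap only mildly, but one must quantify this) and with the endpoint regime of $\red$ — where $\Expc{m_{t+1}}$ may be anywhere in $[n/2 - O(\sqrt{n/k}), n/2]$ — handled uniformly. One must ensure the escape probability constant $c$ does not degrade as $m$ ranges over all of $\red$; I expect the cleanest route is a two-sided CLT estimate (Berry–Esseen with error $O(1/\sqrt n)$) showing $U_t/\sqrt n$ is approximately Gaussian, from which both the "mean near $n/2$" and "mean moderately below $n/2$" subcases follow with a single constant.
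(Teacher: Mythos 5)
There is a genuine gap, and it starts with a misreading of the statement. The lemma quantifies over \emph{all} $m \in [\frac{n}{2}]$: it asserts that from any current state, the \emph{next} state avoids $\red$ with constant probability. You restrict attention to $m_t = m \in \red$. The case that actually carries the content of the lemma is the one where $m$ (possibly lying in any other area) is such that the expected next minority size falls \emph{inside} the narrow red window, i.e.\ $\Expc{X \mid m_t = m} \in \red$ for some $X \in \{U_t, n-U_t, W_t, n-W_t\}$ --- for instance an $m$ with $n(1-m/n)^k \approx \frac{n\log 2}{k}$. There one needs anti-concentration at scale $\sqrt{n\log 2/k}$ around a mean of order $\frac{n\log 2}{k}$: the paper's \cref{lem:red_to_blue} observes that the half-width of $\red$ is exactly $\sqrt{\Expc{X \mid m_t=m}}$, so the reverse Chernoff bound pushes $X$ out of $\red$ with constant probability; the complementary case, where no such expectation lies in $\red$, is handled by concentration via \cref{lem:mt_given_expectation}. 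Neither case appears in your plan.

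Second, in the case you do treat, the geometry is off. When $m \in \red$ we have $m \approx \frac{n\log 2}{k} = O(\sqrt{n/\log n})$, hence $\Expc{U_t \mid m_t=m} = n(1-m/n)^k = \Theta(n)$ and $W_t = 0$ w.h.p., so $m_{t+1} = \min\{U_t, n-U_t\} = \Theta(n)$ w.h.p.\ by a plain Chernoff bound. Since $\red$ is the interval of width $2\sqrt{n\log 2/k}$ around $\frac{n\log 2}{k} = o(\sqrt{n})$, a value of order $n$ is already far \emph{above} all of $\red$ --- not ``far below the top of $\red$'' as you write, which would place $\red$ near $n/2$ and contradicts your own (correct) description of it. Ordinary concentration therefore settles this case with probability $1 - n^{-\Omega(1)}$; the variance lower bound, Berry--Esseen and Paley--Zygmund machinery you propose for escaping the fold at $n/2$ addresses a non-problem here. (Anti-concentration at the $n/2$ fixed point is indeed needed, but for the $\orange$ area, \cref{lem:out_of_orange}, not for this lemma.)
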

The proof of \cref{lem:out_of_red} uses a simple variance argument relying on the modest width of the red area, \highlight{and is deferred to \Cref{sec:preliminary_lemmas}}. The next lemma, \highlight{however,} poses some technical challenges.

\begin{lemma}[Orange Area] \label{lem:out_of_orange} For any $t_0 \geq 1$ and $m_{t_0} \in \orange$, let \[T = \inf \{ s>t_0: m_{s} \not \in \orange  \}.\] Then, for a suitable positive constant $c$,
\[\Prob{\{ \highlight{T- t_0} \leq \log n\} \cap \{ \highlight{m_T} \not \in \red \cup \orange\} \mid m_{t_0} = m} \geq c.\]
\end{lemma}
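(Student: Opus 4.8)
The plan is first to reduce the whole analysis inside $\orange$ to a single random variable. Since every state of $\orange$ has $m \ge \tfrac n2 - n\sqrt{2\log n/k} \ge \tfrac{3n\log n}{k}$, part (ii) of \cref{lem:no_wrong} gives $U_t = 0$ w.h.p.\ whenever $m_t \in \orange$, so by \cref{fa:wrong_vs_minor} the process there obeys $m_{t+1} = \min\{W_t, n - W_t\}$ (with $\mathcal{O}_{t+1} = \mathcal{O}_t$ iff $W_t > n/2$), where $W_t$ is the variable of \cref{def:agent_wrong}. I would parametrise by the imbalance $\delta := \tfrac n2 - m \in [0,\gamma]$, $\gamma := n\sqrt{2\log n/k}$, and set $\mu(m) := \Expc{W_t \mid m_t = m}$; conditioned on $m_t$, $W_t$ is a sum of $n$ independent indicators, so $\Var{W_t \mid m_t = m} \le \mu(m)$. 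A continuity-corrected Berry--Esseen estimate for $\mathrm{Bin}(k, \tfrac12 - \tfrac\delta n)$ shows $\mu$ is decreasing with $\mu(0) = \tfrac n2$, that $\tfrac n2 - \mu(\delta) \asymp \delta\sqrt k$ while $\delta = O(n/\sqrt k)$, and that $\mu(\gamma) = o(1)$. The one consequence I need is: \emph{for every $\delta \ge \sqrt n$ inside $\orange$, $\tfrac n2 - \mu(\delta) \ge \gamma + \omega(\sqrt{n\log n})$} --- and this is exactly where $k \ge 185\sqrt{n\log n}$ enters, since it forces $\sqrt{nk} \ge \tfrac{185}{\sqrt2}\,\gamma$ and $\sqrt{nk} = \omega(\sqrt{n\log n})$.

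\textbf{Step 1 (leaving $\orange$ within $O(1)$ rounds).}
I would fix a threshold $\tau = O(\sqrt n)$ of order $(\gamma + \sqrt{n\log n})/\sqrt k$. If $\delta_{t_0} \ge \tau$, then $\mu(m_{t_0}) \le \tfrac n2 - \gamma - \omega(\sqrt{n\log n})$, so (Chernoff, using $\Var{W_{t_0}} \le n/2$) $W_{t_0} < \tfrac n2 - \gamma$ w.h.p., i.e.\ $m_{t_0+1} \notin \orange$ w.h.p. If $\delta_{t_0} < \tau$, then $\Expc{W_{t_0}}$ is within $O(\gamma + \sqrt{n\log n})$ of $\tfrac n2$ with standard deviation $\Theta(\sqrt n)$, so $\delta_{t_0+1} = |W_{t_0} - \tfrac n2| \ge \tau$ with at least constant probability, and then the previous case applies one round later. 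Hence from any state of $\orange$ the process leaves within two rounds with probability $\ge c_1 > 0$; iterating over disjoint length-$2$ blocks, $\Prob{T - t_0 > \log n} = o(1)$ and $T < \infty$ a.s.

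\textbf{Step 2 (the first exit avoids $\red$).}
On $\{T = s\}$ I would condition on the value $m_{s-1} = m \in \orange$: then $m_T = \min\{W, n - W\}$ with $W$ a sum of $n$ independent indicators, $\Expc W = \mu(m) =: \mu$, $\Var W \le \mu$, further conditioned on the exit event $\{\min\{W, n-W\} \notin \orange\}$; and I would split on $\mu$.
\begin{itemize}
  \item If $\mu \ge \tfrac{3n\log n}{k}$ (the floor of $\green$): then $\mu$ is at least $\Omega(n\log n/k)$ above the top of $\red$, i.e.\ $\Omega(\sqrt{\log n})$ standard deviations of $W$ (since $\tfrac{n\log n/k}{\sqrt{n\log n/k}} = \sqrt{n\log n/k} = \Omega(\sqrt{\log n})$, using $k \le n/2$), so a Chernoff bound puts $W$ above $\red$ w.h.p.; as $m_T \notin \orange$ by definition of $T$, in fact $m_T \in \green$ w.h.p.\ (if $\mu$ is itself still in $\orange$, the exit conditioning forces $W$ just below $\tfrac n2 - \gamma$, again in $\green$).
  \item If $\mu < \tfrac{3n\log n}{k}$: then $\mu$, the half-width $\sqrt{n\log 2/k}$ of $\red$, and $\sqrt{\Var W} \le \sqrt\mu$ are all of order $O(\sqrt{n\log n/k})$, so a local-limit estimate for the binomial-type $W$ (a Gaussian local CLT when $\mu = \omega(1)$, a direct Poisson/lattice tail bound when $\mu = O(1)$) shows $W$ avoids an interval of length $|\red|$ with probability at least an absolute constant $c' > 0$, \emph{uniformly} over the position of $\mu$ --- even when $\mu \in \red$. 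Here the exit conditioning is harmless because $\mu \ll \tfrac n2 - \gamma$, so $\{\min\{W,n-W\} \notin \orange\}$ has probability $1 - o(1)$.
\end{itemize}
Averaging over $s$ gives $\Prob{m_T \notin \red} \ge c'$; intersecting with $\{T - t_0 \le \log n\}$ from Step 1 gives $\Prob{\{T - t_0 \le \log n\} \cap \{m_T \notin \red \cup \orange\}} \ge c' - o(1)$, a positive constant.

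\textbf{Main obstacle.}
The hard part will be the second bullet of Step 2: the red band has width only $\Theta(\sqrt{n\log 2/k})$, precisely the order of $\sqrt{\Var W}$ when $\Expc W$ sits near $\red$, so a single jump genuinely lands in $\red$ with probability bounded away from $0$; there is no shortcut around establishing the uniform constant lower bound $c'$ through a careful local CLT (large mean) plus a lattice anti-concentration argument (mean $O(1)$). A secondary, more routine difficulty will be verifying $\tfrac n2 - \mu(\delta) \asymp \delta\sqrt k$ together with $\Var{W_t} \le \mu(m_t)$ for the actual $k$-$\pull$ sampling process; these facts are part of the preliminary concentration lemmas (cf.\ \cref{lem:preliminaries}).
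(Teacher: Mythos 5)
Your proposal is sound and reaches the correct conclusion, but it diverges from the paper's proof at both stages. For the exit time, the paper deliberately avoids the Berry--Esseen-strength estimate $\tfrac{n}{2}-\mu(\delta)\asymp\delta\sqrt{k}$: \cref{claim:proba_minority_wins} proves only the constant-factor amplification $\tfrac{n}{2}-\mu(\delta)\ge 1.8\delta$ by an elementary ratio trick (comparing the two tails of $\mathrm{Bin}(k,\tfrac12-\tfrac{\delta}{n})$ and using the central binomial coefficient), which forces a $\log n$-round multiplicative-drift argument in \cref{claim:out_of_orange}; your $\sqrt{k}$-factor drift, if the local-limit estimate is carried out rigorously, buys an $O(1)$-round exit and hence $\Prob{T-t_0>\log n}=o(1)$ rather than merely $\le 1-c$, and your numerology ($\sqrt{nk}\ge\tfrac{185}{\sqrt{2}}\cdot n\sqrt{2\log n/k}$) does check out. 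For the second stage, the paper sidesteps your conditioning on the exit event by introducing the stopping time $T'=\inf\{s>t_0:\Expc{W_s\mid m_s}\notin\orange\text{ or }m_s\notin\orange\}$, which is engineered precisely to avoid the step you compress into ``the exit conditioning forces $W$ just below $\tfrac{n}{2}-\gamma$.'' That claim is true but not free: when $\Expc{W}$ sits deep in $\orange$ the exit event can have probability as small as $e^{-\Theta(\gamma^2/n)}$, so you must compare it (via a reverse Chernoff lower bound) against the $e^{-\Omega(n)}$ probability of reaching $\red$, rather than invoke a w.h.p.\ overshoot bound outright; also, $\min\{W,n-W\}$ can in principle land in $\red$ via $n-W$, which needs the (easy) observation $\Expc{W}\le\tfrac{n}{2}$. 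Your second bullet --- uniform constant-probability avoidance of a band of width $\Theta(\sqrt{n/k})$ by a variable whose standard deviation has the same order --- is exactly the paper's \cref{lem:red_to_blue}, handled there by the reverse Chernoff bound (\cref{lem:add_reverse_chernoff}) instead of a local CLT, so you could import that lemma and skip the anti-concentration machinery you list as your main obstacle.
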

The proof of \cref{lem:out_of_orange} is given in \cref{subse:orange}.
The first problem is that the \orange\ area contains the fixed point $m_t = \frac{n}{2}$. To prove that the process leaves this area within $O(\log n)$ rounds, we need to show that it tends to drift away from $\frac{n}{2}$. In \cref{claim:proba_minority_wins} and \cref{claim:out_of_orange}, we indeed prove that $m_{t+1}\le \frac{n}{2} - 1.5\frac{\alpha}{\sqrt{n}}$ with probability at least $1 - e^{-0.1\alpha^2} - \frac{1}{n^2}$, whenever $m_t = \frac{n}{2} - \frac{\alpha}{\sqrt{n}}$. This is an exponentially increasing drift occurring with increasing probability. This fact allows us to show that the process leaves the \orange\ area with constant probability, within $O(\log n)$ rounds. 

The second problem we face is that, upon leaving the \orange\ area, the process might in principle fall back to the \red\ one. In order to prove that this event does not occur with constant probability, we need to proceed with care, in order to somehow capture two different conditions under which $m_t$ may have left the \orange\ area, each of which needs a distinct characterization.

\begin{lemma}[Yellow Area] \label{lem:out_of_yellow}
For any $t_0 \geq 1$ and $m_{t_0} \in \yellow$, let \[T = \inf \{s >t_0: m_{s} \not \in \yellow\}.\] Then, for a suitable positive constant $c$,
\[\Prob{\{T \leq \log n \} \cap \{m_{t_0 + T} \in \blue_1 \cup \blue_2 \cup \green\} \mid m_{t_0}=m} \geq c.\]
\end{lemma}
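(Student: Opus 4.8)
The plan is to use the fact (from \cref{lem:no_wrong}, quantified in \cref{lem:preliminaries}) that throughout the \yellow\ area $W_t = 0$ w.h.p., so that by \cref{fa:wrong_vs_minor} the update reduces to $m_{t+1} = U_t$, with the minority opinion flipping, where $U_t$ is a sum of negatively associated indicators and hence concentrates around $\mu_t := \Expc{U_t \mid m_t}$. The function $\mu(m) \approx n(1-m/n)^k$ is strictly decreasing in $m$, and on \yellow\ it sweeps continuously from a value inside \green\ at the bottom of the area — namely $\mu \ge \tfrac{3n\log n}{k}$, with $\mu = \Omega(\sqrt{n\log n}) = \omega(\log n)$ — down to a polynomially small value at the top. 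Heuristically: from the bottom of \yellow\ the process overshoots up into \green; from the top it collapses toward $m = 0$; and only a thin band $I$ in between maps \yellow\ back into itself. Importantly, \orange\ can never be reached in one step from \yellow, since $\mu(m) \le \mu(\text{bottom of }\yellow) = \tfrac{4n\log n}{k} \ll n/2$ for every $m \in \yellow$, so $m_{t+1} = U_t \ll n/2$ w.h.p.\ throughout.

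First I would dispose of the easy sub-ranges of \yellow, according to where $\mu_{t_0}$ lands. If $\mu_{t_0}$ lies in \green, then since $\mu_{t_0} = \omega(\log n)$ a Chernoff bound gives $m_{t_0+1} \in \green$ w.h.p., so $T = t_0 + 1$ and we are done. If $\mu_{t_0}$ is at most polylogarithmic, then $U_{t_0}$ is a small positive integer with constant probability — landing in $\blue_1$ — and is $0$, i.e.\ absorption at consensus, which we also record as a success, otherwise; again $T = t_0 + 1$. If $\mu_{t_0}$ lies in the range of $\blue_2$ (comfortably below its top), then $m_{t_0+1} \in \blue_2$ w.h.p. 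And if $\mu_{t_0}$ falls in the narrow \red\ range, then $m_{t_0+1}$ does not concentrate there: \cref{lem:out_of_red} already gives $\Prob{m_{t_0+1} \notin \red} \ge c'$, and since $\mu_{t_0}$ is then far below the bottom of \yellow\ (and below \red, hence below \green), on this event $m_T = m_{t_0+1} \in \blue_1 \cup \blue_2$. In all these cases the target is reached in one step with constant probability.

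The remaining case is $\mu_{t_0} \in \yellow$, i.e.\ $m_{t_0}$ in the sticky band $I$ around the unique fixed point $m^\star$ of $\mu$ (the red circle $m_1$ of \cref{fig:main_proof_structure}). Here $\mu'(m^\star) \approx -\tfrac{k}{n}\,\mu(m^\star) = -\tfrac{k}{n} m^\star$, of magnitude $\log(n/m^\star) = \Theta(\log n) \gg 1$, so $m^\star$ is an \emph{unstable} fixed point. I would argue in two parts that the process leaves $I$ within $O(\log n/\log\log n)$ rounds with constant probability: (a) one round of the intrinsic fluctuation of $U_t$ moves the process to distance $\Omega(\sqrt{\mu_{t_0}})$ from $m^\star$ with constant probability; (b) once at distance $\delta \ge 1$ from $m^\star$ while still in $I$, the drift amplifies the deviation by a factor $\Theta(\log n)$ per round, up to additive $O(\sqrt{\mu})$ noise, so the deviation exceeds the width of $I$ after $O(\log n/\log\log n)$ rounds. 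Upon exiting $I$ while still in \yellow, one further round leaves \yellow, and by the sub-cases of the previous paragraph it lands in $\blue_1 \cup \blue_2 \cup \green$ (or at consensus) with constant probability, never in \orange. Chaining the constant success probabilities over the at most $O(\log n)$ rounds — each ``bad'' event costing only a constant factor, and the process wandering harmlessly within \yellow\ in between — yields the constant $c$ with $T - t_0 \le \log n$.

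The hardest part will be the rigorous control of the sticky band $I$: the one-step map there is simultaneously chaotic (derivative $\gg 1$) and noisy on a scale comparable to the widths of both \red\ and $I$ itself, so one must carefully track the competition between the amplifying drift, the random fluctuation, and the possibility of transiently re-entering \red\ or \yellow, while keeping the cumulative failure probability bounded over a logarithmic number of rounds — all while using the monotonicity and continuity of $\mu$ to rule out \orange\ and to pin down which colored area each exit lands in. Partitioning \yellow\ into $O(1)$ explicit sub-intervals (matching the colors of \cref{fig:main_proof_structure}) and handling the transitions separately but coherently is what makes the bookkeeping delicate, even though no single estimate is itself hard.
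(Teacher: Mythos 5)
Your proposal is correct and follows essentially the same route as the paper: reduce to $m_{t+1}=U_t$ (since $W_t=0$ w.h.p.\ throughout $\yellow$), dispose of the one-step exits by a case analysis on where $\Expc{U_t\mid m_t}$ lands (concentration into $\blue_2\cup\green$, anti-concentration out of $\red$ into $\blue_1\cup\blue_2$, and the uniform bound $\Expc{U_t\mid m_t}\le\tfrac{4n\log n}{k}$ ruling out $\orange$), and escape the band around the unique unstable fixed point $\Bar m$ by an initial variance kick followed by a multiplicative $\Theta(\log n)$ amplification of $|m_t-\Bar m|$ per round --- which is precisely the paper's \cref{lem:expectation_in_yellow2} combined with its stopping-time $T'$ bookkeeping for the exit round. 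One caveat on wording: the amplification rounds must each succeed with probability $1-o(1/\log n)$ (the paper obtains $1-n^{-1/33}$ once $\Delta_t\ge\sqrt{n/k}$), not merely with constant probability, since otherwise chaining over $\Theta(\log n/\log\log n)$ rounds would cost a polynomial factor; your own estimate (additive noise $O(\sqrt{\mu})$ against a deviation already of order $\sqrt{\mu}$ being multiplied by $\log n$) does deliver this, so this is a phrasing slip rather than a gap.
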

The proof of \cref{lem:out_of_yellow}, given in \cref{subse:orange}, proceeds along lines similar to those of \cref{lem:out_of_orange}, albeit with a few differences that do not allow a simple reuse of the arguments given for the \orange\ case. Again, we have a fixed point $\overline{m}$ in the \yellow\ area, but one that we do not compute exactly. This time, the roles of $W_t$ and $U_t$ are reversed, since we are in a regime in which $W_t = 0$ w.h.p. Moreover, while in the previous case the drift was towards decreasing values for $m_t$, the values of $m_t$ are considerably smaller (i.e., $O(\sqrt{n})$) and the drift potentially harder to characterize in the \yellow\ area. In particular, while drifting away from $\overline{m}$, $m_t - \overline{m}$ can change sign over consecutive rounds. For this reason, it proved necessary to quantify drift using $|m_t - \overline{m}|$ in the proof of \cref{lem:expectation_in_yellow2}. Finally, some extra care is needed to address the fact that this time, we are interested in the probability that, upon leaving the \yellow\ area, the process avoids both the \red\ and \orange\ ones.

The two remaining cases address the \blue~and the \green~area. Their analysis is easy and relies on simple concentration arguments.

\begin{lemma}[Blue Area] \label{lem:out_of_blue}
For any $t \geq 1$ and $m \in \blue_1 \cup \blue_2$, $\Prob{m_{t+1} \in \green \mid m_t = m}\geq c$, for a suitable positive constant $c$.
\end{lemma}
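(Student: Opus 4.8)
The plan is to show that whenever $m_t = m \in \blue_1 \cup \blue_2$, the minority is small enough (in particular $m \le \frac{n}{k}\log(\frac{k}{4\log n}) \ll \frac{n}{2} - n\sqrt{\frac{1.5\log n}{k}}$), so by \cref{lem:no_wrong}(i) we have $W_t = 0$ w.h.p., and hence by \cref{fa:wrong_vs_minor} we may write $m_{t+1} = U_t$ w.h.p., provided $U_t < n/2$. So the whole task reduces to understanding the random variable $U_t = \sum_v \mathds{1}_{\eventB_t(v)}$, where $\eventB_t(v)$ is the event that $v$ samples $k$ majority nodes; conditioned on $m_t = m$, each $\mathds{1}_{\eventB_t(v)}$ is (essentially, up to the distinction between sampling with/without replacement, which is negligible in this regime) a Bernoulli with parameter $p_m := \left(1 - \frac{m}{n}\right)^k$ (or the with-replacement analogue using the complete graph), and these indicators are negatively associated. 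Thus $\Expc{U_t \mid m_t = m} = n\, p_m =: \mu_m$, and $U_t$ concentrates around $\mu_m$.

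The key quantitative step is to check that $\mu_m$ lands inside the green range for every $m$ in the blue areas. Using $\left(1 - \frac{m}{n}\right)^k \approx e^{-km/n}$, we get $\mu_m \approx n\, e^{-km/n}$, which is a decreasing function of $m$. At the upper boundary of $\blue_2$, i.e. $m = \frac{n}{k}\log(\frac{k}{4\log n})$, this gives $\mu_m \approx n \cdot \frac{4\log n}{k}$, which for $k \le n/2$ is at least $\frac{8\log n \cdot \log n}{\text{something}}$... more carefully, it exceeds $\frac{3n\log n}{k}$ by a constant factor only if the constants are chosen right — here is where I would pin down that the green area's lower endpoint $\frac{3n\log n}{k}$ sits comfortably below $\mu_m$ at the top of blue, and that at $m \ge 1$ (bottom of $\blue_1$) we have $\mu_m \approx n e^{-k/n}$, which for $k \le n/2$ is at least $n e^{-1/2} > \frac{n}{2} - n\sqrt{\frac{2\log n}{k}}$ — wait, that would put $\mu_m$ above the green area near the bottom of blue. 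So in fact I'd split: for the part of $\blue_1$ near $m=1$, $\mu_m$ is close to $n$, i.e. in the orange range, not green; so the honest statement is that I partition blue further or argue that for the relevant sub-range $\mu_m \in \green$ and handle the small-$m$ tail separately. The cleanest route: observe that the function $m \mapsto n e^{-km/n}$ is continuous and monotone decreasing from $\approx n$ (at $m = n\log 2 / k$ it equals exactly $n/2$, which is why $\red$ is centered there!) down to $\approx \frac{4n\log n}{k}$ at the top of $\blue_2$; since $m_{t+1} = \min\{U_t, n - U_t\}$ by \cref{fa:wrong_vs_minor}, for $m \in \blue_2$ (where $\mu_m < n/2$) we directly get $m_{t+1} = U_t \approx \mu_m \in \green$, while for $m \in \blue_1$ (where $\mu_m > n/2$) we get $m_{t+1} = n - U_t \approx n - \mu_m$, and one checks $n - \mu_m = n(1 - e^{-km/n})$ ranges over exactly the green interval as $m$ ranges over $\blue_1$ (this is the whole point of how the blue/red boundary was chosen: at $m = \frac{n\log 2}{k}$, $\mu_m = n/2$, and the $\sqrt{n\log 2/k}$ half-width of $\red$ was picked precisely so that just outside it, $|n/2 - \mu_m|$ is already $\gg$ the standard deviation).

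The concentration step is routine: $U_t$ is a sum of negatively associated indicators with mean $\mu_m$, and in the blue range $\mu_m = \Theta(\text{poly}\log n)$ to $\Theta(n)$, so a Chernoff bound gives $|U_t - \mu_m| \le O(\sqrt{\mu_m \log n})$ w.h.p.; since the green interval has width $\Theta(n) - \Theta(n\sqrt{\log n/k}) - \Theta(n\log n/k) = \Theta(n)$, which dwarfs the fluctuation $O(\sqrt{n\log n})$, the event $m_{t+1} \in \green$ holds w.h.p. (hence with probability at least a constant $c$, as claimed — in fact the lemma could be stated w.h.p., but constant probability suffices). Combining: with probability $\ge 1 - 1/\text{poly}(n) \ge c$, simultaneously $W_t = 0$ (by \cref{lem:no_wrong}) and $U_t$ concentrates, so $m_{t+1} \in \green$.

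The main obstacle I anticipate is not conceptual but bookkeeping: verifying that the explicit endpoints of $\blue_1$, $\red$, $\blue_2$, and $\green$ — with their specific constants ($\log 2$, the $\sqrt{n\log 2/k}$ half-widths, $\log(k/(4\log n))$, $3n\log n/k$, $n\sqrt{2\log n/k}$) — are mutually consistent so that $f(\blue_1) \cup f(\blue_2) \subseteq \green$ with room to spare for the $O(\sqrt{n\log n})$ noise, across the full parameter range $185\sqrt{n\log n} \le k \le n/2$. In particular one must be careful near the top of $\blue_2$ that $\mu_m \approx \frac{4n\log n}{k}$ genuinely exceeds $\frac{3n\log n}{k}$ with enough margin to absorb both the $e^{-x}$ vs $(1-x/n)^k$ approximation error and the Chernoff slack, and near the blue/red boundary that being just outside $\red$ already forces $|\mu_m - n/2|$ past the standard deviation — this is exactly the calculation that motivated the width of the red area, and I expect it was also handled in the preliminary lemma (\cref{lem:preliminaries}) that the excerpt refers to for \cref{lem:no_wrong}.
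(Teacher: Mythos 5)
Your proposal is correct and follows essentially the same route as the paper: reduce to $U_t$ via $W_t=0$ w.h.p., compute $\Expc{U_t\mid m_t=m}=n(1-\frac{m}{n})^k$, check it exceeds $\frac{n}{2}$ on $\blue_1$ (so $m_{t+1}=n-U_t$) and falls below $\frac{n}{2}$ on $\blue_2$ (so $m_{t+1}=U_t$), verify the relevant expectation lands in $\green$, and conclude by concentration (the paper's \cref{lem:mt_given_expectation}), which yields the stated constant probability. The bookkeeping you flag is exactly what the paper's explicit calculations carry out, using $k\geq 185\sqrt{n\log n}$ and \cref{claim:exp_taylor} to control the $(1-\frac{m}{n})^k$ vs.\ $e^{-km/n}$ gap.
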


\begin{lemma}[Green Area]\label{lem:out_of_green}
For any $t \geq 1$, $m \in \green$ and $\ell \in \{0,1\}$, 
\[\Prob{\{m_{t+1}=0\} \cap \{\mathcal{O}_{t+1} = 1-\ell\} \mid m_t = m, \mathcal{O}_t = \ell}\geq c,\]
for a suitable positive constant $c$.\footnote{Note that if $m_t = 0$, $\mathcal{O}_t$ is the opinion opposite to the majority.}
\end{lemma}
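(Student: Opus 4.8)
The plan is to show that, with constant probability, \emph{every} node becomes wrong in round $t$, so that by \cref{fa:wrong_vs_minor} we land in the all-majority configuration $m_{t+1}=0$ with the opinion flipping. Concretely, I will condition on $m_t=m$ with $\tfrac{3n\log n}{k}\le m\le \tfrac n2 - n\sqrt{2\log n/k}$ and argue that $W_t+U_t > n/2$ w.h.p., hence $m_{t+1}=n-W_t-U_t$ by \cref{fa:wrong_vs_minor}, and in fact $W_t+U_t=n$ with constant probability so that $m_{t+1}=0$. The sign statement $\mathcal O_{t+1}=1-\ell$ then comes for free from the ``otherwise'' branch of \cref{fa:wrong_vs_minor}, since we are in the case $W_t+U_t\ge n/2$; note the footnote's convention handles $m_t=0$, but here $m_t\ge \tfrac{3n\log n}{k}>0$, so $\mathcal O_t$ is genuinely the minority opinion and the flip is well-defined.

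The key step is a per-node estimate. Fix a node $v$. Since $m\ge \tfrac{3n\log n}{k}$, \cref{lem:no_wrong}(ii) tells us $U_t=0$ w.h.p., so a node is wrong essentially iff event $\eventA_t(v)$ occurs, i.e.\ $v$ samples strictly more than $k/2$ minority nodes. The number of minority nodes in a $k$-sample is (ignoring the $o(1)$ difference between sampling with/without replacement, which is negligible for $k\le n/2$) a Binomial with mean $\mu = km/n$. The condition $m\le \tfrac n2 - n\sqrt{2\log n/k}$ is exactly calibrated so that $\tfrac k2 - \mu = k(\tfrac12 - m/n) \ge k\sqrt{2\log n/k} = \sqrt{2k\log n}$, i.e.\ the threshold $k/2$ sits at least $\sqrt{2k\log n}\approx\sqrt{4\log n}\cdot\sqrt{k/2}$ standard deviations \emph{above} the mean. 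So I would \emph{upper} bound $\Prob{\eventA_t(v)}$ by a Chernoff/Hoeffding tail: $\Prob{\text{Bin}(k,m/n) > k/2} \le \exp(-2(\tfrac k2-\mu)^2/k) \le \exp(-2\log n) = n^{-2}$ (Hoeffding), and symmetrically $\Prob{\eventB_t(v)}\le n^{-3}$ or so from \cref{lem:no_wrong}. Wait — that is the wrong direction: a node being wrong is the \emph{rare} event here, so we would get $m_{t+1}$ small, not $m_{t+1}=n$. Let me reconsider: in the green area the minority is large (close to $n/2$ but safely below the $W_t=0$ threshold in the sense of \cref{lem:no_wrong}(i)? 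No — green is precisely where \cref{lem:no_wrong}(i) \emph{fails} because $m$ is too large). Re-examining: \cref{lem:no_wrong}(i) needs $m\le \tfrac n2 - n\sqrt{1.5\log n/k}$, and green starts at $\tfrac n2 - n\sqrt{2\log n/k} < \tfrac n2 - n\sqrt{1.5\log n/k}$, so in green we \emph{do} have $W_t=0$ w.h.p. Then $m_{t+1}=U_t$ w.h.p., and since $m\ge\tfrac{3n\log n}{k}$ gives $\mu=km/n\ge 3\log n$, each node sees at least one minority sample except with probability $e^{-\mu}\le n^{-3}$, so $\Prob{\eventB_t(v)}\le n^{-3}$, whence by a union bound $U_t=0$ w.h.p.\ and $m_{t+1}=0$ w.h.p.\ — in fact with probability $1-O(n^{-2})$, which is $\ge c$ for $n$ large, and the finitely many small $n$ are absorbed into $c$.

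So the clean argument is: in the green range, \textbf{(1)} $W_t=0$ w.h.p.\ by \cref{lem:no_wrong}(i) (using green $\subseteq$ the range of part (i), which needs checking of the constants $2$ vs $1.5$), and \textbf{(2)} $U_t=0$ w.h.p.\ by \cref{lem:no_wrong}(ii) (green $\subseteq$ the range $m\ge \tfrac{3n\log n}{k}$ of part (ii), which holds by definition of green), hence $W_t+U_t=0<n/2$, so by \cref{fa:wrong_vs_minor} we have $m_{t+1}=W_t+U_t=0$ and $\mathcal O_{t+1}=1-\mathcal O_t=1-\ell$, all with probability $1-O(n^{-2})\ge c$. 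The main obstacle — really the only thing to verify carefully — is that the \emph{entire} green interval lies simultaneously inside the validity ranges of \emph{both} parts of \cref{lem:no_wrong}, i.e.\ that $\tfrac{3n\log n}{k}\le \tfrac n2 - n\sqrt{2\log n/k}$ under the hypothesis $k\ge 185\sqrt{n\log n}$ (equivalently $\sqrt{n\log n/k}$ is small enough); this is where the quantitative assumption on $k$ gets used, and it should follow from a direct computation, since $k\ge 185\sqrt{n\log n}$ forces both $\tfrac{3n\log n}{k}=O(\sqrt{n\log n/185})\cdot\sqrt{\log n}$-type bounds and $n\sqrt{2\log n/k}$ to be well below $n/2$. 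One should also double-check that replacing hypergeometric sampling by binomial in \cref{lem:no_wrong} is already handled there (it is, since that lemma is cited as part of \cref{lem:preliminaries}), so nothing new is needed here.
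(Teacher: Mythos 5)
Your final argument is exactly the paper's proof: by definition of \green, \cref{lem:preliminaries}(c) and (d) (equivalently the two parts of \cref{lem:no_wrong}) give $U_t = 0$ and $W_t = 0$ w.h.p., so \cref{fa:wrong_vs_minor} yields $m_{t+1}=0$ and $\mathcal{O}_{t+1}=1-\ell$ with probability $1-O(n^{-2})\ge c$. The initial false start about making every node wrong is correctly abandoned mid-argument, and the remaining checks you flag (that $\sqrt{2}>\sqrt{1.5}$ places \green\ inside the validity range of part (d), and that \green\ is nonempty under $k\ge 185\sqrt{n\log n}$) are precisely the right ones.
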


\begin{proof} [Proof of \cref{thm:main}]
    Recall that by definition of the protocol, for every agent $i$, if all samples collected by $i$ in round~$t$ are equal to its opinion, then agent~$i$ maintains the same opinion in round~$t+1$. Therefore, if~$m_t = 0$, we have $m_s = 0$ for every~$s>t$, i.e., consensus has been reached permanently.
    Moreover, Lemmas~\ref{lem:out_of_red} to~\ref{lem:out_of_green} prove the existence of two constants $c_1,c_2 > 0$, such that for every $t \geq 0$ and $m \in [\frac{n}{2}]$,
    \begin{equation} \label{eq:convergence_proba}
        \Prob{m_{t+c_1 \log n} = 0 \mid m_t =m} > c_2.
    \end{equation}
    Let~$T = \inf \{ s \in \mathbb{N}, m_s = 0 \}$ and 
    $x \in \mathbb{N}$. Since \eqref{eq:convergence_proba} holds for every~$t$ and since $\{m_t\}_t$ is a Markov chain, we have $\Prob{T < x \cdot c_1 \log n} > 1 - (1- c_2)^x$.
    We can thus take a constant $c_3$ such that, for $x \geq c_3 \log n$, we have $
        \Prob{T < c_1 c_3  \log^2 n} > 1-\frac{1}{n^2}$,
    which concludes the proof of \cref{thm:main}.
\end{proof}

The following Lemma, whose proof is given in \cref{subse:bit_dissemination}, yields \cref{cor:main}. 
\begin{lemma}[Bit-Dissemination Problem]\label{lem:wrong_consensus}
    If $\ell$ is the opinion of the source agent, then \[\Prob{m_{t+2} = 0 \mid m_{t}=1, \mathcal{O}_t = \ell}\geq c,\] for a suitable positive constant $c$.
\end{lemma}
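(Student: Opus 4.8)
The statement I need to prove: in the bit-dissemination setting, if the source holds opinion $\ell$, and at round $t$ we have $m_t=1$ with $\mathcal{O}_t=\ell$ (i.e., the single minority node holds the source's correct opinion), then with constant probability $m_{t+2}=0$, meaning consensus on the correct opinion $\ell$ is reached. Let me think about what needs to happen.

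Let me set up the plan. First, I'd note the key subtlety: we are in the bit-dissemination setting, so the source never changes opinion. When $m_t=1$ and $\mathcal{O}_t=\ell$, the configuration has exactly one node holding the correct (source) opinion $\ell$. Now either that single minority node *is* the source, or it *isn't*. If the minority node is the source, this is almost the easy case: we want the configuration to flip so everyone adopts $\ell$. But actually I should be careful — in one round from $m_t=1$, the expected number of wrong nodes is tiny (each node samples $k$ neighbors among $n$, and the chance of seeing $\geq k/2$ copies of the single minority opinion is astronomically small when $k=\Omega(\sqrt{n\log n})$), so w.h.p. $W_t=0$. Meanwhile $U_t$ = number of nodes sampling all-majority; since $m_t=1$, a node samples all majority with probability roughly $(1-1/n)^k \geq 1 - k/n \geq 1/2$, so in expectation about $n/2$ nodes see unanimity, and by concentration $U_t$ is close to its mean, which could be on either side of $n/2$. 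So after one step $m_{t+1}$ could land in the yellow or green range, or could be large. The cleanest path: I'd argue that with constant probability $m_{t+1}$ lands in the green area (the mean of $U_t$ when $m_t=1$ is $n(1-(1-1/n)^k)\approx n(1-e^{-k/n})$, and for $k$ in the stated range $k/n \in [185\sqrt{\log n / n}, 1/2]$, so $1-e^{-k/n}$ is bounded away from $0$ and from $1$, putting $\min\{\mathbb{E}U_t, n-\mathbb{E}U_t\}$ comfortably inside the green window $[\frac{3n\log n}{k}, \frac n2 - n\sqrt{2\log n/k}]$; a standard Chernoff/Hoeffding bound then gives $m_{t+1}\in\green$ with probability $\geq 1-1/n^2$, and in that case, by \cref{lem:out_of_green}, with constant probability $m_{t+2}=0$ and the opinion flips appropriately). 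The two-step structure in the statement is exactly this: step one takes $m=1$ into green, step two (Green Area lemma) takes green to $0$.

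The real obstacle is tracking the *opinion* so that the final consensus is on $\ell$, and handling whether the lone minority node is the source. Here is where I'd be careful. Case (i): the single minority node holding $\mathcal{O}_t=\ell$ is the source. Then the source contributes to the minority count, and crucially the source never flips — but that's fine, we *want* everyone to move toward $\ell$. When $m_{t+1}\in\green$, by \cref{fa:wrong_vs_minor} an opinion flip occurs ($W_t+U_t<n/2$ case), so after round $t+1$ the correct opinion $\ell$ becomes the majority, and the source (still holding $\ell$) is now in the majority; then \cref{lem:out_of_green} gives $m_{t+2}=0$ with the surviving opinion being $\ell$. Case (ii): the minority node is not the source, so the source is among the $n-1$ majority nodes holding opinion $1-\ell$. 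This case is actually problematic for bit-dissemination because the source won't flip — so I need to recheck whether the hypothesis $m_t=1,\mathcal{O}_t=\ell$ is even reachable in the bit-dissemination model, or argue that in this case the relevant event is different. Thinking again: the hypothesis is on $\mathcal{O}_t = \ell$, the source opinion equals the minority opinion, meaning $n-1$ nodes including the source disagree with $\ell$ — wait, no: if $\mathcal O_t = \ell$ then the minority opinion is $\ell$, which is the source's opinion, so the source *must* be the lone minority node (the source holds $\ell$, and only one node holds $\ell$). So Case (ii) cannot happen — the source is necessarily the unique minority node. Good; that collapses the analysis to Case (i) only, and I should state this observation explicitly at the start of the proof.

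So the plan in order: (1) Observe that since $\mathcal{O}_t = \ell$ = source opinion and $m_t=1$, the unique minority node is the source itself. (2) Compute $\mathbb{E}[U_t \mid m_t=1] = n(1-(1-1/n)^k)$ and check, using $185\sqrt{n\log n}\le k\le n/2$, that it lies strictly inside the green interval with room $\Omega(\sqrt{n\log n})$ on both sides — this needs the elementary bounds $1-e^{-x}\ge x - x^2/2 \ge x/2$ for $x\le 1$ and comparison of $x/2 \cdot n = k/2$ against $\frac{3n\log n}{k}$ (true since $k^2 \ge 185^2 n\log n \ge 6n\log n$), plus $n(1-e^{-k/n}) \le n(1-e^{-1/2}) < n/2 - n\sqrt{2\log n/k}$ for $k$ not too large. (3) Apply a Hoeffding/Chernoff bound to $U_t$ (sum of indicators — careful, they're negatively associated via the sampling, but upper/lower tail bounds still apply, as the paper presumably establishes in \cref{lem:preliminaries}) to conclude $m_{t+1}\in\green$ with probability $\ge 1-1/n^2$; also note $W_t=0$ w.h.p. by \cref{lem:no_wrong} since $m_t=1\le \frac n2 - n\sqrt{1.5\log n/k}$, so indeed $m_{t+1}=\min\{U_t,n-U_t\}$ and by \cref{fa:wrong_vs_minor} the opinion flips, making $\ell$ the new majority. (4) On that event, apply \cref{lem:out_of_green} with the new state $(m_{t+1}, \mathcal O_{t+1}=1-\ell)$ to get, with constant probability, $m_{t+2}=0$ and $\mathcal O_{t+2}=1-(1-\ell)=\ell$ — wait, I need to re-examine the direction of \cref{lem:out_of_green}: it says $\mathcal O_{t+1}=1-\ell'$ given $\mathcal O_t=\ell'$, i.e. another flip occurs, so from $\mathcal O_{t+1}=1-\ell$ we reach $m_{t+2}=0$ with the consensus opinion being... since $m_{t+2}=0$, everyone holds the (former) majority opinion, which at round $t+1$ was $\ell$ (the one held by $n - U_t > n/2$ nodes). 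So consensus is on $\ell$. (5) Multiply the two constant probabilities. The main obstacle I anticipate is step (2)–(3): making sure $\mathbb{E}[U_t\mid m_t=1]$ genuinely sits inside green with enough slack for the concentration bound to beat $1-1/n^2$ across the *entire* range $185\sqrt{n\log n}\le k\le n/2$, particularly the lower end where $\frac{3n\log n}{k}$ is largest and the upper end where $n(1-e^{-k/n})$ approaches $n(1-e^{-1/2})\approx 0.39n$ and must stay below $\frac n2 - n\sqrt{2\log n/k}$; both inequalities should hold with the given constant $185$, but they require the careful elementary estimates rather than being automatic.
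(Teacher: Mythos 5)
Your overall architecture matches the paper's proof: you correctly observe that the source must be the unique minority node, you send $m_{t+1}=n-U_t$ into \green\ by estimating $\Expc{U_t\mid m_t=1}=n(1-1/n)^k$ (the paper does this by invoking \cref{lem:out_of_blue} for $m\in\blueone$; your direct computation is equivalent and your boundary checks against both ends of \green\ are sound), and you then apply \cref{lem:out_of_green}. The quantitative content of your steps (2)--(3) is fine (and, incidentally, the indicators making up $U_t$ are fully independent across nodes, since each node samples independently, so there is no negative-association subtlety to worry about).

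However, the opinion bookkeeping at round $t+1$ --- which is precisely what this lemma is about, since it must certify consensus on the \emph{source's} opinion --- is wrong as written. Since $\Expc{U_t\mid m_t=1}=n(1-1/n)^k\geq n(1-1/n)^{n/2}\approx ne^{-1/2}>0.6n$ for every $k\le n/2$ (this is exactly \cref{lem:preliminaries}(e)), we have $W_t+U_t>n/2$ w.h.p., so the \emph{second} case of \cref{fa:wrong_vs_minor} applies: $m_{t+1}=n-U_t$ and $\mathcal{O}_{t+1}=\mathcal{O}_t=\ell$, i.e.\ the minority opinion does \emph{not} flip and is held by $n-U_t<n/2$ nodes. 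You instead assert that ``the opinion flips, making $\ell$ the new majority,'' i.e.\ $\mathcal{O}_{t+1}=1-\ell$, and later write that $\ell$ is held by ``$n-U_t>n/2$ nodes.'' You then reach the correct conclusion only via a second, compensating error: in the \green\ area every node adopts the former \emph{minority} opinion (that is the content of \cref{lem:out_of_green}: $W=U=0$ w.h.p., so every sample is read correctly), not ``the former majority opinion'' as you state. These errors do not cancel harmlessly: if your intermediate state $\mathcal{O}_{t+1}=1-\ell$ were actually correct, the source --- which holds $\ell$ and never updates --- would sit in the majority at round $t+1$, the green-to-zero transition would move everyone else to the minority opinion $1-\ell$, and you would end with $m_{t+2}=1$ rather than $0$. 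Getting $\mathcal{O}_{t+1}=\ell$ right is the reason the paper routes the first step through \cref{lem:preliminaries}(e) before invoking \cref{lem:out_of_blue} and \cref{lem:out_of_green}.
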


\begin{proof}[Proof of \cref{cor:main}]
Consider the \minority \ dynamics in the presence of a source agent. In this case, 
\cref{thm:main} implies the existence two constants $c_1,c_2>0$ such that, for every $t$ and $m \in [\frac{n}{2}]$, $\Prob{m_{t+c_1\log n}\leq 1 \mid m_t = m} >c_2$. Moreover,   \cref{lem:wrong_consensus} implies a constant $c_3>0$ for which
\[\Prob{m_{t+2+c_1\log n}=0 \mid m_t = m}>c_3.\]
Setting $T = \inf\{s \in \mathbb{N}, m_s = 0 \}$ and arguing as in the proof of \cref{thm:main}, we have $m_T = 0$ w.h.p.~for some~$T=O(\log^2 n)$.
\end{proof}

\section{\texorpdfstring{$k$-Minority}{k-Minority} in the Parallel Case: Technical Details}\label{sec:minority}
We gave an overview of the proofs of \cref{thm:main} and \cref{cor:main} in Section \ref{sec:par_overview}. This section contains the proofs of all technical lemmas that were used there, some of which highlight important properties of the \minority dynamics, some of which are not trivial to capture, due to the chaotic nature of the process.
\subsection{Preliminary lemmas}
\label{sec:preliminary_lemmas}
We begin by proving a set of simple results describing the behavior of the random variables $W_t$ and $U_t$, which will be used throughout the analysis. 
Our proofs often use two standard probabilistic  tools,  namely, \textit{Chernoff Bound} and the \textit{Reverse Chernoff Bound},  in the versions presented in  \cref{thm:additive_chernoff} and \cref{lem:add_reverse_chernoff} of the Appendix, respectively. 
The following is a general statement that also includes the results singled out in \cref{lem:no_wrong} of \cref{sec:notation_preliminaries}. \highlight{Its proof is deferred to \Cref{app:preliminaries_proof}.}

\begin{lemma}
    \label{lem:preliminaries}
    Assume $k \geq 5 \log n$. Then, for any $m \in \{0,\dots, \frac{n}{2}\}$, we have the following results
    \begin{enumerate}[(a)]
        \item $ \frac{n}{4}e^{-4k\left(\frac{1}{2}-\frac{m}{n}\right)^2} \leq \Expc{W_t \mid m_t=m} \leq n e^{-2k\left(\frac{1}{2}-\frac{m}{n}\right)^2}$;
        \item $\Expc{U_t \mid m_t = m} = n\pa{1-\frac{m}{n}}^k$; 
        \item For any fixed $m_t = m$ with $m \geq \frac{3n\log n}{k}$, $U_t = 0$ w.h.p.
        \item For any fixed $m_t = m$ with  $m \leq \frac{n}{2}-n\sqrt{\frac{1.5 \log n}{k}}$, $W_t = 0$ w.h.p.
        \item For any fixed $m_t = m$ with $m \leq \frac{n}{3k}$, $U_{t} \geq 0.6n$ and $\minop_{t+1} = \minop_t$ w.h.p.
        \item For any fixed $m_t = m$ with  $\frac{2n}{k} \leq m \leq \frac{n}{2}-n\sqrt{\frac{1.5 \log n}{k}}$, $U_t\leq 0.2n$ and $\minop_{t+1} = 1-\minop_t$ w.h.p.
    \end{enumerate}
\end{lemma}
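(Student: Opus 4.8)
The plan is to dispatch all six items with a small common toolkit: exact computation of $\Expc{W_t\mid m_t=m}$ and $\Expc{U_t\mid m_t=m}$, the additive Chernoff bound (\cref{thm:additive_chernoff}) and its reverse (\cref{lem:add_reverse_chernoff}), and Markov's inequality. I would first record two features of the sampling rule that drive everything: conditioned on $m_t=m$, (i) the number of minority-opinion samples seen by any fixed node is distributed as $\mathrm{Bin}(k,m/n)$, and (ii) across the $n$ nodes the indicator families $(\mathds{1}_{\eventA_t(v)})_{v}$ and $(\mathds{1}_{\eventB_t(v)})_{v}$ are each \emph{independent} (distinct nodes sample independently) --- it is (ii) that makes Chernoff directly applicable to $W_t$ and $U_t$. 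Write $p=m/n\le\tfrac12$, $\delta=\tfrac12-p$, and $X\sim\mathrm{Bin}(k,p)$.

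Items (a) and (b) are then just tail estimates for $X$. Item (b) is immediate: $\Prob{\eventB_t(v)}=\Prob{X=0}=(1-p)^k$, so $\Expc{U_t\mid m_t=m}=n(1-p)^k$. For (a), $\Expc{W_t\mid m_t=m}=n\Prob{X>k/2}=n\Prob{X\ge\tfrac{k+1}{2}}$ (recall $k$ is odd); the upper bound $ne^{-2k\delta^2}$ follows from the additive Chernoff bound applied to $X$ with deviation $k\delta=k/2-kp$ (using $\Prob{X\ge\tfrac{k+1}{2}}\le\Prob{X\ge k/2}$), and the lower bound $\tfrac n4 e^{-4k\delta^2}$ from the reverse Chernoff bound --- with the caveat that, being a reverse tail bound, this last estimate is only informative when $\delta$ is bounded away from $\tfrac12$ (that is, $m=\Omega(n)$), which is exactly the \orange\ regime where the lower bound on $\Expc{W_t}$ is later invoked.

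The one-sided items (c) and (d) reduce to bounding an expectation by $n^{-2}$ and applying Markov's inequality. For (c): $m\ge\tfrac{3n\log n}{k}$ gives $\Expc{U_t\mid m_t=m}=n(1-p)^k\le ne^{-km/n}\le n\cdot n^{-3}=n^{-2}$, hence $\Prob{U_t\ge1\mid m_t=m}\le n^{-2}$. For (d): by the upper bound in (a), $\Expc{W_t\mid m_t=m}\le ne^{-2k\delta^2}$, and $m\le\tfrac n2-n\sqrt{1.5\log n/k}$ forces $\delta\ge\sqrt{1.5\log n/k}$, hence $2k\delta^2\ge3\log n$, so $\Expc{W_t\mid m_t=m}\le n^{-2}$ and $\Prob{W_t\ge1\mid m_t=m}\le n^{-2}$. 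The two-sided items (e) and (f) pin $\Expc{U_t\mid m_t=m}$ to within a constant factor of $n$ and then apply the additive Chernoff bound to $U_t=\sum_v\mathds{1}_{\eventB_t(v)}$, legitimate by independence. For (e): $m\le\tfrac{n}{3k}$ gives $(1-p)^k\ge(1-\tfrac1{3k})^k\ge\tfrac23$ by Bernoulli's inequality, so $\Expc{U_t\mid m_t=m}\ge\tfrac{2n}{3}$, and since the gap down to $0.6n$ is $\Omega(n)$, Chernoff gives $U_t\ge0.6n>n/2$ w.h.p.; then $W_t+U_t\ge n/2$ and \cref{fa:wrong_vs_minor} yields $\minop_{t+1}=\minop_t$. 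For (f): $m\ge\tfrac{2n}{k}$ gives $\Expc{U_t\mid m_t=m}\le ne^{-km/n}\le ne^{-2}<0.14n$, so (the gap up to $0.2n$ being $\Omega(n)$) Chernoff gives $U_t\le0.2n$ w.h.p.; moreover the hypothesis $m\le\tfrac n2-n\sqrt{1.5\log n/k}$ is precisely that of (d), so $W_t=0$ w.h.p., whence $W_t+U_t=U_t<n/2$ and \cref{fa:wrong_vs_minor} yields $m_{t+1}=W_t+U_t$ and $\minop_{t+1}=1-\minop_t$.

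I do not expect a genuine obstacle --- these are routine preliminary concentration bounds --- but the one point demanding care is the reverse-Chernoff lower bound in (a): extracting the stated constants (the $\tfrac14$ in front and the $4$ in the exponent) from \cref{lem:add_reverse_chernoff}, and keeping in mind that a reverse tail bound of this form can hold only when the deviation $k\delta$ is not large compared with the standard deviation $\sqrt{kp(1-p)}$, that is, for $m$ a constant fraction of $n$; this is harmless because that lower bound is used only in the \orange\ area, where $\delta=o(1)$ and it merely asserts $\Expc{W_t}=\Theta(n)$. The rest is bookkeeping: checking that every w.h.p.\ statement is invoked only within the stated range of $m$, and chaining the w.h.p.\ event of (d) into the proof of (f) via a union bound.
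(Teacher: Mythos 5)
Your proposal is correct and follows essentially the same route as the paper's proof: exact computation of the two conditional expectations, additive/reverse Chernoff for the tail of $\mathrm{Bin}(k,m/n)$ in (a), Markov's inequality for (c)--(d), and concentration of $U_t$ plus \cref{fa:wrong_vs_minor} for (e)--(f). The only (harmless) cosmetic differences are that you use Bernoulli's inequality where the paper uses $ (1-m/n)^k\ge e^{-1.1km/n}$ in (e), and you invoke \cref{fa:wrong_vs_minor} directly for the $\minop_{t+1}$ claims where the paper re-derives them; your caveat about the constants in the reverse-Chernoff lower bound of (a) is also well taken, as the paper glosses over the same point.
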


The following lemma is a simple concentration result applied to the variables $U_t$, $n-U_t$ and $W_t$, and gives necessary conditions so that the value of these random variables is close to their expectation.

\begin{lemma}
Consider an interval $I=[a,b]\cap \mathbb{N}$ such that $b \geq 4a$ and $b \geq 32\log n$ and fix a value $m_t = m$.\footnote{To the purpose of our analysis, we are interested in the case $I \subseteq [0,n]$, but the result is more general.} Let $X$ be any of the random variables in $\{U_t, n-U_t, W_t\}$. For any $m \in \{0, \dots, \frac{n}{2}\}$, if $\Expc{X \mid m_t=m} \in I$, then  $X\in I$ with constant probability. Moreover, considered any constant $\gamma\geq \frac{3}{2}$, if $a \geq 36\log n$ then $X\in \gamma I$ w.h.p., where $\gamma I=[\frac{a}{\gamma},\gamma b] \cap \mathbb{N}$.
\label{lem:mt_given_expectation}
\end{lemma}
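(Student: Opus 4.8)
The plan is to prove \cref{lem:mt_given_expectation} as a direct consequence of the two concentration inequalities mentioned in the text: the standard additive Chernoff bound (\cref{thm:additive_chernoff}) and the additive reverse Chernoff bound (\cref{lem:add_reverse_chernoff}). Note first that each of the three candidate random variables $X \in \{U_t, n-U_t, W_t\}$ is, conditionally on $m_t = m$, a sum of $n$ independent indicator variables (one per node), so all the required hypotheses for both Chernoff-type bounds are in place. Write $\mu = \Expc{X \mid m_t = m}$ and assume $\mu \in I = [a,b]$.

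\textbf{First part (constant probability).} We want to show $X \in [a,b]$ with constant probability. It suffices to show both $\Prob{X < a} $ and $\Prob{X > b}$ are bounded away from $1$ by a constant, and in fact I would aim to show each is at most some constant strictly less than $1/2$. Consider $\Prob{X < a}$: since $\mu \geq a$, we have $\Prob{X < a} \leq \Prob{X < \mu - (\mu - a)} \le \Prob{|X-\mu| \ge \mu - a}$. If $\mu - a$ is large this is tiny by Chernoff; the only danger is when $\mu$ is very close to $a$, in which case the event $\{X < a\}$ is genuinely a constant-probability event and we just need it not to be overwhelming. Here the hypothesis $b \ge 4a$ enters: it guarantees the interval has enough multiplicative room that the median/mean of $X$ cannot be pinned against the right endpoint in a way that makes $\Prob{X>b}$ large. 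More precisely, for $\Prob{X > b}$: since $\mu \le b$, $\Prob{X > b} \le \Prob{X - \mu > b - \mu}$; if $\mu \le b/2$ this is $\Prob{X-\mu > b/2} \le \Prob{X-\mu > b/2}$, and using $b \ge 32\log n$ together with an additive Chernoff bound of the form $\exp(-c (b/2)^2 / n)$ — wait, this needs care because the deviation is $b/2$ which may be $o(\sqrt n \log n)$. The cleaner route is to use the \emph{multiplicative} Chernoff bound here, or to observe that $X$ is a sum of Bernoullis with mean $\mu \le b$, so $\Prob{X \ge 2b} \le \Prob{X \ge 2\mu} \le e^{-\mu/3} \le \ldots$; but we need $X \le b$, not $X \le 2b$. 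So instead: split into the case $\mu \le b/4$, where $\Prob{X > b} \le \Prob{X > 4\mu} \le e^{-\mu}$ which we bound crudely by a constant $< 1/2$ (trivially true, in fact it's $\le e^0 = 1$, so this is not good enough for small $\mu$) — hmm. The actual argument that works for small $\mu$: if $\mu < a$ then the hypothesis $\mu \in I$ fails, so $\mu \ge a$, and then if $\mu$ is small we have $a$ small too, but $b \ge 4a$ and $b \ge 32\log n$ force $b$ large, and $\Prob{X > b} \le \Prob{X > \mu + 31\log n} $ (using $b \ge 32\log n \ge \mu + 31 \log n$ when $\mu \le \log n$), which by the additive Chernoff bound is at most $e^{-c(31\log n)^2/n}$ — still not obviously small unless $\log^2 n / n$ is bounded below, which it isn't. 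So the right tool for the upper tail when $\mu$ is small is the multiplicative Chernoff bound: $\Prob{X > b} \le \Prob{X > \frac{b}{\mu}\mu} \le \left(\frac{e^{b/\mu - 1}}{(b/\mu)^{b/\mu}}\right)^{\mu}$; with $b/\mu \ge b/a$... no, $\mu \le b$ so $b/\mu \ge 1$, that's the wrong direction. Let me just say: use the multiplicative Chernoff upper tail $\Prob{X \ge (1+\delta)\mu} \le e^{-\delta^2\mu/3}$ for $\delta \le 1$ and $\Prob{X\ge (1+\delta)\mu}\le e^{-\delta\mu/3}$ for $\delta \ge 1$, applied with $(1+\delta)\mu = b$, i.e. $\delta = b/\mu - 1 \ge b/b - 1$... this is degenerate when $\mu = b$.

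OK — here is the honest version of the plan. The main content is: \textbf{(i)} use the reverse Chernoff bound (\cref{lem:add_reverse_chernoff}) to show $\Prob{X \le \mu - \Theta(\sqrt{b})}$ and $\Prob{X \ge \mu + \Theta(\sqrt b)}$ are each at least a constant — this needs $b \ge 32\log n$ so that the reverse bound is in its valid regime — hence $X$ genuinely spreads out on scale $\sqrt b$ around $\mu$; and \textbf{(ii)} use the (multiplicative) Chernoff bound to show $\Prob{X \notin [\mu/2, 2\mu]}$ is small, or more precisely that $X \le 2b$ and $X \ge a/2$ with good probability. Combining, the hypotheses $b \ge 4a$ and $b \ge 32\log n$ give exactly the slack needed: $[a,b]$ is wide enough (multiplicatively by $4$, additively by more than the fluctuation scale $\sqrt{b}$) that a constant fraction of the probability mass of $X$ lands inside $[a,b]$. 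The cleanest packaging is: show $\Prob{X \ge a}\ge $ const via a reverse-Chernoff lower-tail-avoidance argument (the probability $X$ is within, say, $\sqrt{b}/10$ of its mean, on the correct side, is a constant, and $\mu - \sqrt b / 10 \ge a$ is \emph{not} always true — so instead bound $\Prob{X < a}$ directly: it's at most $\Prob{X < \mu}$ which is at most a constant $<1$ by anti-concentration / symmetry of the Chernoff bounds), and $\Prob{X \le b}\ge$ const via $\Prob{X > b} \le \Prob{X > 2\mu \text{ or } X > \mu + \sqrt{b\log(1/c)}} \le $ const $< 1/2$ by Chernoff since $b \ge 4a \ge$ ...

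\textbf{Second part (w.h.p., with inflation factor $\gamma$).} Now additionally assume $a \ge 36\log n$. We must show $X \in \gamma I = [a/\gamma, \gamma b]$ w.h.p., for any constant $\gamma \ge 3/2$. This is the easy direction and is a pure multiplicative Chernoff computation: since $\mu \in [a,b]$, we have $\Prob{X \ge \gamma b} \le \Prob{X \ge \gamma \mu} \le \Prob{X \ge (1+(\gamma-1))\mu} \le e^{-(\gamma-1)\mu/3}$ (using $\gamma - 1 \ge 1/2$ and the appropriate branch of the multiplicative Chernoff bound; for $\gamma-1 \le 1$ we get $e^{-(\gamma-1)^2\mu/3}$, either way an expression of the form $e^{-c_\gamma \mu}$ with $c_\gamma > 0$ a constant depending only on $\gamma$). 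Since $\mu \ge a \ge 36 \log n$, this is at most $e^{-36 c_\gamma \log n} = n^{-\Omega(1)}$ — choosing the constant $36$ in the hypothesis precisely to kill the tail. Symmetrically, $\Prob{X \le a/\gamma} = \Prob{X \le (1-(1-1/\gamma))\mu \cdot (\mu/a)^{-1}\cdot\ldots}$ — cleanest is $\Prob{X \le a/\gamma} \le \Prob{X \le \mu/\gamma} \le \Prob{X \le (1-(1-1/\gamma))\mu} \le e^{-(1-1/\gamma)^2 \mu /2} \le e^{-(1-1/\gamma)^2 \cdot 36\log n/2} = n^{-\Omega(1)}$ since $1 - 1/\gamma \ge 1/3$. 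A union bound over the two tails finishes it.

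\textbf{Main obstacle.} The routine computations are the w.h.p. part; the genuinely delicate point is the \emph{first} part — showing $X$ lands in $[a,b]$ with \emph{constant} probability when $\mu$ may sit right against one of the endpoints (so that one-sided escape is a constant-probability, not a rare, event). The hypotheses $b \ge 4a$ and $b \ge 32\log n$ are exactly what rescue this: the first gives multiplicative room so the upper Chernoff tail at $b$ is a small constant even when $\mu$ is as large as $b$ is... no — when $\mu = b$, $\Prob{X > b} = \Prob{X > \mu} \approx 1/2$, and there's nothing to be done about that; the point is we only need $\Prob{X \in [a,b]} \ge $ const, and $\Prob{X \le b} \ge \Prob{X \le \mu} + (\text{a bit})$, while simultaneously $\Prob{X \ge a}$ is close to $1$ when $\mu \ge 4a$ (far from $a$) — and when $\mu$ is close to $a$ instead, then $\mu$ is far from $b$ (since $b \ge 4a \ge 4\mu/(1+o(1))$... need $\mu$ close to $a$ $\Rightarrow$ $\mu \le 2a \le b/2$, say), so $\Prob{X \le b}$ is close to $1$ and $\Prob{X \ge a} \ge \Prob{X \ge \mu - 0} - \Prob{\mu - \sqrt{\ldots} \le X < a}$... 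The bottom line of the obstacle: one needs a careful case split on whether $\mu$ is in the "left part" $[a, \sqrt{ab}]$ or the "right part" $[\sqrt{ab}, b]$ of the interval (the geometric mean being the natural threshold given $b \ge 4a$), bounding the far tail by Chernoff and merely controlling — not eliminating — the near tail using that the opposite endpoint is far. I would write this case split out explicitly; everything else is mechanical.
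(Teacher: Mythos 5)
Your final plan is essentially the paper's proof: split on which half of $I$ the conditional mean $\mu$ lies in (the paper uses the arithmetic midpoint $M_I=\tfrac{a+b}{2}$ rather than $\sqrt{ab}$), kill the \emph{far} endpoint w.h.p.\ via the multiplicative Chernoff bound applied with the deterministic bound $M_I\geq 16\log n$ (not with $\mu$ itself, which can be tiny --- this is the one spot where your aborted attempts were going wrong), and handle the \emph{near} endpoint with constant probability via the reverse-Chernoff fact that $\Prob{X\geq\mu}$ and $\Prob{X\leq\mu}$ are each bounded below by a constant, combined with $a\leq\mu\leq b$; the w.h.p.\ part with the $\gamma$-inflated interval is the same multiplicative-Chernoff computation using $\mu\geq a\geq 36\log n$ as in the paper. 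The argument is correct once the case split is written out cleanly; the many false starts should simply be deleted.
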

\begin{proof}
We first remark that $X$ can always be written as the sum of $n$ independent Bernoulli random variables. Let $M_I$ be the median point of $I$, i.e. $M_I = a+\frac{b-a}{2}$. Note that, since $b \geq 4a$ and $b \geq 32 \log n$, we have $M_I \geq \frac{b}{2}\geq 16\log n$.
We consider two cases and, without loss of generality, we prove the lemma for $\gamma=\frac32$. 

First, assume that $\Expc{X \mid m_t=m}\leq M_I$. In this case, the use of Chernoff's Bound implies that
$\Prob{X\leq 1.5M_I \mid m_t = m } \geq 1-e^{-M_I/8} \geq 1- \frac{1}{n^{2}}$,
since $M_I \geq 16\log n$ and $b \geq 4a$, $1.5M_I \leq b$. 
If  $a \geq 36 \log n$, we further have $\Prob{X \geq \frac23\Expc{X \mid m_t = m} \mid m_t = m} \geq1- e^{-a/18} \geq 1-\frac{1}{n^{2}}$. If no condition on $a$ is given, the reverse Chernoff bound guarantees the existence of a constant $c>0$, such that $\Prob{X \geq a} \geq \Prob{X \geq \Expc{X \mid m_t = m} \mid m_t = m} \geq c$, where the first inequality follows from the assumption $\Expc{X \mid m_t = m} \in I$.
Therefore, in the absence of constraints on $a$, we have 
$\Prob{X \in I \mid m_t = m} \geq \Prob{a\leq X \leq 1.5 M_I \mid m_t = m}>c-\frac{1}{n^2}$. If, on the other hand, $a \geq 36 \log n$, then $\Prob{X \in \frac32 I \mid m_t = m} \geq \Prob{\frac23\Expc{X \mid m_t = m \mid m_t = m}\leq X\leq 1.5M_I \mid m_t = m} \geq 1-\frac{2}{n^2}$.

Next, assume that $\Expc{X \mid m_t = m}\geq M_I$. Similarly to the previous case,  Chernoff bound gives $\Prob{X \geq 0.5M_I} \geq 1-e^{-M_I / 8 } \geq 1-\frac{1}{n^2}$ and, since $b \geq 4a$, we have $0.5M_I \geq a$. Moreover, since $b \geq 32 \log n$, we also have  $\Prob{X \leq \frac32b} \geq 1-e^{-M_I/8} \geq 1-\frac{1}{n^2}$.
In addition, the reverse Chernoff bound gives the existence of a constant $c'>0$, such that $\Prob{X \leq b} \leq\Prob{X \leq \Expc{X \mid m_t=m}}>c'$, where the first inequality follows from the fact that $\Expc{X \mid m_t = m} \in I$. In conclusion, we have  $\Prob{X \in I } \geq \Prob{0.5M_I\leq X\leq \Expc{X \mid m_t=m}} \geq c'-\frac{1}{n^2}$ and $\Prob{X \in \frac32 I \mid m_t = m} \geq \Prob{0.5M_I \leq X \leq \frac32 b \mid m_t = m} \geq 1-\frac{2}{n^2}$. 
\end{proof}

\subsection{The red area}

The width of the Red area is small enough that given $m_t$, $m_{t+1}\not \in \red$ with at least constant probability.
The following result formalizes this idea and relies on a simple anti-concentration result.

\begin{lemma} \label{lem:red_to_blue}
Let $X$ be any random variable in $\{U_t, n-U_t, W_t, n-W_t\}$.  For any value $m \in \{0, \dots, \frac{n}{2}\}$, if $\Expc{X \mid m_t = m} \in \red$ then $X \in \blue_1 \cup \blue_2$ with constant probability.
\end{lemma}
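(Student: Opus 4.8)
The plan is to prove the slightly stronger statement that $X$ lands in $\blue_2$ with constant probability (i.e.\ that it escapes $\red$ \emph{upward}). Set $\mu := \Expc{X \mid m_t = m}$, $\mu_0 := \tfrac{n\log 2}{k}$ and $r := \sqrt{\mu_0}$, so the hypothesis reads $|\mu - \mu_0| \le r$, i.e.\ $\mu \in [\mu_0 - \sqrt{\mu_0},\, \mu_0 + \sqrt{\mu_0}]$. From $k \le n/2$ we get $\mu_0 \ge 2\log 2$, hence $\tfrac15 \le \mu \le 2\mu_0$; from $k \ge 185\sqrt{n\log n}$ we get $\mu_0 = o(\sqrt n)$. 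As in the proof of \cref{lem:mt_given_expectation}, $X$ is a sum of $n$ independent Bernoulli random variables, and since $\mu$ is this small their parameters are all $o(1)$, so $\tfrac\mu2 \le \sigma^2 := \Var{X} \le \mu$. Finally let $R := \tfrac nk\log\tfrac{k}{4\log n}$ be the right endpoint of $\blue_2$; the hypotheses on $k$ can hold only if $n \ge 136900\log n$ (otherwise the admissible interval for $k$ is empty), whence $\tfrac{k}{4\log n} \ge \tfrac{185}{4}\sqrt{n/\log n} \ge \tfrac{185}{4}\cdot 370 > 2^{14}$ and so $R \ge 14\mu_0$. In particular $\lceil \mu_0 + r\rceil \le 2\mu_0 + 1 < R$, so $\blue_2 = [\mu_0 + r, R]\cap\mathbb{N}$ is nonempty and it suffices to lower bound $\Prob{\mu_0 + r \le X \le R \mid m_t = m}$.

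I would then split on the size of $\mu_0$. If $\mu_0 \ge C_0$ for a suitable absolute constant $C_0$, the deviation $\lambda := \mu_0 + r - \mu$ satisfies $0 \le \lambda \le 2\sqrt{\mu_0} \le \min\{\mu,\sigma^2\}$, so the additive reverse Chernoff bound (\cref{lem:add_reverse_chernoff}) gives $\Prob{X \ge \mu_0 + r} \ge c_1\exp(-c_2\lambda^2/\sigma^2)$ for absolute constants $c_1,c_2$; since $\lambda^2/\sigma^2 \le 8\mu_0/\mu \le 8/(1 - \mu_0^{-1/2}) = O(1)$, this is at least an absolute constant $2c_0 > 0$. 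Meanwhile $R \ge 14\mu_0 \ge 7\mu$, so the multiplicative Chernoff bound gives $\Prob{X > R} \le (e\mu/R)^R \le 2^{-R} < c_0$, and subtracting yields $\Prob{\mu_0 + r \le X \le R} \ge c_0$. If instead $\mu_0 < C_0$, then $j_0 := \lceil\mu_0 + r\rceil$ is a bounded integer lying in $\blue_2$ (as $j_0 \le 2\mu_0+1 < R$), and $\mu$ lies in $[\tfrac15, 2C_0]$, bounded away from $0$; since $X$ is a sum of $n$ independent Bernoulli variables with parameters summing to the bounded quantity $\mu$ and each $o(1)$, Le Cam's Poisson-approximation bound gives $\Prob{X = j_0} \ge e^{-\mu}\mu^{j_0}/j_0! - o(1) \ge c_0' > 0$ for $n$ large. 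In either case $\Prob{X \in \blue_1 \cup \blue_2 \mid m_t = m} \ge \Prob{X \in \blue_2 \mid m_t = m} \ge \min\{c_0, c_0'\}$.

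The one genuinely delicate point is the anti-concentration in the first case, together with the realization that one cannot simply argue ``$X$ leaves $\red$ with constant probability'': when $\mu$ is a small constant, $X$ equals $0$ — which is \emph{outside} $\blue_1 \cup \blue_2$ — with substantial probability, so the escape must be shown to happen on the upper side, into $\blue_2$. This is possible precisely because $\red$ has half-width $r = \sqrt{\mu_0} = \Theta(\sigma)$, so landing just above $\mu_0 + r$ already has constant probability, while the far endpoint $R$ of $\blue_2$ sits a $\Theta(\log n)$ factor beyond $\mu_0$, which makes the overshoot event $\{X > R\}$ negligible. The residual nuisance is that for $\mu_0 = O(1)$ the standard deviation $\sigma$ is merely a constant and the reverse Chernoff bound becomes too weak, so that regime has to be handled instead by a direct Poisson-type estimate, as above.
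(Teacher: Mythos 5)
Your proof is correct, and its kernel is the same as the paper's: the half-width of $\red$ is $\sqrt{n\log 2/k}$, which is on the order of the standard deviation of the binomial variable $X$, so a reverse-Chernoff anti-concentration step moves $X$ out of $\red$ with constant probability. Where you diverge is in how much of the remaining work you make explicit. The paper's proof stops at ``$X$ deviates by at least $\sqrt{\Expc{X\mid m_t=m}}$ from its expectation with constant probability'' and declares the lemma proved; it does not address the direction of the escape, the possibility that a downward escape lands at $X=0$ (which is excluded from $\blue_1$), or the possibility of overshooting the right endpoint $R=\frac{n}{k}\log\frac{k}{4\log n}$ of $\blue_2$. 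You instead prove the stronger one-sided statement $\Prob{X\in\blue_2}\geq c$, using that $R\geq 14\,\frac{n\log 2}{k}$ to kill the overshoot event via a multiplicative Chernoff bound, and you isolate the regime where $\Expc{X\mid m_t=m}=\Theta(1)$ — in which the reverse Chernoff bound degenerates — handling it by a Poisson (Le Cam) point-mass estimate at $j_0=\lceil\mu_0+\sqrt{\mu_0}\rceil$. These additions genuinely patch holes that the paper's one-paragraph argument glosses over, at the cost of a case split and a few constant-juggling steps (e.g., one should note that the reverse-Chernoff constant $c_0$ and the overshoot bound $2^{-14C_0}$ are compatible for a suitable absolute $C_0$, which your numbers do support). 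The only cosmetic remarks: the Poisson step could be replaced by a direct evaluation of $\binom{n}{j_0}(\mu/n)^{j_0}(1-\mu/n)^{n-j_0}$ so as to rely only on elementary tools, and the tail bound $(e\mu/R)^R$ is not literally the Chernoff form stated in the paper's appendix, though it is standard.
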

\begin{proof}
We first remark that $X$ can always be written as a sum of $n$ i.i.d. Bernoulli random variables and, since $\Expc{X\mid m_t=m} \in \red$, we have $\Expc{X \mid m_t = m} \leq \frac{n}{2}$, thus meeting the assumptions of the reverse Chernoff bound applied to $X$  under the conditioning $m_t = m$. Moreover, since $k \leq 0.5n$, we have $\frac{n \log 2}{k}\geq 1$. By definition, the \red \ area has width $2\sqrt{\frac{n \log 2}{k}}$ approximately, so that $\Expc{X \mid m_t=m}$ lies within distance at most $\sqrt{\Expc{X \mid m_t=m}}$ from the border of $\red$. For this reason, the reverse Chernoff bound implies that $X$ can deviate at least $\sqrt{\Expc{X \mid m_t = m}}$ from its expectation, with constant probability. This concludes the proof. 
\end{proof}

The proof of \cref{lem:out_of_red} is a simple application of previous results.

\begin{proof}[Proof of \cref{lem:out_of_red}]
Assume that, for $m_t = m$, $\Expc{X \mid m_t=m} \in \red$ for a random variable $X \in \{U_t, n-U_t,W_t, n-W_t\}$. Then, \cref{lem:red_to_blue} implies $X \in \blue_1 \cup \blue_2$ with constant probability. This, together with \cref{lem:preliminaries}(c) and \cref{lem:preliminaries}(d), implies
$m_{t+1} \in \blue_1 \cup \blue_2$ with constant probability.

Next, suppose that $\Expc{X \mid m_t =m} \not \in \red$ for every choice of $X \in \{U_t,n-U_t,W_t,n-W_t\}$. Consider $\Bar{X}=\mathrm{argmin}_{X\in \{U_t,n-U_t,W_t,n-W_t\}} \Expc{X \mid m_t = m}$: we obviously have $\Expc{\Bar{X} \mid m_t = m} \leq \frac{n}{2}$ and $\Expc{\Bar{X} \mid m_t = m} \not \in \red$. We can thus apply \cref{lem:mt_given_expectation} to the intervals $I_1 = \blue_1$ and $I_2 = \blue_2 \cup \yellow \cup \green \cup \orange$. This, together with \cref{lem:preliminaries}(c) and \cref{lem:preliminaries}(d), proves that $\Prob{m_{t+1} \not \in \red\mid m_t = m} \geq \Prob{\Bar{X} \in I_1 \cup I_2 \mid m_t = m} > c$, for a constant $c>0$.
\end{proof}

\subsection{The orange area}\label{subse:orange}

The goal of this section is to prove \cref{lem:out_of_orange}. In this area, we have the presence of the fixed point $\frac{n}{2}$ for the expectation of the process, so that $\Expc{m_{t+1} \mid m_t = \frac{n}{2}}=\frac{n}{2}$. In the remainder of this section, we prove that the aforementioned fixed point is unstable. The main ingredients of our proof are the following: i) first, even if $m_t = \frac{n}{2}$, $m_{t+1}$ can deviate by  $\Omega(\sqrt{n})$ from its conditional expectation ($\frac{n}{2}$) thanks to the variance of the process; ii) assuming instead that $m_t = \frac{n}{2}-\Omega(\sqrt{n})$, we are able to show a drift in the process, proving that with constant probability, it leaves the \orange~area within an at most logarithmic number of steps.
The following lemma quantifies the drift mentioned in point ii) above.

\begin{lemma} \label{claim:proba_minority_wins}
    Let $m = \frac{n}{2}-\alpha \sqrt{n}$, where $1 \leq \alpha \leq \frac{\sqrt{n}}{4}$. Then, $\Expc{W_{t} \mid m_t = m} \leq \frac{n}{2}-1.8\alpha \sqrt{n}$.
\end{lemma}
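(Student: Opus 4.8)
The plan is to reduce the statement to a one–dimensional inequality about a binomial tail and then exploit concavity. Write $q := \alpha/\sqrt n$, so that $m/n = \tfrac12 - q$ with $q \in (0,\tfrac14]$. Since an active node's $k$ samples are uniform and independent and each one is a minority node with probability $m/n$, the event $\eventA_t(v)$ has probability $\Prob{\mathrm{Bin}(k,\tfrac12-q) \ge \tfrac{k+1}{2}}$ (recall $k$ is odd, so ``$>k/2$'' means ``$\ge\tfrac{k+1}{2}$''); this is exactly the reasoning behind the formula for $\Expc{U_t\mid m_t=m}$ in \cref{lem:preliminaries}. Hence, by linearity of expectation, $\Expc{W_t \mid m_t = m} = n\,\Prob{\mathrm{Bin}(k,\tfrac12-q) \ge \tfrac{k+1}{2}}$, and since $k - \mathrm{Bin}(k,p)\sim \mathrm{Bin}(k,1-p)$ and $k$ is odd, this equals $n\bigl(1 - \Prob{\mathrm{Bin}(k,\tfrac12+q)\ge\tfrac{k+1}{2}}\bigr)$. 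So it suffices to prove
\[
  \Prob{\mathrm{Bin}(k,\tfrac12+q)\ge\tfrac{k+1}{2}} \ \ge\ \tfrac12 + 1.8\,q .
\]

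For this I would study $P(x) := \Prob{\mathrm{Bin}(k,x)\ge\tfrac{k+1}{2}}$ on $x\in[\tfrac12,1]$. The classical identity $\frac{d}{dx}\Prob{\mathrm{Bin}(k,x)\ge j} = j\binom kj x^{j-1}(1-x)^{k-j}$ gives $P'(x) = \tfrac{k+1}{2}\binom{k}{(k+1)/2}\bigl(x(1-x)\bigr)^{(k-1)/2} > 0$, and since $x(1-x)$ is decreasing on $[\tfrac12,1]$, $P'$ is non-increasing there, i.e. $P$ is concave on $[\tfrac12,1]$. As $P(\tfrac12)=\tfrac12$ by symmetry ($k$ odd), concavity yields for every $x\in[\tfrac12,\tfrac34]$
\[
  P(x) \ \ge\ \tfrac12 + \frac{P(3/4)-1/2}{1/4}\,(x-\tfrac12)
       \ =\ \tfrac12 + 4\bigl(P(\tfrac34)-\tfrac12\bigr)(x-\tfrac12).
\]
Then Hoeffding's inequality gives $\Prob{\mathrm{Bin}(k,\tfrac34)\le\tfrac{k-1}{2}} = \Prob{\mathrm{Bin}(k,\tfrac34)-\tfrac{3k}{4}\le-(\tfrac k4+\tfrac12)}\le e^{-k/8}$, so $P(\tfrac34)\ge 1-e^{-k/8}\ge \tfrac12+\tfrac{9}{20}$ as soon as $e^{-k/8}\le\tfrac1{20}$, i.e. $k\ge 8\ln 20\approx 23.96$ — amply implied since $185\sqrt{n\log n}\le k\le\tfrac n2$ forces $n$, hence $k$, to be large. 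Taking $x=\tfrac12+q$ (legitimate because $q\le\tfrac14$) gives $P(\tfrac12+q)\ge\tfrac12+4\cdot\tfrac{9}{20}\cdot q=\tfrac12+1.8\,q$, and unwinding, $\Expc{W_t\mid m_t=m}=n\bigl(1-P(\tfrac12+q)\bigr)\le\tfrac n2-1.8nq=\tfrac n2-1.8\alpha\sqrt n$.

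There is no genuinely hard step; the only thing requiring care is choosing this vehicle rather than a head-on estimate of $\Prob{\mathrm{Bin}(k,\tfrac12-q)>k/2}$. A direct attack would split into a near-symmetric regime (tiny $q$, where one needs an anti-concentration/local-CLT statement to see the tail is $\tfrac12-\Theta(q\sqrt k)$, hence Stirling bounds on the central binomial coefficient) and a moderate-$q$ regime (where a Chernoff bound suffices); the concavity argument collapses both into one line, at the price of a non-tight constant. The two constants in the statement are then exactly calibrated to this argument: the hypothesis $\alpha\le\sqrt n/4$ is precisely what keeps $x=\tfrac12+q$ inside $[\tfrac12,\tfrac34]$ where concavity is applied, and the coefficient $1.8 = 4(\tfrac12-\tfrac1{20})$ is what survives the slack $e^{-k/8}\le\tfrac1{20}$.
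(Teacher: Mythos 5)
Your proof is correct, and it takes a genuinely different route from the paper's. You reduce the statement, via the symmetry $k-\mathrm{Bin}(k,p)\sim\mathrm{Bin}(k,1-p)$ and oddness of $k$, to the inequality $P(\tfrac12+q)\geq\tfrac12+1.8q$ for the tail function $P(x)=\Prob{\mathrm{Bin}(k,x)\geq\tfrac{k+1}{2}}$, and then get this from three clean facts: the incomplete-beta identity $P'(x)=\tfrac{k+1}{2}\binom{k}{(k+1)/2}\bigl(x(1-x)\bigr)^{(k-1)/2}$, which shows $P$ is concave on $[\tfrac12,1]$; the symmetry $P(\tfrac12)=\tfrac12$; and a single Hoeffding evaluation $P(\tfrac34)\geq 1-e^{-k/8}$. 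The paper instead works directly with $X\sim\mathrm{Bin}(k,m/n)$: it compares the upper tails of $X$ and $k-X$ term by term via the likelihood ratio $\bigl(\tfrac{m/n}{1-m/n}\bigr)^{2}$, controls the central term $\Prob{X=\lfloor k/2\rfloor}\leq 1/\sqrt{k}$ with the central-binomial-coefficient estimate (\cref{claim:centralbinomial}), and solves the resulting self-referential inequality for $\Prob{X\geq\lceil k/2\rceil}$. Your argument is shorter, avoids the local (anti-concentration) estimate entirely, and makes the provenance of both constants transparent — $\alpha\leq\sqrt{n}/4$ is exactly what keeps $\tfrac12+q$ in the concavity window $[\tfrac12,\tfrac34]$, and $1.8=4(\tfrac12-\tfrac1{20})$ comes from the chord slope; the paper's version is more elementary in the tools it invokes (only ratio comparisons and Stirling), at the cost of a more delicate constant-chasing computation. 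Both arguments require $k$ to exceed a modest absolute constant, which is guaranteed by the standing hypothesis $k\geq 185\sqrt{n\log n}$, so this is not a gap.
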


\begin{proof}
Fix an agent $i$ and consider  the random variable $X$ indicating the number of agents sampled by $i$ and holding the minority opinion. Our goal is to prove that $\Prob{X >\frac{k}{2} \mid m_t = \frac{n}{2}-\alpha \sqrt{n}}\leq \frac{1}{2}-\frac{\alpha}{1.1\sqrt{n}}$. For simplicity, we omit the conditioning in the rest of the proof. We have from \cref{claim:centralbinomial},
    \begin{align}
    \Prob{X =  \left\lceil \tfrac{k}{2}\right\rceil } \leq \Prob{X =  \left\lfloor \tfrac{k}{2}\right\rfloor } 
 \leq \binom{k}{\lfloor \tfrac{k}{2}\rfloor}\left(\frac{1}{2}\right)^k \leq \frac{4^{k/2}}{\sqrt{\pi(k/2)}\cdot 2^k} \leq \frac{1}{\sqrt{k}}.
    \label{eq:X_exactsymmetry}
    \end{align}
    Denote by $Y$ the number of agents sampled by $i$ that hold the majority opinion, so that $Y=k-X$. 
    Since $X\sim\text{Bin}\left(k,1-\frac{m}{n}\right)$ and $Y\sim\text{Bin}\left(k,\frac{m}{n}\right)$, it is easy to see that
    \begin{align*}
        \Prob{X \geq \left\lceil\tfrac{k}{2}\right\rceil + 1 } \leq \left(\frac{\tfrac{m}{n}}{1-\tfrac{m}{n}}\right)^2\Prob{Y \geq \left\lceil\tfrac{k}{2}\right\rceil + 1}.
    \end{align*}
    Letting $\gamma = \frac{2\alpha}{\sqrt{n}}$ and $\beta = \left(\frac{1-\gamma}{1+\gamma}\right)^2$, we can rewrite the inequality above as follows
    \begin{align*}
        \Prob{X \geq \left\lceil\tfrac{k}{2}\right\rceil + 1 } \leq  \beta \cdot \Prob{Y \geq \left\lceil\tfrac{k}{2}\right\rceil + 1}.
    \end{align*}
Since $X = k-Y$, $\Prob{Y \geq x}= \Prob{X\leq k-x}$ for every $x \in [k]$, so that
\begin{align*}
    \Prob{X \geq \left\lceil\tfrac{k}{2}\right\rceil + 1}  \leq \beta \cdot \left(1-\Prob{X \geq \left\lceil\tfrac{k}{2}\right\rceil} - \Prob{X 
 = \left\lfloor\tfrac{k}{2}\right\rfloor} \right).
\end{align*}
Moreover, \cref{eq:X_exactsymmetry} and the above inequality imply
\begin{align*}
  \Prob{X \geq \left\lceil\tfrac{k}{2}\right\rceil } &\leq \beta \cdot \left(1-\Prob{X \geq \left\lceil\tfrac{k}{2}\right\rceil } \right) - \beta \cdot \Prob{X 
 = \left\lfloor\tfrac{k}{2}\right\rfloor}+\Prob{X 
 = \left\lceil\tfrac{k}{2}\right\rceil}
 \\ &\leq \left(\frac{1-\gamma}{1+\gamma}\right)^2  \cdot \left(1-\Prob{X \geq \left\lceil\tfrac{k}{2}\right\rceil } \right) + \Prob{X 
 = \left\lfloor\tfrac{k}{2}\right\rfloor}\left(1-\beta\right)
 \\ & \leq \beta  \cdot \left(1-\Prob{X \geq \left\lceil\tfrac{k}{2}\right\rceil } \right) + \frac{1}{\sqrt{k}}\left(1-\beta\right)
 \\ & \leq \beta  \cdot \left(1-\Prob{X \geq \left\lceil\tfrac{k}{2}\right\rceil } \right) + \frac{6\gamma}{\sqrt{k}},
\end{align*}
where the last inequality holds whenever $\gamma < 0.3$ (which, in our case, is true since $\gamma=\frac{2\alpha}{\sqrt{n}}$ and $\alpha <\frac{\sqrt{n}}{4}$).
Finally, simple calculus yields
\begin{align*}
\Prob{X \geq  \left\lceil\tfrac{k}{2}\right\rceil} \leq \frac{\left(\frac{1-\gamma}{1+\gamma}\right)^2+\frac{6\gamma}{\sqrt{k}}}{1+\left(\frac{1-\gamma}{1+\gamma}\right)^2} \leq \frac{1}{2}-\frac{\gamma}{1.1} + \frac{6\gamma}{\sqrt{k}} \leq \frac{1}{2}-0.92\gamma \leq \frac{1}{2}-1.8\cdot \frac{\alpha}{\sqrt{n}}.
\end{align*}
where the first inequality holds whenever $\gamma<0.3$. Hence, noting that $\Expc{W_{t} \mid m_t = m} = n \cdot \Prob{X \geq  \left\lceil\tfrac{k}{2}\right\rceil}$ the claim follows.
\end{proof}

\begin{lemma}
For any $t_0 \geq 1$ and $m \in \orange$, let
\[T = \inf \{s>t_0: m_s \not \in \orange\},\].
Then, $\Prob{\highlight{T - t_0 \leq \log n} \mid m_{t_0}=m } \geq c$, for a suitable positive constant $c$.
\label{claim:out_of_orange}
\end{lemma}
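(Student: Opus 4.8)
\emph{Approach.} The plan is to establish the instability of the fixed point $m=n/2$. Set $\Delta_t := \tfrac{n}{2}-m_t$; I will show that once $\Delta_t \ge \sqrt{n}$ the gap grows geometrically (by a factor at least $3/2$ per round, with probability $1-e^{-\Omega(\Delta_t^2/n)}$), and that if $0 \le \Delta_t < \sqrt{n}$ then a single round raises $\Delta_{t+1}$ to $\sqrt{n}$ with constant probability. Since the width of \orange\ is $n\sqrt{2\log n/k}$, which is at most $n^{3/4}(\log n)^{1/4}$ under the hypothesis $k \ge 185\sqrt{n\log n}$, the gap needs only $O(\log n)$ geometric steps to overshoot \orange. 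Two facts will be used throughout \orange: (a) since $\orange \subseteq \{m : m \ge \tfrac{3n\log n}{k}\}$ (a short computation using $k \ge 185\sqrt{n\log n}$), \cref{lem:preliminaries}(c) gives $U_t=0$ w.h.p., so by \cref{fa:wrong_vs_minor} $m_{t+1}=\min\{W_t, n-W_t\}$ w.h.p.; and (b) writing $m=\tfrac{n}{2}-\alpha\sqrt{n}$, in \orange\ we have $\alpha \le \sqrt{2n\log n/k} \le \tfrac{1}{4}\sqrt{n}$ (since $k\ge 32\log n$), so \cref{claim:proba_minority_wins} applies as soon as $\alpha \ge 1$.

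\emph{One-step estimates.} First I would prove the drift step: if $m_t=\tfrac{n}{2}-\alpha\sqrt{n} \in \orange$ with $\alpha \ge 1$, then $\Prob{\Delta_{t+1}\ge \tfrac{3}{2}\Delta_t \mid m_t} \ge 1-e^{-c_1\alpha^2}-n^{-2}$ for an absolute constant $c_1>0$. Indeed, \cref{claim:proba_minority_wins} gives $\Expc{W_t \mid m_t} \le \tfrac{n}{2}-1.8\alpha\sqrt{n}$, and since $W_t$ is a sum of $n$ independent Bernoulli variables, a Chernoff bound (the gap to the mean being at least $0.3\alpha\sqrt{n}$) gives $W_t \le \tfrac{n}{2}-1.5\alpha\sqrt{n}$ except with probability $e^{-c_1\alpha^2}$; on that event $W_t<\tfrac{n}{2}$, so together with $U_t=0$ we get $m_{t+1}=W_t\le \tfrac{n}{2}-1.5\alpha\sqrt{n}$, i.e.\ $\Delta_{t+1}\ge\tfrac{3}{2}\Delta_t$. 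Second, the escape step near the fixed point: if $m_t \in \orange$ with $\Delta_t<\sqrt{n}$, then $\Prob{m_{t+1}\le \tfrac{n}{2}-\sqrt{n} \mid m_t} \ge c_0$ for a constant $c_0>0$. Here, since $m_t \le \tfrac{n}{2}$, the per-node probability $p$ of sampling more than $k/2$ minority nodes is at most $1/2$, so $\Expc{W_t\mid m_t}=np\le \tfrac{n}{2}$; a brief case analysis — a Chernoff bound if $\Expc{W_t \mid m_t}$ is already at most $\tfrac{n}{2}-2\sqrt{n}$, and otherwise the reverse Chernoff bound (valid since then $p$ is bounded away from $0$ and $1$) to obtain a $\sqrt{n}$ deviation below the mean — shows $W_t \le \tfrac{n}{2}-\sqrt{n}$ with constant probability, and combined with $U_t=0$ this gives $m_{t+1}=W_t$, hence $\Delta_{t+1}\ge\sqrt{n}$ (if instead $m_{t+1}$ lands outside \orange, so much the better).

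\emph{Chaining.} Starting from $m_{t_0}\in\orange$: if $\Delta_{t_0}<\sqrt{n}$, one application of the escape step brings the process, with probability $\ge c_0$, to a round $t_0+1$ with $m_{t_0+1}\notin\orange$ (then $T=t_0+1$) or $\Delta_{t_0+1}\ge\sqrt{n}$. From any round with $m\in\orange$ and $\Delta\ge\sqrt{n}$, repeatedly applying the drift step multiplies $\Delta$ by at least $3/2$ each round — \cref{claim:proba_minority_wins} remaining applicable because $\alpha\le\tfrac{1}{4}\sqrt{n}$ throughout \orange\ — so after at most $J=\lceil \log_{3/2}\sqrt{2n\log n/k}\rceil=O(\log n)$ rounds the gap exceeds the width of \orange\ and $m\notin\orange$; thus $T-t_0\le J+1$, which is at most $\log n$ for $n$ large. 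Because $\Delta$ grows geometrically, the $j$-th drift round fails with probability at most $e^{-c_1(3/2)^{2j}}+n^{-2}$, and these are summable, so invoking the Markov property of $\{m_t\}_t$ to multiply the conditional one-step bounds yields $\Prob{T-t_0\le\log n \mid m_{t_0}=m} \ge c_0\cdot\prod_{j\ge 0}\bigl(1-e^{-c_1(3/2)^{2j}}-n^{-2}\bigr) \ge c$ for a positive constant $c$, which is the claim.

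\emph{Main obstacle.} The delicate point is this last bookkeeping: right after the escape step the per-round drift success probability is only $\approx 1-e^{-c_1}$, a constant strictly less than $1$, so a union bound over the $\Theta(\log n)$ rounds is hopeless; the proof genuinely relies on the geometric growth of $\Delta_t$, which makes the failure probabilities $e^{-\Omega(\alpha^2)}$ decay doubly-exponentially in the step index so that their infinite product is still a positive constant. A secondary subtlety is handling uniformly the two ways the process can leave \orange\ (gradual drift versus an immediate large jump) while each one-step estimate is only conditional on the current value $m_t$; in particular, the analysis of the escape step must cover both the regime where $\Expc{W_t\mid m_t}$ is pinned near $n/2$ (where the push comes from variance) and the regime where it is already well below $n/2$ (where it comes from drift).
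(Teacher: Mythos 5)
Your proposal is correct and follows essentially the same route as the paper's proof: the same decomposition into a constant-probability escape step near the fixed point $n/2$ (via a Chernoff/reverse-Chernoff case analysis on $\Expc{W_t \mid m_t}$) followed by a geometric drift step driven by \cref{claim:proba_minority_wins} with $U_t=0$ w.h.p., and the same chaining argument whose success probability is a convergent infinite product thanks to the doubly-exponential decay of the per-round failure probabilities. The only differences are cosmetic choices of constants (e.g.\ the threshold $\alpha\ge 1$ versus the paper's $\alpha\ge 3$).
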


\begin{proof}
    For every~$t$, define $\alpha_t$ so that $m_t = \frac{n}{2}-\alpha_t \sqrt{n}$, i.e., $\alpha_t =\left(\frac{n}{2}-m_t\right)\frac{1}{\sqrt{n}}$. Note that, if $m_t \in \orange$, then $\alpha_t \in \left[0,\sqrt{\frac{2n\log n}{k}}\right]$. We can prove the following:
    \begin{enumerate}[(i)]
        \item If $\alpha_{t_0} \leq 3$, then $\alpha_{t_0+1} \geq 3$ with constant probability.
        \item For every $t>t_0$ and $3 \leq \alpha \leq \frac{\sqrt{n}}{4}$, we have $\Prob{\alpha_{t+1}>1.5\alpha \mid \alpha_t=\alpha} \geq 1-e^{-0.1\alpha^2}-\frac{1}{n^2}$.
    \end{enumerate}
We prove (i) and (ii) below, after showing that they imply \cref{claim:out_of_orange}.

Assume (i) and (ii) hold. We define the events $\mathcal{E}_t = \{t \geq T\} \cup \{\alpha_{t+1}>1.5\alpha_t\}$, $\mathcal{F}= \{\alpha_{t_0+1}\geq 3\}$ and $\mathcal{E} = \bigcap_{t=t_0+1}^{t_0+1+\log n}\mathcal{E}_t \bigcap \mathcal{F}$. We note that, conditioning on $m_{t_0}=m$, we have $\{T \leq \log n\} \subseteq \mathcal{E}$. Indeed, this follows since, if $\alpha_{t_0+1}\ge 3$ and $\alpha_{t+1} > 1.5\alpha_t$ for every $t = t_0 + 1,\ldots, t_1$, then a straightforward calculation shows that $\alpha_{t_1}\ge\sqrt{\frac{2n\log n}{k}}$ for $t_1 < \log n$. Recalling that (i) implies $\Prob{\mathcal{F}}>c_1$ for a suitable constant $c_1$ we have
\begin{equation}
    \label{eq:out_orange}
    \Prob{\highlight{T - t_0 \leq \log n} \mid m_{t_0} = m}\ge\Prob{\mathcal{E} \mid m_{t_0} = m}\ge c_1\prod_{t=t_0+1}^{t_0+1+\log n}\Prob{\mathcal{E}_t \mid \cap_{j=t_0+1}^{t-1}\mathcal{E}_j \cap \mathcal{F}, m_{t_0} = m}.
\end{equation}
Now note that, from our definition of the $\mathcal{E}_t$'s, we have $\Prob{\mathcal{E}_t \mid \cap_{j=t_0+1}^{t-1}\mathcal{E}_j\cap \mathcal{F}, m_{t_0} = m} = 1$ if $T = t'$ for some $t'\in\{t_0+1,\ldots , t-1\}$, while $\Prob{\mathcal{E}_t \mid \cap_{j=t_0+1}^{t-1}\mathcal{E}_j\cap \mathcal{F}, m_{t_0} = m}\ge \left(1-e^{-0.1\alpha^2}-\frac{1}{n^2}\right)$ if $t' < T$ for $t'\in\{t_0+1,\ldots , t-1\}$ and $\alpha_t = \alpha$.
To finish the proof, define $\beta_t = 3\cdot (1.5)^{t-t_0-1}$. Then, for every $t$
\begin{align*}
    \Prob{\mathcal{E}_t \mid \cap_{j=t_0+1}^{t-1}\mathcal{E}_j\cap \mathcal{F}, m_{t_0} = m} &\geq \Prob{\alpha_{t+1} > \beta_{t+1} \mid \cap_{j=t_0+1}^{t-1}\alpha_{j+1} > 1.5\alpha_j, t<T} \\
    &\geq 1-e^{-0.1\beta_t^2}-\frac{1}{n^2},
\end{align*}
where the first inequality follows since $\cap_{j=t_0+1}^{t-1}\alpha_{j+1} > 1.5\alpha_j$ implies $\alpha_t > \beta_t$, while the second follows from this fact and from (ii). Finally, \cref{eq:out_orange} and the inequality above give 
\begin{align*}
    &\Prob{\highlight{T - t_0 \leq \log n} \mid m_{t_0} = m}\ge\Prob{\mathcal{E} \mid m_{t_0} = m} \geq c_1\prod_{t=t_0+1}^{t_0+1+\log n}\left(1-e^{-0.1\beta_t^2}-\frac{1}{n^2}\right) \\ &\geq c_1\left(1-\sum_{t=t_0+1}^{t_0+1+\log n}e^{-0.1\beta_t^2} -\frac{1}{n}\right)\geq c_1\left(1-e^{-0.1\cdot 9} -\sum_{t=1}^{+\infty}3^{-t}-\frac{1}{n}\right)\geq c,
\end{align*}
for a suitable positive constant $c$, with the third inequality following from the definition of $\beta_t$.

\paragraph{Proof of (i)}
We consider two cases. Assume first that $\Expc{W_t \mid \alpha_{t_0}=\alpha} \leq \frac{n}{2}-3\sqrt{n}$. In this case, \cref{lem:mt_given_expectation} implies that with constant probability, we $W_{t_0} \leq \frac{n}{2}-3\sqrt{n}$, whence $m_{t_0+1} \leq \frac{n}{2}-3\sqrt{n}$ and  $\alpha_{t_0+1} \geq 3$. Assume next that $\frac{n}{2}-3\sqrt{n}\leq \Expc{W_t \mid \alpha_{t_0}=\alpha} \leq \frac{n}{2}$. In this case, we apply the reverse Chernoff Bound to show that with constant probability, $W_{t_0+1}$ deviates at least $3\sqrt{\Expc{W_t \mid \alpha_t=\alpha}}$ from its expectation, thus giving $\alpha_{t_0+1}\geq 3$.

\paragraph{Proof of (ii)} By definition, we have that $m_t = \frac{n}{2}-\alpha_t \sqrt{n}$. Now, \cref{claim:proba_minority_wins} implies
\begin{equation*}
    \Expc{W_t \mid \alpha_t = \alpha }\leq \frac{n}{2}-1.8\alpha\sqrt{n}.
\end{equation*}
Therefore, Chernoff's bound (\cref{thm:additive_chernoff}), \cref{lem:preliminaries}(c) and a union bound give
\begin{equation*}
    \Prob{U_t = 0 \text{ and }W_t\leq \frac{n}{2} - 1.5 \, \alpha_t \sqrt{n} \mid \alpha_t = \alpha} \geq 1-e^{-0.1\alpha_t^2}-\frac{1}{n^2}.
\end{equation*}
Hence, with the above probability we have $m_{t+1}=W_t$, so that
\[\Prob{\alpha_{t+1}>1.5\alpha \mid \alpha_t = \alpha} = \Prob{m_{t+1}\leq \frac{n}{2}-1.5\alpha \sqrt{n} \mid \alpha_t = \alpha} \geq 1-e^{-0.1\alpha_t^2}-\frac{1}{n^2}.\]
\end{proof}

We now proceed with the proof of \cref{lem:out_of_orange}.

\begin{proof} [Proof of \cref{lem:out_of_orange}]
Consider the stopping time $T'$ defined as follows:
   \[T'=\inf \{s>t_0: \Expc{W_{s} \mid m_s} \not \in \orange  \text{ or } m_{s} \not \in \orange\}.\]
   Roughly speaking, $T'$ is the first time the expectation of the process, or the process, leaves the \orange~area. Of course, $T'\leq T$. Considering $T'$ instead of $T$ is useful to prove that, at time $T=T'+1$, $m_T\not\in\orange\cup\red$ with constant probability.
Consider the events
\[A_t = \{m_t \in \orange\}\cap \{ \Expc{W_t \mid m_t} \in \orange\} \quad \text{ and } \quad B_t = \{\tfrac{n}{3} \leq W_t \leq \tfrac{3n}{4}\} \cap \{U_t = 0\},\]
and let
\[\mathcal{E}_t = (A_t)^C \cup B_t \cup \{t \geq T'\}, \qquad \mathcal{E}_1 = \bigcap_{t=t_0}^{t_0+\log n}\mathcal{E}_t \quad \text{ and } \quad \mathcal{E}_2 = \{\highlight{T' - t_0 \leq \log n}\}.\]
We have:
\begin{align*}
    \Prob{\mathcal{E}_1^C\mid m_{t_0}=m} &\leq \sum_{t=t_0}^{t_0+\log n}\Prob{\mathcal{E}_t^C \mid m_{t_0}=m} \\&\leq \sum_{t=t_0}^{t_0+\log n}\Prob{(B_t)^C \mid A_t, t < T', m_{t_0}= m} \\&\leq \frac{2\log n}{n^2},
\end{align*}
where the last inequality follows from \cref{lem:mt_given_expectation} applied to the interval $I=\orange$, by observing that $\frac{3}{2}I \subseteq [\frac{n}{3}, \frac{3n}{4}]$ and from \cref{lem:preliminaries}(c).
Since $T'\leq T$,
\cref{claim:out_of_orange} shows the existence of a constant $c_2>0$ such that $\Prob{\mathcal{E}_2 \mid m_{t_0}=m}>c_2$, whence $\Prob{\mathcal{E}_1\cap \mathcal{E}_2 \mid m_{t_0}=m}>c_2-\frac{1}{n}>0$. We assume that the event $\mathcal{E}_1\cap \mathcal{E}_2$ holds in the remainders of the proof, remarking that the event $\mathcal{E}_1\cap \mathcal{E}_2$ is independent of what happens during round $T'+1$.

We now distinguish two cases, according to the realization of $T'$.
    First, we assume that $m_{T'-1}=\gamma \in \orange$ and that  $\Expc{W_{T'} \mid m_{T'-1}=\gamma} \not \in \orange$. If $\Expc{W_{T'} \mid m_{T'-1}=\gamma} \not \in \red \cup \orange $, \cref{lem:mt_given_expectation} applied to intervals $I_1 = \blue_1$ and $I_2 = \blue_2 \cup \yellow \cup \green$ implies that $W_{T'} \not \in \red \cup \orange$ and $W_{T'} \leq n/2$ with constant probability. If otherwise $\Expc{W_{T'} \mid m_{T'-1}=\gamma} \in \red$, \cref{lem:red_to_blue} implies $W_{T'} \in \blue_1\cup \blue_2$ with constant probability. Since $m_{T'-1} = \gamma\in \orange$, the two previous statements, together with \cref{lem:preliminaries}(c) implying $U_{T'} = 0$ w.h.p., proving that $m_{T'+1} \not \in \red \cup \orange$ with constant probability, and hence with constant probability $m_{T} \not \in \red \cup \orange$ and $T=T'+1$.

    Next, assume that, for $\gamma \in \orange$, $\Expc{W_{T'}\mid m_{T'-1}=\gamma} \in \orange$ and $m_{T'} \not \in \orange$. In this case,  the event $\mathcal{E}_1 \cap \mathcal{E}_2$ implies $\frac{n}{3} \leq W_{T'} \leq \frac{3n}{4}$ and $U_{T'} = 0$. Since $m_{T'+1} = \min \{W_{T'},n-W_{T'}\}$, this implies that $m_{T'+1} \not \in \red$, concluding that $T=T'+1$ and so $m_T  = m_{T'+1} \not \in \red \cup \orange$ with constant probability.
\end{proof}


\subsection{The yellow area}\label{subse:yellow}
The goal of this section is to prove~\cref{lem:out_of_yellow}. The main
difficulty in the analysis of this area is the presence of a fixed point
$\Bar{m} \in \yellow$ for the expectation of the process. Notice that,
according to Fact~\ref{fa:wrong_vs_minor} and Lemma~\ref{lem:no_wrong}, in the
whole \yellow~area it holds that $m_{t+1} = U_t$, w.h.p., for the range of $k$
that we consider. Hence, we here define $\Bar{m}$ as the point such that
$\Expc{U_t \mid m_t = \Bar{m}} = \Bar{m}$. In the following lemma, we show that
such a fixed point is unstable: we first estimate the initial deviation from
$\Bar{m}$ using the variance of the process, then we show that the distance
between $m_t$ and the fixed point grows by a factor $\Omega(\log n)$ in each
round, with a sufficiently large probability. We use this argument to derive an
upper bound on the time it takes the process to leave the \yellow~area.

\begin{lemma}\label{lem:expectation_in_yellow2} 
For any $t_0 \geq 1$, let $T$ be the random variable indicating the first time larger than $t_0$ such that
the process is out of the \yellow~area,
\[
T = \inf \{s > t_0: m_s \not \in \yellow\}\,.
\]
For every $m \in \yellow$, it holds that $\Prob{T \leq \log n \mid m_{t_0} = m}
\geq c$, for a suitable positive constant $c$.
\end{lemma}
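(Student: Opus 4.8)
\textbf{Proof plan for Lemma~\ref{lem:expectation_in_yellow2}.}
The approach mirrors the treatment of the \orange~area in \cref{claim:out_of_orange}, but with the roles of $W_t$ and $U_t$ exchanged: throughout \yellow~we have $m_{t+1}=U_t$ w.h.p.\ by \cref{lem:no_wrong}, and $\Expc{U_t\mid m_t=m}=n(1-m/n)^k$ by \cref{lem:preliminaries}(b). The plan is to first record qualitative properties of the fixed point $\Bar m$ defined by $\Expc{U_t\mid m_t=\Bar m}=\Bar m$: since $m\mapsto n(1-m/n)^k$ is strictly decreasing and convex on the relevant range, there is a unique such $\Bar m$, and it lies strictly inside \yellow. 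I would then establish the two analogues of points (i) and (ii) in the proof of \cref{claim:out_of_orange}: (i') if $|m_{t_0}-\Bar m|$ is smaller than some absolute constant multiple of $\sqrt{\Bar m}$ (which is $\Theta(\sqrt{n/k}\cdot\sqrt{\log(\cdot)})$, hence at least a large constant given $k\le n/2$), then with constant probability $|m_{t_0+1}-\Bar m|\geq c_0\sqrt{\Bar m}$, proved via the reverse Chernoff bound applied to $U_{t_0}$ (whose expectation is within $O(\sqrt{\Bar m})$ of $\Bar m$), exactly as in the ``Proof of (i)'' paragraph; and (ii') a multiplicative-drift step showing that, conditioned on $m_t=m$ with $|m-\Bar m|=\delta$ not yet large enough to exit \yellow, we have $|m_{t+1}-\Bar m|\geq (c_1\log n)\,\delta$ with probability at least $1-e^{-c_2\delta^2}-1/n^2$ (or some comparable failure probability summable along the trajectory).

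The drift estimate (ii') is the crux, and it is where the proof genuinely departs from the \orange~case. The point is that the derivative of $m\mapsto n(1-m/n)^k$ at $\Bar m$ has magnitude $k(1-\Bar m/n)^{k-1}\approx k\Bar m/n\cdot\frac{1}{1-\Bar m/n}$; since $\Bar m=\Theta(\frac nk\log(\text{something}))$ this derivative is $\Theta(\log(\cdot))=\Omega(\log n)$ up to constants on the \yellow~range, which is exactly the expansion factor claimed. Concretely, I would write $m_t=\Bar m+\sigma\delta$ with $\sigma\in\{\pm1\}$ and $\delta=|m_t-\Bar m|>0$, Taylor-expand (using convexity to control the error one-sidedly on each side of $\Bar m$) to get $\big|\Expc{U_t\mid m_t}-\Bar m\big|\geq (c\log n)\,\delta$ for $\delta$ in the \yellow~window, then apply the additive Chernoff bound (\cref{thm:additive_chernoff}) together with \cref{lem:preliminaries}(d) (to get $W_t=0$) and a union bound to conclude $|m_{t+1}-\Bar m|\geq(c'\log n)\delta$ with the stated probability. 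A subtlety already flagged in the excerpt is that $m_t-\Bar m$ may change sign from round to round, so the amplification must be phrased in terms of $|m_t-\Bar m|$ rather than a signed quantity; using $|\cdot|$ throughout handles this cleanly, and the sign of the one-sided Taylor error is harmless because we only ever need a lower bound on $|\Expc{U_t}-\Bar m|$.

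Given (i') and (ii'), the final assembly is identical in structure to \eqref{eq:out_orange}: define $\beta_t=c_0\sqrt{\Bar m}\cdot(c_1\log n)^{t-t_0-1}$, let $\mathcal E_t=\{t\geq T\}\cup\{|m_{t+1}-\Bar m|\geq(c_1\log n)|m_t-\Bar m|\}$ and $\mathcal F=\{|m_{t_0+1}-\Bar m|\geq c_0\sqrt{\Bar m}\}$, and observe that on $\mathcal F\cap\bigcap_{t=t_0+1}^{t_0+1+\log n}\mathcal E_t$ the distance $|m_t-\Bar m|$ grows geometrically with ratio $\Omega(\log n)$, so within fewer than $\log n$ rounds it exceeds the half-width of \yellow~(which is $O(n\log n/k)=O(n)$, hence certainly passed after $O(1/\log\log n)$-many, and in any case $<\log n$, amplification steps once the base $c_0\sqrt{\Bar m}\geq 1$ is reached), forcing $T-t_0\leq\log n$. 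The failure probabilities are controlled by $\sum_t e^{-c_2\beta_t^2}+O(\log n/n^2)$, which is bounded by a constant strictly less than $1$ because $\beta_{t_0+1}^2=c_0^2\Bar m$ is at least a large constant, the later terms decay doubly fast, and the $1/n^2$ terms sum to $o(1)$; combining with the constant lower bound from (i') on $\Prob{\mathcal F}$ gives $\Prob{T-t_0\leq\log n\mid m_{t_0}=m}\geq c$. I expect the main obstacle to be making the Taylor/convexity argument in (ii') fully rigorous uniformly over the entire \yellow~window $[\frac nk\log(\frac{k}{4\log n}),\,\frac{3n\log n}{k}]$ — in particular verifying that the local expansion factor $k(1-m/n)^{k-1}$ stays $\Omega(\log n)$ at the lower end of the window and that the second-order error never swamps the first-order term there; the boundary constants in the definition of \yellow~appear to have been chosen precisely to make this work, so the estimate should go through, but the bookkeeping is the delicate part.
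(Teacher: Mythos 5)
Your plan is correct and follows essentially the same route as the paper's proof: identify the unique fixed point $\Bar{m}$ of $m\mapsto\Expc{U_t\mid m_t=m}$ in \yellow, obtain an initial deviation of order $\sqrt{n/k}$ via the reverse Chernoff bound, and then amplify $|m_t-\Bar{m}|$ by a factor $\Omega(\log n)$ per round (the paper's Lemma~\ref{lem:expectation_in_yellow2}, items (i) and (ii)), assembling the events exactly as in the \orange-area argument. The quantitative heart you identify — that the expansion factor is $\frac{k\Bar{m}}{n}=\Omega(\log n)$ thanks to \eqref{eq:char_fx_point}, and that one must track $|m_t-\Bar{m}|$ because the sign flips — is precisely what the paper does.
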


\begin{proof}

Let $f(x)=\Expc{U_t \mid m_t=x} =n\left(1-\frac{x}{n}\right)^k +
n\left(\frac{x}{n}\right)^k$.  We have that $f$ is strictly decreasing in
$\yellow$ and, moreover, that $f(x)>x$ if
$x=\frac{n}{k}\log\left(\frac{k}{4\log n}\right)$ and that $f(x)<x$ if
$x=\frac{3n\log n}{k}$. These three facts imply that $f$ admits a unique
fixed point $\Bar{m} \in \yellow$ satisfying $f(\Bar{m})=\Bar{m}$.

Let $\Delta_t = |m_t - \Bar{m}|$.  We prove the following in the remainder: 

\begin{itemize}
\item[(i)] Conditioning on  $\Delta_{t_0} \leq \sqrt{\frac{n}{k}}$, we have
that $\Delta_{t_0+1} \geq \sqrt{\frac{n}{k}}$ with constant probability.
\item[(ii)] For every~$t>t_0$, we have $\Prob{\Delta_{t+1}>\frac{\log
n}{16} \cdot \Delta_t \mid \Delta_t \geq \sqrt{\frac{n}{k}}, t<T}\geq
1-n^{-\frac{1}{33}}$.
\end{itemize}

We prove (i) and (ii) later, while we now show that they imply that $\{T \leq
\log n\}$ with constant probability.  Indeed, if we define $\mathcal{E}_t = \{t
\geq T\} \cup \left\{\Delta_{t+1}>\frac{\log n}{16}\Delta_t \right\}$,
$\mathcal{F} = \left\{\Delta_{t_0+1}\geq \sqrt{\frac{n}{k}}\right\}$ and
$\mathcal{E} =\bigcap_{t=t_0+1}^{t_0+1+\log n}\mathcal{E}_t \bigcap
\mathcal{F}$, we have that, conditioning on $m_{t_0}=m$, $\{T \leq \log n\}
\subseteq \mathcal{E}$. Indeed, if there exists some $t_1$ such that $t_0 \leq
t_1 \leq  t_0 + \log n$ for which,

\begin{equation*}
    \Delta_{t_1}>\max \left\{\frac{3n\log n}{k}-\Bar{m}, \Bar{m}-\frac{n}{k}\log \left(\frac{k}{4\log n}\right)\right\},
\end{equation*}
we have $m_{t_1} \not \in \yellow$. Moreover, we have

\begin{equation}
    \label{eq:out_yellow}
    \Prob{\mathcal{E} \mid m_{t_0} = m} = \prod_{t=t_0+1}^{t_0+1+\log n}\Prob{\mathcal{E}_t \mid \cap_{j=t_0+1}^{t-1}\mathcal{E}_j \cap \mathcal{F}, m_{t_0}=m}\Prob{\mathcal{F} \mid m_{t_0}=m}.
\end{equation}

From (i), we have $\Prob{\mathcal{F} \mid m_{t_0}=m}>c_1$ for a suitable positive
constant $c_1$, while (ii) implies that, for any $t$ such that $t_0+1 \leq t
\leq \log n+t_0+1$,
\[
\Prob{\mathcal{E}_t \mid \cap_{j=t_0+1}^{t-1}\mathcal{E}_j\cap \mathcal{F}, m_{t_0}=m} 
\geq \Prob{\Delta_{t+1} > \tfrac{\log n}{16}\Delta_t \mid \Delta_t \geq \sqrt{\tfrac{n}{k}}, t<T} 
\geq 1-n^{-\frac{1}{33}}\,.
\]
Hence, from \cref{eq:out_yellow} and the inequality below,
$\Prob{\mathcal{E}\mid m_{t_0} = m} \geq \left(1-n^{-\frac{1}{33}}\right)^{\log
n} \cdot c_1 \geq c$, for a suitable positive constant $c$.

\paragraph{Proof of (i)} First, consider the case in which $\Expc{U_{t_0} \mid
m_{t_0} = m}\geq \Bar{m}$ and recall that $m \in \yellow$. From
\cref{lem:preliminaries}(b) we have $\Expc{U_{t_0} \mid m_t = m} \leq
\frac{4n\log n}{k}$ and, from the reverse and standard Chernoff bounds we have
that, with constant probability, $\Bar{m} + \sqrt{\Bar{m}} \leq U_{t_0} \leq
\frac{n}{2}$. Since $\Bar{m} \geq \frac{n}{k}$, this implies 
$m_{t_0+1} \geq \Bar{m} + \sqrt{\frac{n}{k}}$.  Similarly, we can prove that,
when $\Expc{U_{t_0} \mid m_{t_0}=m}\leq \Bar{m}$, we have that $m_{t_0+1}\leq
\Bar{m}-\sqrt{\frac{n}{k}}$ with constant probability. Hence, with constant
probability we have $\Delta_{t_0+1}\geq \sqrt{\frac{n}{k}}$, which
concludes the proof of (i).

\paragraph{Proof of (ii)} Since $\Delta_t \geq \sqrt{\frac{n}{k}}$, we can
write $\Delta_t = \alpha \sqrt{\frac{n}{k}}$, for some $\alpha \geq 1$.  Note
that since $\Bar{m} \in \yellow$ and $k \geq \log n \sqrt{n}$, it satisfies
\begin{equation}\label{eq:char_fx_point}
\Bar{m} \geq \frac{n}{k}\log\left(\frac{k}{3\log n}\right) \geq \frac{n\log n}{3k}.
\end{equation}
First, consider the case when $m_t = \ell \geq \Bar{m}$, i.e., $m_t
=\Bar{m}+\Delta_t$. We have 
\begin{align}
    \notag \Expc{U_t \mid m_t = \ell} \leq \Bar{m} \cdot e^{-\frac{k}{n}\Delta_t} = \Bar{m}e^{-\alpha \sqrt{\frac{k}{n}}} &\leq \Bar{m} \cdot \max \left\{0.3,1-\tfrac{\alpha}{2}\sqrt{\tfrac{k}{n}}\right\}
    \\ &\leq \max \left\{0.3\Bar{m},\Bar{m}-\tfrac{\log n}{8}\alpha \sqrt{\tfrac{n}{k}} \right\}\label{eq:using_fx1} \\
    &\leq \Bar{m}-\frac{\log n}{8}\Delta_t,
    \label{eq:using_fx2}
\end{align}
where the first inequality follows from the fact that $e^{-\frac{1}{x}} \leq
\max \{0.3,1-\frac{1}{2x}\}$ for each $x \geq 0$ and the inequalities
\eqref{eq:using_fx1} and \eqref{eq:using_fx2} follow from
\eqref{eq:char_fx_point}.  Hence, $\Expc{\Bar{m}-U_t \mid m_t = m} \geq
\frac{\log n}{8}\Delta_t$ and so, from using the multiplicative Chernoff bound, we have
\begin{equation}
\Prob{\Bar{m}-U_t \leq \frac{\log n}{16}\Delta_t \mid m_t = \ell} 
\leq e^{-\frac{\log n}{64} \Delta_t} 
\leq n^{-\frac{1}{32}},
\end{equation}
where the last inequality follows since $k \leq 0.5n$. We proved that $U_t \leq
\Bar{m}-\frac{\log n}{16}\Delta_t$ with probability at least
$1-n^{-\frac{1}{32}}$ and, hence, using \cref{lem:preliminaries}(d) we have
$m_{t+1}=U_t$ with probability at least $1-n^{-\frac{1}{33}}$, whence $\Delta_{t+1} \geq \frac{\log n}{16}\Delta_t$ follows.

Now, consider the case in which $m_t = \ell \leq \Bar{m}$, and hence $m_t =
\Bar{m}-\Delta_t$.  In this case, for each $\ell \in \yellow$ we have
\begin{align}
    \Expc{U_t \mid m_t = \ell} \geq ne^{-k\frac{\ell}{n}-k\left(\frac{\ell}{n}\right)^2} \label{eq:using_taylor}  &\geq \Bar{m}e^{\frac{k}{n}\Delta_t}e^{-k\left(\frac{\ell}{n}\right)^2} \\ &\geq \Bar{m} e^{\frac{1}{2}\alpha\sqrt{\frac{k}{n}}}
    \label{eq:using_def_yellow}
    \\ & \geq \Bar{m} +\frac{\Bar{m}\alpha}{2}\sqrt{\frac{k}{n}}
    \label{eq:using_exp}
    \\ &\geq \Bar{m} + \frac{\log n}{6}\alpha \sqrt{\frac{n}{k}}
    \label{eq:using_def_fx2}
\end{align}
where~\eqref{eq:using_taylor} follows from~\cref{claim:exp_taylor},
\eqref{eq:using_def_yellow} follows from the fact that $\ell \in \yellow$ and
so $k\left(\frac{\ell}{n}\right)^2 \leq \frac{1}{2}\alpha \sqrt{\frac{k}{n}}$,
\eqref{eq:using_exp} follows from the fact that $e^{\frac{1}{x}}\geq
1+\frac{1}{x}$ for any $x > 0$. Finally, \eqref{eq:using_def_fx2} follows from
\eqref{eq:char_fx_point}. Next, from the multiplicative Chernoff bound we have

\begin{equation}
\Prob{U_t -\Bar{m} \leq \frac{\log n}{8}\Delta_t \mid m_t = \ell}
\leq e^{-\frac{\log n}{16} \Delta_t}\leq n^{-\frac{1}{8}},
\label{eq:ut_left_yellow}
\end{equation}
where the last inequality follows since $k\leq 0.5n$.  Moreover, from the
definition of \yellow, \cref{lem:preliminaries}(f) implies that $U_t \leq
0.2n$ w.h.p. From a union bound with \eqref{eq:ut_left_yellow}, we have that
$\Bar{m}+\frac{\log n}{8}\Delta_t \leq U_t \leq 0.2n$ with probability at least
$1-n^{-\frac{1}{9}}$. Hence, from \cref{lem:preliminaries}(d) we have 
$m_{t+1}=U_t$ w.h.p., so that $m_{t+1} \geq \Bar{m}+\frac{\log n}{8}\Delta_t$
with probability at least $1-n^{-\frac{1}{10}}$. We thus conclude that, also in
this case,  $\Delta_{t+1}>\frac{\log n}{16} \Delta_t$ with probability at least
$1-n^{-\frac{1}{32}}$.
\end{proof}

We now proceed with the proof of~\cref{lem:out_of_yellow}.

\begin{proof}[Proof of \cref{lem:out_of_yellow}]
In order to prove the lemma, we introduce the stopping time $T' \leq T$, which
can be seen (in some situations) as the time just before $T$:
\[
T' = \inf \{s>t_0: \Expc{U_{s} \mid m_s} \not \in \yellow \text{ or } m_s \not \in \yellow\}\,.
\]
In other words, $T'$ is the first time the expectation of the process, or the
process itself, leaves \yellow. Considering $T'$ instead of $T$ is helpful to
prove that, with constant probability, the minority does not land in the \red~area in round $T=T'+1$.  Consider the following events
\[
A_t = \{m_t \in \yellow\}\cup\{ \Expc{U_t \mid m_t} \in \yellow\},
\]
\[
B_t = \{U_t \in \blue_2 \cup \yellow \cup \green\}\cap \{W_t = 0\},
\]
and
\[
\mathcal{E}_t = (A_t)^C\cup B_t \cup \{t \geq T'\}, 
\qquad \mathcal{E}_1 = \bigcap_{t=t_0}^{t_0+\log n} \mathcal{E}_t 
\quad \text{and} \quad 
\mathcal{E}_2 = \{T' \leq \log n\}. 
\]
We have  
\begin{align*}
\Prob{\mathcal{E}_1^C \mid m_{t_0}=m} 
& \leq \sum_{t=t_0}^{t_0+\log n}
\Prob{\mathcal{E}_t^C \mid m_{t_0}= m} \\ 
& \leq \sum_{t=t_0}^{t_0+\log n}\Prob{(B_t)^C \mid A_t,t<T',m_{t_0}=m} 
\\ & \leq \frac{2\log n}{n^2},
\end{align*}
where the last inequality follows from \cref{lem:mt_given_expectation} taking
the interval $I=\yellow$ and noticing that $2I \subseteq \blue_2 \cup \yellow
\cup \green$, from \cref{lem:preliminaries}(d) and a union bound.

From~\cref{lem:expectation_in_yellow2} and noticing that $T'\leq T$, we have
that there exists a constant $c_2>0$ such that $\Prob{\mathcal{E}_2 \mid
m_{t_0}=m} \geq c_2$. Hence, we have  $\Prob{\mathcal{E}_1 \cap
\mathcal{E}_2 \mid m_{t_0}=m} \geq c_2-\frac{1}{n} >0$, whenever $n$ is sufficiently
large. In the rest of the proof, we assume that the event $\mathcal{E}_1 \cap
\mathcal{E}_2$ holds. We remark that the event $\mathcal{E}_1 \cap
\mathcal{E}_2$ is independent of what happens in round $T'+1$.

We now consider two cases, according to the realization of $T'$. First, we assume
that $m_{T'-1} = \gamma \in \orange$ and that $\Expc{U_{T'} \mid m_{T-1}=\gamma
}\not \in \yellow$. In this case, we note that, since $\gamma \in \yellow$,
from \cref{lem:preliminaries}(c) and the strong Markov property, we have
\[
\frac{1}{n^2}\leq \Expc{U_{T'} \mid m_{T'-1} =\gamma} \leq  \frac{3n\log n}{k}\,.
\]
Since we assumed $\Expc{U_{T'} \mid m_{T'-1} =\gamma} \not \in \yellow$,
it follows that $\Expc{U_{T'} \mid m_{T'-1}} \in \blue_1 \cup \red \cup \blue_2
\cup \green$.  We first consider the case in which $\Expc{U_{T'} \mid m_{T'-1}=
\gamma} \in \red$. From \cref{lem:red_to_blue}, we have $U_{T'} \in
\blue_1\cup \blue_2$ with constant probability, while
\cref{lem:preliminaries}(d) implies that $m_{T'+1} \in \blue_1\cup \blue_2$,
thus proving that, in this case, $T=T+1$ and that $m_{T'+1} = m_{T} \in \blue_1 \cup
\blue_2$ with constant probability.  Now assume that $\Expc{U_{T'} \mid
m_{T'-1}} \in \blue_1 \cup \blue_2 \cup \green$. We can apply
\cref{lem:mt_given_expectation} to the intervals $I_1 = \blue_1$, $I_2=
\blue_2$ and $I_3 = \green$, obtaining that, with constant probability, $U_{T'}
\in \blue_1 \cup \blue_2 \cup \green$. Considering also
\cref{lem:preliminaries}(d), we have that $T=T'+1$ and that
$U_{T'}=m_{T'+1}= m_T \in \blue_1 \cup \blue_2\cup \green$.

Suppose next that $T'$ is such that $m_{T'-1}= \gamma \in \yellow$, $\Expc{U_{T}
\mid m_{T'-1}=\gamma}\in \yellow$ but $m_{T'} \not \in \yellow$. In this case,
the event $\mathcal{E}_1 \cap \mathcal{E}_2$ implies that $T=T'+1$ and
$U_{T'}=m_{T'+1}=m_T \in \blue_1 \cup \green$.
\end{proof}


\subsection{The blue area}

In this section, we prove \cref{lem:out_of_blue}.

\begin{proof}[Proof of \cref{lem:out_of_blue}]
We first assume that $m \in \blue_1$. From \cref{lem:preliminaries}(b) and since $m \leq \frac{n\log 2}{k}-\sqrt{\frac{n\log 2}{k}}$, we obtain the following bound:
\begin{align}
    \Expc{U_t \mid m_t = m} \geq e^{-k\frac{m}{n}-k\left(\frac{m}{n}\right)^2} 
    &\geq \frac{n}{2}e^{\sqrt{\frac{k\log 2}{n}}}e^{-k\left(\frac{\log 2}{k}-\sqrt{\frac{\log 2}{kn}}\right)^2} 
    \label{eq:using_me} \\ \notag &\geq  \frac{n}{2}e^{\frac{1}{2}\sqrt{\frac{k\log 2}{n}}} \\ &\geq \frac{n}{2} +\frac{n}{4}\sqrt{\frac{k\log 2}{n}},
    \label{eq:exp_using}
\end{align}
where \eqref{eq:using_me} follows from \cref{claim:exp_taylor}, and \eqref{eq:exp_using} follows from the fact that $e^{\frac{1}{x}}\geq 1+\frac{1}{x}$ for any $x>0$. 
On the other side, from \cref{lem:preliminaries}(b), we have that, since $m_t \geq 1$, $\Expc{U_t \mid m_t = m} \leq ne^{-\frac{k}{n}} \leq n-\frac{k}{3} \leq n-\frac{n\log n}{k}$, since $k \geq 185\sqrt{n \log n}$. Hence, we proved that $\Expc{n-U_t \mid m_t = m} \in \green$ and, from \cref{lem:mt_given_expectation} applied to $I=\green$ and \cref{lem:preliminaries}(d), we have that $m_{t+1} = n-U_t \in \green$ with constant probability.

We now assume that~$m \in \blue_2$, and hence $  \frac{n\log 2}{k} +\sqrt{\frac{n\log 2}{k}} \leq m \leq \frac{n}{k} \log \pa{\frac{k}{4 \log n}}.$
In this case, from \cref{lem:preliminaries}(b), we have that
\begin{align*}
    \Expc{U_t \mid m_t = m} \leq ne^{-\frac{k}{n}m}&\leq \frac{n}{2}e^{-\sqrt{\frac{k\log 2}{n}}} +\frac{1}{n^2}\\ & \leq \frac{n}{2}-\frac{n}{5}\sqrt{\frac{k\log 2}{n}},
\end{align*}
where the last inequality follows since $e^{-1/x}\leq 1-\frac{1}{2x}$ for any $x \geq 1$. On the other side, 
\begin{align*}
    \Expc{U_t \mid m_t = m} \geq ne^{-k\frac{m}{n}-k\left(\frac{m}{n}\right)^2} &\geq \frac{4n\log n}{k} e^{-\log^2\left(\frac{k}{4\log n}\right)\frac{1}{k}} \\& \geq \frac{3n\log n}{k},
\end{align*}
Hence, $\Expc{U_t \mid m_t = m} \in \green$ and, from \cref{lem:mt_given_expectation} applied to the interval $I= \green$ and \cref{lem:preliminaries}(d), we have that $m_{t+1} = U_t \in \green$ with constant probability.
\end{proof}

\subsection{The green area}

\begin{proof} [Proof of \cref{lem:out_of_green}]
By definition of $\green$, we can apply \cref{lem:preliminaries}(c) and \cref{lem:preliminaries}(d), obtaining that $m_{t+1} = 0$ w.h.p.
We notice that, in this case, $\mathcal{O}_{t+1}=1-\mathcal{O}_t$, since every node adopts the ex-minority opinion.
\end{proof}

\subsection{The Bit-Dissemination problem}\label{subse:bit_dissemination}

\begin{proof} [Proof of \cref{lem:wrong_consensus}]
    By assumption, $m_t =m= 1$ and the opinion of the source agent is $\ell = \minop_t$. Since, from the fact that $k \leq 0.5n$, we have that $m =1 \leq \frac{n}{3k}$, we have from \cref{lem:preliminaries}(e)  $\mathcal{O}_{t+1}=\mathcal{O}_t = \ell$ w.h.p.~and, from \cref{lem:out_of_blue}, that $m_{t+1} \in \green$ with constant probability. Finally, from \cref{lem:out_of_green}, we have that $m_{t+2}= 0$ and that $\mathcal{O}_t = 1-\ell$, and therefore the opinion of the majority is the same opinion of the source.
    \end{proof}

\section{\texorpdfstring{$k$-Minority}{k-Minority} in the Sequential Communication Model} \label{sec:lowerbound_seq}
In this section, we analyze the \minority\ process on the uniform-random sequential model, proving \cref{thm:lb_sequential}.

\subsection{Proof of \texorpdfstring{ \cref{thm:lb_sequential}}{Theorem 4}}
In this section, we will denote by $X_t$ the number of agents with opinion $1$ in round $t$.
In addition, we will use the following notations for birth-death chains on~$\{0,\ldots,n\}$:
\begin{align*}
    p_i &= \Prob{X_{t+1} = i+1 \mid X_t = i}, \\
    q_i &= \Prob{X_{t+1} = i-1 \mid X_t = i},  \\
    r_i &= \Prob{X_{t+1} = i \mid X_t = i}, \text{ and } \\
    \tau_{i,j}&= \Expc{ \inf \left\{ t \in \mathbb{N}, X_t = j \right\} \mid X_0 = i}.
\end{align*}
We start by recalling a classical lower bound on the expected time needed for a birth-death chain to travel from state~$0$ to $n$.
\begin{lemma} \label{lem:lower_bound_hit}
    Consider any birth-death chain on~$\{0,\ldots,n\}$. For~$1 \leq i \leq j \leq n$, let $a_i = q_i/p_{i-1}$ and $a(i:j) = \prod_{k=i}^j a_k$. Then, $\tau_{0,n} \geq \sum_{1 \leq i < j \leq n} a(i:j)$.
\end{lemma}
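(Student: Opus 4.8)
The plan is to write $\tau_{0,n}$ as a telescoping sum of one-step up-crossing times, solve a first-order linear recursion for each such time, and then carefully exploit the factor $1/p_{j-1}$ produced by the recursion. Since the chain moves by steps in $\{-1,0,+1\}$, starting from $0$ it must visit $1$ before $2$, $2$ before $3$, and so on; hence by the strong Markov property
\[
\tau_{0,n}=\sum_{j=1}^{n}\tau_{j-1,j},
\]
where $\tau_{j-1,j}$ denotes the expected number of rounds to first reach $j$ starting from $j-1$. If some $p_{j-1}=0$ then $\tau_{0,n}=\infty$ and the bound is vacuous, so we may assume $p_0,\dots,p_{n-1}>0$.

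\textbf{Recursion for the up-crossing times.} I would obtain the recursion by first-step analysis from state $i$: with probability $p_i$ we jump to $i+1$ and are done, with probability $r_i$ we stay at $i$ and must still cross from $i$ to $i+1$, and with probability $q_i$ we drop to $i-1$ and must cross first from $i-1$ to $i$ and then from $i$ to $i+1$. Using $\tau_{i-1,i+1}=\tau_{i-1,i}+\tau_{i,i+1}$ and $p_i+q_i+r_i=1$, this rearranges to $p_i\,\tau_{i,i+1}=1+q_i\,\tau_{i-1,i}$, with boundary value $\tau_{0,1}=1/p_0$ since $q_0=0$. Setting $v_i=p_i\,\tau_{i,i+1}$ and recalling $a_i=q_i/p_{i-1}$, this becomes $v_i=1+a_i v_{i-1}$ with $v_0=1$, whose solution is
\[
v_i \;=\; 1+\sum_{\ell=1}^{i} a(\ell:i),
\qquad\text{so that}\qquad
\tau_{j-1,j}=\frac{1}{p_{j-1}}\Bigl(1+\sum_{\ell=1}^{j-1}a(\ell:j-1)\Bigr).
\]

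\textbf{Key step and conclusion.} The crucial point is that the $1/p_{j-1}$ in front of the sum should not merely be discarded (via $p_{j-1}\le1$); it should instead be pushed into each summand. Since $a(\ell:j)=a(\ell:j-1)\cdot a_j=a(\ell:j-1)\cdot q_j/p_{j-1}$ and $q_j\le 1$, we get $a(\ell:j-1)/p_{j-1}=a(\ell:j)/q_j\ge a(\ell:j)$ for every $\ell\le j-1$, and therefore
\[
\tau_{j-1,j}\;\ge\;\sum_{\ell=1}^{j-1}\frac{a(\ell:j-1)}{p_{j-1}}\;\ge\;\sum_{\ell=1}^{j-1}a(\ell:j).
\]
Summing over $j=1,\dots,n$ and reindexing the double sum gives $\tau_{0,n}\ge\sum_{j=1}^{n}\sum_{\ell=1}^{j-1}a(\ell:j)=\sum_{1\le i<j\le n}a(i:j)$, as claimed. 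I do not expect a genuine obstacle here, but the one thing one has to notice is exactly this last manoeuvre: the cruder estimate $\tau_{j-1,j}\ge v_{j-1}$ obtained from $p_{j-1}\le1$ alone is not strong enough, and one needs to combine the $1/p_{j-1}$ factor with $q_j\le1$ to lengthen each product $a(\ell:j-1)$ into $a(\ell:j)$.
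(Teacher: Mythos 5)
Your proof is correct and follows essentially the same route as the paper: both reduce to the classical one-step identity $\tau_{j-1,j}=\frac{1}{q_j}\sum_{i=1}^{j}a(i:j)$ (you derive it via first-step analysis and the recursion $v_i=1+a_iv_{i-1}$, whereas the paper cites it from Levin--Peres in the form $\tau_{\ell-1,\ell}=\frac{1}{q_\ell w_\ell}\sum_{i<\ell}w_i$), and both then conclude by discarding a factor $1/q_j\ge 1$. The only cosmetic difference is that you drop the diagonal terms $a(j:j)$ early, which the paper also implicitly discards when stating the final bound over $i<j$.
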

\begin{proof} [Proof (for the sake of completeness)]
    Let~$w_0 = 1$ and for $i \in \{1,\ldots,n\}$, let $w_i = 1/a(1:i)$. The following result is well-known (see, e.g., Eq.~(2.13) in \cite{levin2017markov}). For every~$\ell \in \{1,\ldots,n\}$,
    \begin{equation*} 
        \tau_{\ell-1,\ell} = \frac{1}{q_\ell w_\ell} \sum_{i=0}^{\ell-1} w_i.
    \end{equation*}
    Thus,
    \begin{equation*}
    	\tau_{\ell-1,\ell} = \frac{1}{q_\ell} \sum_{i=0}^{\ell-1} \frac{a(1:\ell)}{a(1:i)} = \frac{1}{q_\ell} \sum_{i=1}^{\ell} a(i:\ell) \geq \sum_{i=1}^{\ell} a(i:\ell).
    \end{equation*}
    Eventually, we can write
    \begin{equation*}
    	\tau_{0,n} = \sum_{\ell = 1}^n \tau_{\ell-1,\ell} \geq \sum_{1 \leq i < j \leq n} a(i:j),
    \end{equation*}
    which concludes the proof of \cref{lem:lower_bound_hit}.
\end{proof}

Next, we identify some regions of the configuration space where $\minority$ has a very strong drift toward the balanced configuration $X_t = n/2$.
\begin{lemma} \label{lem:proba_in_boundary_region}
    Let~$\alpha,\beta$ such that $0<\alpha<\beta<1/2$. There exists a constant~$c=c(\alpha,\beta)>0$, such that
    for every $n$ large enough,
    for every~$i \in [\alpha n, \beta n]$,
    
    \begin{equation*}
        \Prob{X_{t+1} = i+1 | X_t = i} \geq  \frac{1-\beta}{2} \, 
        \mbox{ and } \ \Prob{X_{t+1} = n- i-1 | X_t = n-i} \geq \frac{1-\beta}{2};
 \end{equation*} 
    
    and 
    \begin{equation*}
        \Prob{X_{t+1} = i-1 | X_t = i}  \leq \exp \pa{-c \, k} \, \mbox{ and } \ \Prob{X_{t+1} = n-i+1 | X_t = n-i} \leq \exp \pa{-c \, k}.
    \end{equation*}
\end{lemma}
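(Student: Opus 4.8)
The plan is to write down the four one-step transition probabilities of the birth-death chain $\{X_t\}$ explicitly, and then halve the work by invoking the symmetry of \minority\ under swapping the two opinions. Fix $i\in[\alpha n,\beta n]$; since $\beta<1/2$, for $n$ large we have $i<(n-1)/2$, so opinion~$1$ is the strict minority in configuration $X_t=i$. Let $v$ be the node activated in round~$t$ and, conditioned on the opinion of $v$, let $S$ be the number of opinion-$1$ nodes among the $k$ neighbours $v$ samples. The event $\{X_{t+1}=i+1\}$ occurs exactly when $v$ holds opinion~$0$ (probability $(n-i)/n$, and then $S$ counts the $1$'s in a $k$-sample drawn from the $i$ such nodes among the $n-1$ others) and $v$ adopts opinion~$1$, i.e.\ $S<k/2$; symmetrically $\{X_{t+1}=i-1\}$ occurs exactly when $v$ holds opinion~$1$ (probability $i/n$, and then $S$ is drawn from the $i-1$ opinion-$1$ nodes among the $n-1$ others) and $v$ adopts opinion~$0$, i.e.\ $S>k/2$. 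Hence $\Prob{X_{t+1}=i+1\mid X_t=i}=\tfrac{n-i}{n}\Prob{S<k/2}$ and $\Prob{X_{t+1}=i-1\mid X_t=i}=\tfrac{i}{n}\Prob{S>k/2}$ with the appropriate law for $S$. Since relabelling the opinions is an automorphism of the process, $\Prob{X_{t+1}=j\mid X_t=i}=\Prob{X_{t+1}=n-j\mid X_t=n-i}$ for all $i,j$; taking $j=i\pm1$ turns the two ``$n-i$'' inequalities of the lemma into the two ``$i$'' ones, so it suffices to prove those.

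For $\Prob{X_{t+1}=i+1\mid X_t=i}\geq\tfrac{1-\beta}{2}$ I would bound $\tfrac{n-i}{n}\geq 1-\beta$ and show $\Prob{S<k/2}\geq\tfrac12$, where $S$ is the number of $1$'s in a $k$-sample drawn from $n-1$ nodes of which $i<(n-1)/2$ hold opinion~$1$. Since $k$ is odd this equals $\Prob{S\leq\tfrac{k-1}{2}}$, and $\Prob{S\leq\tfrac{k-1}{2}}+\Prob{S\geq\tfrac{k+1}{2}}=1$. Now $\Prob{S\geq\tfrac{k+1}{2}}=\Prob{k-S\leq\tfrac{k-1}{2}}$, and $k-S$ (the count of $0$'s in the same sample) is the analogous quantity drawn from the $n-1-i>i$ opinion-$0$ nodes, so it stochastically dominates $S$; therefore $\Prob{S\geq\tfrac{k+1}{2}}\leq\Prob{S\leq\tfrac{k-1}{2}}$, and since they sum to~$1$ we get $\Prob{S\leq\tfrac{k-1}{2}}\geq\tfrac12$. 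Multiplying the two factors gives the claim. (This argument is insensitive to whether the $k$ neighbours are drawn with or without replacement.)

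For $\Prob{X_{t+1}=i-1\mid X_t=i}\leq e^{-ck}$ I would simply use $\tfrac{i}{n}\leq 1$ and show $\Prob{S>k/2}\leq e^{-ck}$, where $S$ is the count of $1$'s in a $k$-sample drawn from the $i-1\leq\beta n$ opinion-$1$ nodes among the $n-1$ others. Its mean is $k\cdot\tfrac{i-1}{n-1}\leq k\cdot\tfrac{\beta n}{n-1}$, which for $n$ large is at most $\bigl(\tfrac12-\tfrac{\delta}{2}\bigr)k$ with $\delta:=\tfrac12-\beta>0$; thus $\{S>k/2\}$ is a deviation of at least $\tfrac{\delta}{2}k$ above the mean, and the Hoeffding/Chernoff tail bound (which holds for sampling without replacement, or by first comparing with the corresponding binomial) makes its probability at most $\exp(-\delta^2 k/2)$. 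Taking $c:=\delta^2/2=(1-2\beta)^2/8$, which depends only on $\beta$, completes the proof.

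I do not expect a genuine obstacle. The only points requiring care are bookkeeping: absorbing the $\tfrac{n}{n-1}$ ratios and the $i\mapsto i-1$ shift into the slack afforded by $\beta<1/2$ (this is exactly why the statement is restricted to $n$ large), and being careful to invoke a concentration inequality genuinely valid for the hypergeometric law rather than only for the binomial. Modulo these details, the lemma is an explicit one-step computation packaged with the opinion-swap symmetry.
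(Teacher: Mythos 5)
Your overall architecture --- factor each transition as (probability the activated node holds a given opinion) times (probability it adopts the other one), then use the opinion-swap symmetry to reduce the four bounds to two --- is exactly the paper's. The gap is in your characterization of the adoption events. You identify ``$v$ adopts opinion $1$'' with $\{S<k/2\}$ and ``$v$ adopts opinion $0$'' with $\{S>k/2\}$, but the \minority\ rule treats unanimity specially: a node that samples $k$ copies of the same opinion adopts \emph{that} opinion. So ``adopts $0$'' is $\{k/2<S<k\}\cup\{S=0\}$ and ``adopts $1$'' is $\{0<S<k/2\}\cup\{S=k\}$. This breaks both bounds. For the upper bound you miss the event $\{S=0\}$ (unanimity of the majority opinion), whose probability is of order $(1-\tfrac{i}{n})^k\in[(1-\beta)^k,(1-\alpha)^k]$; this is not dominated by your Chernoff term. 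Concretely, for $\alpha=0.01$, $\beta=0.05$ and $i=\beta n$, the missed contribution to $\Prob{X_{t+1}=i-1\mid X_t=i}$ is about $\beta(1-\beta)^k=0.05\,e^{-0.0513k}$, which for large $k$ exceeds your claimed bound $e^{-(1-2\beta)^2k/8}=e^{-0.101k}$. This is precisely why the lemma's constant must be $c(\alpha,\beta)$ and why the paper takes $c=\min\{\log\tfrac{1}{1-\alpha},\,2(\tfrac12-\beta)^2\}$; your observation that your $c$ depends only on $\beta$ should have been a warning sign. For the lower bound, the correct adoption event loses the mass of $\{S=0\}$ and gains only the strictly smaller $\Prob{S=k}$, so your stochastic-domination estimate $\Prob{S<k/2}\ge\tfrac12$ leaves no slack: you end up with $(1-\beta)\bigl(\tfrac12-\Prob{S=0}+\Prob{S=k}\bigr)<\tfrac{1-\beta}{2}$.

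Both defects are repaired by arguing as the paper does: bound $\Prob{\text{adopt }0}$ by a union bound over ``unanimity of $0$'' (probability at most $(1-\alpha)^k$) and ``$0$ is a strict minority of the sample'' (probability at most $e^{-2k(1/2-\beta)^2}$ by the same Chernoff estimate you already use), getting $\Prob{\text{adopt }0}\le 2e^{-ck}$ with the two-parameter $c$; the complementary bound $\Prob{\text{adopt }1}\ge 1-2e^{-ck}\ge\tfrac12$ for $k$ large enough then gives the first inequality with room to spare. In short: add the unanimity term to the upper bound, and replace your ``$\ge\tfrac12$ by domination'' step with the quantitative Chernoff estimate so the lower bound can absorb the correction. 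Your care about hypergeometric versus binomial sampling is fine and is not where the problem lies.
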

\begin{proof}
    Let $i \in [\alpha n, \beta n]$.
    Let~$A_t$ the event: ``In round~$t$, the activated agent sees a unanimity of~$0$''.
    Since $i \geq \alpha n$,
    \begin{equation*}
        \Prob{A_t \mid X_t = i} \leq  (1-\alpha)^k.
    \end{equation*}
    Let~$B_t$ the event: ``In round~$t$, the activated agent sees a minority of~$0$''.
    Let $Y$ be a random variable following a binomial distribution with parameters~$(k,\beta)$.
    Since $i \leq \beta n$, by the additive Chernoff bound (\cref{thm:additive_chernoff}),
    \begin{align*} 
        \Prob{B_t \mid X_t = i} &\leq \Prob{Y \geq k/2} \\
        &= \Prob{Y \geq \beta k + k(1/2-\beta)} \\
        &\leq \exp \pa{ \frac{-2k^2(1/2-\beta)^2}{k}} \\
        &= \exp \pa{-2k\pa{1/2-\beta}^2}.
    \end{align*}
    Let~$C_t$ the event: ``In round~$t$, the activated agent adopts opinion~$0$''.
    By construction of the Minority protocol, $C_t \subseteq A_t \cup B_t$.
    Taking the union bound, we get
    \begin{equation*}
        \Prob{C_t \mid X_t = i} \leq  (1-\alpha)^k + \exp \pa{-2k\pa{1/2-\beta}^2}.
    \end{equation*}
    By setting
    \begin{equation*}
        c = \min \left\{ \log\pa{\frac{1}{1-\alpha}} , 2\pa{\frac{1}{2}-\beta}^2 \right\} > 0,
    \end{equation*}
    we obtain
    \begin{equation*}
        \Prob{C_t \mid X_t = i} \leq 2 \exp \pa{-c \, k}.
    \end{equation*}
    Therefore, we have
    \begin{equation*}
        \Prob{X_{t+1} = i-1 \mid X_t = i} = \frac{i}{n} \cdot \Prob{C_t \mid X_t = i} \leq \beta \cdot 2 \exp \pa{-c \, k} \leq \exp \pa{-c \, k}.
    \end{equation*}
    Moreover, for $k$ large enough,
    \begin{equation*}
        \Prob{X_{t+1} = i+1 \mid X_t = i} 
        = \frac{n-i}{n} \cdot (1-\Prob{C_t \mid X_t = i})
        \geq (1-\beta) \pa{1-2 \exp \pa{-c \, k}} \geq \frac{1-\beta}{2}.
    \end{equation*}
    By symmetry of $\minority$, we obtain the same bounds for the case $X_t = n-i$,
    which concludes the proof of \cref{lem:proba_in_boundary_region}.
\end{proof}

Eventually, we can combine the above results into a lower bound on the convergence time of $\minority$ in the sequential setting.
\begin{proof} [Proof of \cref{thm:lb_sequential}]
    Let $m=n/6$.
    Let $Z = (Z_s)$ be a birth-death chain on $\{0,\ldots,m\}$, defined as follows: for every $\delta \in \{-1,0,1\}$, for every~$i \in \{1,\ldots,m-1\}$,
    \begin{equation*}
        \Prob{Z_{s+1} = i+\delta \mid Z_s = i} =
        \Prob{X_{t+1} = \frac{n}{2}+m+i+\delta \mid X_t = \frac{n}{2}+m+i}.
    \end{equation*}
    Note that, by symmetry, this is also equal to
    $\Prob{X_{t+1} = \frac{n}{2}-m-i-\delta \mid X_t = \frac{n}{2}-m-i}$.
    Intuitively, making one step {\em forward} as $Z_s$, corresponds to making one step {\em away from $n/2$} as $X_t$.
    In particular, with this construction, we have that
    \begin{equation*}
        \Expc{\inf \{t \in \mathbb{N}, |X_t-n/2| \geq 2m \} \mid X_0 = n/2 } \geq \Expc{\inf \{s \in \mathbb{N}, Z_s = m \} \mid Z_0 = 0}.
    \end{equation*}    
    By \cref{lem:proba_in_boundary_region} applied with $\alpha = 1/6$, $\beta = 2/6$, there exists a constant $c>0$ such that for every~$i \in \{1,\ldots,m\}$, $Z$ satisfies $p_i \leq \exp \pa{-c \, k}$ and $q_i \geq 1/3$.
    As a consequence, for every~$i \in \{1,\ldots,m\}$, $q_i/p_{i-1} \geq \exp \pa{c \, k}/3$.
    Finally, by \cref{lem:lower_bound_hit},
    \begin{equation*}
        \Expc{\inf \{s \in \mathbb{N}, Z_s = m \} \mid Z_0 = 0} \geq \sum_{1 \leq i < j \leq n} a(i:j) \geq \pa{ \frac{\exp \pa{c \, k}}{3} }^{n-1} = \exp \pa{\Omega(n \, k)},
    \end{equation*}
    which concludes the proof of \cref{thm:lb_sequential}.
\end{proof}

\section{Open Questions} \label{sec::open}

We see two main questions that deserve future work.
The first question is more technical and concerns the role of parameter $k$, i.e. the size of the sample used by nodes to apply the \minority\ rule. Despite our proofs of \cref{thm:main} and \cref{cor:main} require $k = \tilde\Omega(\sqrt{n})$ to achieve $O(\mathrm{polylog} n)$ convergence time, preliminary experimental results suggest that poly$\log n$ convergence time is possible even when   $k = \Omega(\mathrm{polylog} n)$, and it would be interesting to establish such bounds rigorously or prove that actual bounds might actually be worse. So far, we are able to prove that if $k=O(1)$ then any passive-communication stateless dynamics in the $k$-$\pull$ model requires $n^{\Omega(1)}$ rounds to solve the bit-dissemination problem.

The second, more general question concerns the gap between the synchronous parallel and the asynchronous sequential models  in \textit{consensus} dynamics, or possibly in dynamics for other fundamental distributed  tasks such as \textit{broadcast}. A $\tilde O(n)$ gap between the two models is so common that, in order to analyze a parallel dynamics with (presumed) convergence time poly$\log n$, it is common to first analyze its sequential implementation, to establish an $O(n \, {\rm polylog } \, n)$
bound on converge time, and then to adapt the analysis to the parallel setting. 
Our results show a natural, simple dynamics for which the usual linear gap does not hold,  showing that it is not possible to establish a black-box result, whereby results for the  asynchronous sequential setting immediately carry over to the parallel synchronous case.
We believe it would be very interesting to derive necessary and/or sufficient conditions under which a linear gap is guaranteed to exist.

\paragraph{Acknowledgments.} The authors wish to thank Amos Korman and Emanuele Natale for very helpful discussions on the topic.

\bibliographystyle{plain}
\bibliography{main}

\clearpage

\appendix

\section{Tools}

\begin{theorem}[Chernoff's Inequality]
\label{thm:additive_chernoff}
Let $X=\sum_{i=1}^n X_i$, where $X_i$ with $i \in [n]$ are independently
distributed in $[0,1]$. Let $\mu=\Expc{X}$ and $\mu_- \leq \mu \leq \mu_+$.
Then:
\begin{itemize}
    \item for  every  $t>0$
    \[
    \Prob{X>\mu_+ +t}\leq  e^{-2t^2/n}  \quad \text{and} 
 \quad \Prob{X<\mu_- -t}\leq e^{-2t^2/n} ;
    \]
    \item for $\varepsilon>0$
    \[
    \Prob{X>(1+\varepsilon)\mu_+}\leq e^{-\frac{\epsilon^2}{2}\mu_+} 
    \ \text{ and } \
    \Prob{X<(1-\epsilon)\mu_-}\leq e^{-\frac{\epsilon^2}{2}\mu_-} . 
    \]
\end{itemize}

\end{theorem}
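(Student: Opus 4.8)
The plan is to use the classical exponential-moment (Chernoff) method. For any $\lambda>0$ and any real $a$, Markov's inequality applied to the nonnegative variable $e^{\lambda X}$ gives $\Prob{X>a}\le e^{-\lambda a}\,\Expc{e^{\lambda X}}$, and by independence $\Expc{e^{\lambda X}}=\prod_{i=1}^n\Expc{e^{\lambda X_i}}$. The whole argument then reduces to bounding the one-variable factors $\Expc{e^{\lambda X_i}}$ in two ways, one for each displayed group of inequalities. The lower-tail estimates follow symmetrically (for the additive pair, by applying the upper-tail version to the variables $1-X_i\in[0,1]$, whose sum is $n-X$), so I would only treat the upper tails, and I would first pass to the larger event $\{X>\mu+t\}\supseteq\{X>\mu_++t\}$ (using $\mu\le\mu_+$) to work with deviations from the true mean.

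For the additive (Hoeffding-type) bound, the one genuinely non-routine ingredient is Hoeffding's lemma: if $Y\in[0,1]$ with $\Expc{Y}=p$, then $\Expc{e^{\lambda(Y-p)}}\le e^{\lambda^2/8}$ for all $\lambda\in\mathbb{R}$. I would prove it by bounding $e^{\lambda y}\le(1-y)+ye^{\lambda}$ on $[0,1]$ by convexity, taking expectations, and analysing $\psi(\lambda)=-\lambda p+\log(1-p+pe^{\lambda})$: one has $\psi(0)=\psi'(0)=0$ and $\psi''(\lambda)=u(1-u)\le\tfrac14$ with $u=pe^{\lambda}/(1-p+pe^{\lambda})\in[0,1]$, so Taylor's theorem gives $\psi(\lambda)\le\lambda^2/8$. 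Multiplying over $i$ yields $\Expc{e^{\lambda(X-\mu)}}\le e^{n\lambda^2/8}$, hence $\Prob{X-\mu>t}\le e^{-\lambda t+n\lambda^2/8}$, and the choice $\lambda=4t/n$ gives the stated $e^{-2t^2/n}$.

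For the multiplicative bound, I would instead bound, using convexity of $x\mapsto e^{\lambda x}$ on $[0,1]$, $\Expc{e^{\lambda X_i}}\le 1+(e^{\lambda}-1)\Expc{X_i}\le\exp\!\big((e^{\lambda}-1)\Expc{X_i}\big)$ (the last step by $1+z\le e^{z}$), so that $\Expc{e^{\lambda X}}\le\exp\!\big((e^{\lambda}-1)\mu\big)\le\exp\!\big((e^{\lambda}-1)\mu_+\big)$ for $\lambda>0$. Plugging this into the exponential-Markov bound and optimizing with $\lambda=\log(1+\varepsilon)$ gives the sharp form $\Prob{X>(1+\varepsilon)\mu_+}\le\big(e^{\varepsilon}/(1+\varepsilon)^{1+\varepsilon}\big)^{\mu_+}$, which I would then weaken to the clean exponential form stated in the theorem using a standard elementary lower bound on $(1+\varepsilon)\log(1+\varepsilon)-\varepsilon$ in terms of $\varepsilon^2$ (the lower tail being handled by the mirror choice $\lambda=-\log(1-\varepsilon)$).

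The hard part is Hoeffding's lemma — concretely, the uniform bound $\psi''\le\tfrac14$ on the cumulant generating function of a $[0,1]$-valued random variable; everything else (the exponential-Markov step, factorization by independence, the convexity inequalities, and optimizing over $\lambda$) is routine calculus. A secondary, purely bookkeeping matter is verifying the scalar inequalities that turn the sharp ``$e^{\varepsilon}/(1+\varepsilon)^{1+\varepsilon}$'' bound into the quadratic-exponent form written here.
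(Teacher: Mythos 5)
The paper states this result as a standard tool in the appendix and gives no proof of it, so there is nothing to compare against on the paper's side; your exponential-moment strategy (Hoeffding's lemma with $\psi''\leq\frac14$ for the additive pair, the convexity bound $\Expc{e^{\lambda X_i}}\leq\exp\bigl((e^{\lambda}-1)\Expc{X_i}\bigr)$ for the multiplicative pair, reduction to the true mean via $\mu\leq\mu_+$, and the reflection $X_i\mapsto 1-X_i$ for lower tails) is exactly the standard route one would take, and it correctly delivers the two additive bounds and the multiplicative lower tail, where $(1-\varepsilon)\log(1-\varepsilon)+\varepsilon\geq\varepsilon^2/2$ does hold.

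The one genuine gap is in the step you dismiss as ``purely bookkeeping'': for the multiplicative \emph{upper} tail there is no ``standard elementary lower bound on $(1+\varepsilon)\log(1+\varepsilon)-\varepsilon$ in terms of $\varepsilon^2$'' of the required strength. Setting $g(\varepsilon)=(1+\varepsilon)\log(1+\varepsilon)-\varepsilon-\varepsilon^2/2$, one has $g(0)=0$ and $g'(\varepsilon)=\log(1+\varepsilon)-\varepsilon\leq 0$, so in fact $(1+\varepsilon)\log(1+\varepsilon)-\varepsilon\leq\varepsilon^2/2$ for all $\varepsilon>0$, with strict inequality off the origin; the sharp bound $\bigl(e^{\varepsilon}/(1+\varepsilon)^{1+\varepsilon}\bigr)^{\mu_+}$ therefore cannot be weakened to $e^{-\varepsilon^2\mu_+/2}$. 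Worse, the displayed upper-tail inequality is simply false for large $\varepsilon$: for $n$ Bernoulli$(1/n)$ variables ($\mu=\mu_+=1$) and $\varepsilon=10$, the left-hand side is roughly $1/12!\approx e^{-21}$ while the right-hand side is $e^{-50}$. What the Chernoff method actually gives is $e^{-\varepsilon^2\mu_+/(2+\varepsilon)}$ for all $\varepsilon>0$, or $e^{-\varepsilon^2\mu_+/3}$ for $0<\varepsilon\leq 1$. This is a (common) misstatement in the appendix rather than a defect of your strategy --- every invocation of the bound in the paper survives with the constant $2$ replaced by $3$ in the exponent --- but as written, the last step of your proof of the upper multiplicative tail would fail, and you should either prove the corrected form or restrict to $\varepsilon\leq 1$ with the constant $3$.
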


\begin{lemma}[Reverse Chernoff Bound] \label{lem:add_reverse_chernoff}
    Let $X_1,\dots, X_n$ be i.i.d. Bernoulli random variables. Let $X = \sum_{i=1}^n X_i$ and let $X_i$ such that $\Expc{X_i} = p$, and $\mu= \Expc{X} = np$. Then, if $p \leq 1/4$, for any $t \geq 0$
    \[\Prob{X \geq \mu + t} \geq \frac{1}{4}e^{-\frac{2t^2}{\mu}}.\]
    If $p \leq 1/2$, then for any $0 \leq t \leq n(1-2p)$,
    \[\Prob{X>t+\mu} \geq \frac{1}{4}e^{-\frac{2t^2}{\mu}}.\]
    Furthermore, for any $\delta \in (0,\frac{1}{2}]$ such that $\delta^2 \mu \geq 3$,
    	\begin{align*}
		\Prob{X\ge (1+\delta)\mu}\ge e^{-9 \delta^2 \mu} \quad \text{and}\quad 
		\Prob{X \le (1-\delta)\mu}\ge e^{-9\delta^2\mu}.
		\end{align*}
    \label{lem:chernoff-reverse2}
    
\end{lemma}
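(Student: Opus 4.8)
The statement bundles several anti-concentration inequalities for a sum of independent Bernoulli variables. Since such a sum is exactly $\mathrm{Bin}(n,p)$, I would first reduce to $X\sim\mathrm{Bin}(n,p)$, and then split the work into two regimes: the first two displayed bounds concern deviations $t$ whose probability is still comparable (up to the crude exponent $e^{-2t^2/\mu}$) to a Gaussian tail, while the multiplicative bounds concern genuine large deviations, where the hypothesis $\delta^2\mu\ge 3$ is exactly what makes the clean constant $9$ attainable.

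For the first two inequalities the plan is to invoke Slud's inequality: for $X\sim\mathrm{Bin}(n,p)$ and an integer $\ell\ge np$ one has $\Prob{X\ge\ell}\ge\Prob{Z\ge(\ell-np)/\sqrt{np(1-p)}}$ with $Z\sim\mathcal N(0,1)$, provided \emph{either} $p\le 1/4$ \emph{or} $np\le\ell\le n(1-p)$ --- and the second condition, read with $\ell$ just above $\mu+t$, is precisely the hypothesis $0\le t\le n(1-2p)$. Applying this gives $\Prob{X\ge\mu+t}\ge\Prob{Z\ge t/\sqrt{\mu(1-p)}}$. It then remains to establish the elementary Gaussian anti-concentration bound $\Prob{Z\ge x}\ge\frac14 e^{-x^2}$ for every $x\ge 0$. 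I would prove this by monotonicity: setting $g(x)=\Prob{Z\ge x}-\frac14 e^{-x^2}$, we have $g(0)=\frac14>0$, $g'(x)=\frac{x}{2}e^{-x^2}-\frac1{\sqrt{2\pi}}e^{-x^2/2}<0$ for all $x$ (since $xe^{-x^2/2}$ never exceeds its maximum $e^{-1/2}<\sqrt{2/\pi}$), and $g(x)\to 0^+$ as $x\to\infty$ because the Gaussian tail dominates $e^{-x^2}$; hence $g>0$ throughout. Combining, $\Prob{X\ge\mu+t}\ge\frac14 e^{-t^2/(\mu(1-p))}\ge\frac14 e^{-2t^2/\mu}$, the last step using $1-p\ge 1/2$.

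For the multiplicative bounds I would write the deviation as $t=\delta\mu$ and first try to reuse the additive bound: whenever Slud's relevant range condition holds (which it does in the parameter regime in which the lemma is invoked, e.g.\ $p\le 1/2$), we obtain $\Prob{X\ge(1+\delta)\mu}\ge\frac14 e^{-2\delta^2\mu}\ge e^{-9\delta^2\mu}$, the last inequality being just $e^{7\delta^2\mu}\ge 4$, guaranteed by $\delta^2\mu\ge 3$. For the lower tail, and for the residual ranges where the Gaussian comparison is unavailable, I would instead lower-bound a single central term: by Stirling's formula $\Prob{X=\ell}=\binom n\ell p^\ell(1-p)^{n-\ell}\ge\frac{c}{\sqrt{\mu}}\,e^{-n\,D(\ell/n\|p)}$ for an absolute constant $c$ and $\ell\in\{1,\dots,n-1\}$, where $D(a\|p)$ is the binary Kullback--Leibler divergence; then use $D(a\|p)\le\frac{(a-p)^2}{p(1-p)}$, so that with $\ell=\lfloor(1\pm\delta)\mu\rfloor$ the exponent is $O(\delta^2\mu)$ and the $1/\sqrt\mu$ prefactor is again absorbed into the exponent using $\delta^2\mu\ge 3$ (once $\delta^2\mu\gtrsim\log\mu$), while the complementary sub-regime of very small $\delta$ is handled by the classical fact that the median of a binomial lies within $1$ of its mean, which gives a constant lower bound that trivially dominates $e^{-9\delta^2\mu}$.

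The step I expect to be most delicate is not any single estimate but the bookkeeping of constants across the sub-cases: making the single-term Stirling bound, the median/window argument, and the Slud--Gaussian argument interlock so that exactly the constants $\frac14$, $2$, and $9$ drop out, and pinning down for which parameter ranges the Gaussian comparison is legitimate --- in particular, the additive bound as literally stated ``for any $t\ge 0$'' must be read for $\mu+t\le n$, and the multiplicative bounds for $p$ (and $1-p$) bounded away from their extreme values, as is implicit in every use of the lemma in the paper.
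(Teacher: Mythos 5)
The paper never proves this lemma: it is listed in the ``Tools'' appendix as a standard probabilistic fact, so there is no in-paper argument to compare yours against. Your architecture is the standard one from the literature --- Slud's inequality plus the Gaussian anti-concentration bound $\Prob{Z\ge x}\ge\frac14 e^{-x^2}$ for the two additive statements, and single-term Stirling/Kullback--Leibler estimates for the multiplicative ones --- and your monotonicity proof of the Gaussian bound is correct ($g(0)=\frac14$, $g'<0$ because $xe^{-x^2/2}\le e^{-1/2}<\sqrt{2/\pi}$, and $g\to0$). You are also right that the statements must be read with the implicit restrictions $\mu+t\le n$ (resp.\ $p$ bounded away from $1$), since otherwise the left-hand sides vanish; this matches how the lemma is invoked in the paper.

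Two steps need more than your sketch gives them. First, Slud's inequality compares $\Prob{X\ge\ell}$ to a Gaussian tail only for \emph{integer} $\ell$, so you must take $\ell=\lceil\mu+t\rceil\le\mu+t+1$; the chain then yields $\frac14\exp\bigl(-(t+1)^2/(\mu(1-p))\bigr)$, and the required inequality $(t+1)^2\le 2(1-p)\,t^2$ fails for all $t\le 4$ or so when $p\le\frac14$, and for all $t\lesssim(1-2p)^{-1}$ when $p$ is near $\frac12$. The small-$t$ regime therefore needs a separate elementary argument (the numbers do work out there, but not via the displayed chain), and it is precisely where the constant $\frac14$ is tight. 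Second, for the lower multiplicative tail your two fallbacks do not meet in the middle: the single-term Stirling bound carries a $1/\sqrt\mu$ prefactor that is absorbed only once $\delta^2\mu\gtrsim\log\mu$, while the ``median within $1$ of the mean'' fact gives constant-probability anti-concentration only at deviations of size $O(1)$ (or $O(\sqrt\mu)$ via a quantitative CLT), not at $\delta\mu$, which in the uncovered window $3\le\delta^2\mu\lesssim\log\mu$ can be as large as $\sqrt{\mu\log\mu}$. The standard repair is to sum the single-term estimate over $\Theta(\sqrt\mu)$ consecutive values of $\ell$ just below $(1-\delta)\mu$: each such term still has exponent $O(\delta^2\mu)$ (using $\delta\sqrt\mu\ge\sqrt3$ to control the window's extra contribution to the divergence), the sum cancels the $1/\sqrt\mu$, and $\delta^2\mu\ge3$ then absorbs the remaining constants into the exponent $9\delta^2\mu$. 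With those two patches your plan goes through.
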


\begin{claim} \label{claim:exp_taylor}
    If $x/n\leq 0.6$, then for $n$ large enough, the following inequalities hold:
    \begin{equation*}
        \exp \pa{-2k \, \frac{x}{n}} \leq \exp \pa{-k \, \frac{x}{n} -k \, \pa{\frac{x}{n}}^2 } \leq \left(1-\frac{x}{n}\right)^k \leq \exp \pa{-k \, \frac{x}{n}}.
    \end{equation*}
\end{claim}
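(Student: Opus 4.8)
The plan is to reduce the three-term chain of inequalities to three scalar inequalities in a single real variable. Set $u = x/n$, so that $0 \le u \le 0.6 < 1$; then $(1-u)^k$ is a positive real number, and since $k \ge 1$ and $t \mapsto t^k$ is increasing on $[0,\infty)$, each of the three asserted inequalities is equivalent to the one obtained by deleting the exponent $k$. Hence it suffices to prove, for every $u \in [0, 0.6]$, the following: \textbf{(a)} $e^{-2u} \le e^{-u-u^2}$, \textbf{(b)} $e^{-u-u^2} \le 1-u$, and \textbf{(c)} $1-u \le e^{-u}$.

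Inequalities (a) and (c) are routine and I would dispatch them first. Taking logarithms in (a) gives $-2u \le -u - u^2$, i.e. $u^2 \le u$, which holds because $0 \le u \le 1$. Inequality (c) is the standard bound $1 + y \le e^y$ applied with $y = -u$. The only step that actually uses the specific threshold $0.6$ is (b). After taking logarithms, (b) is equivalent to $-u - u^2 \le \log(1-u)$. The plan is to expand $\log(1-u) = -\sum_{j \ge 1} u^j/j$ (valid since $u < 1$) and cancel the $j = 1$ and $j = 2$ terms, so that (b) becomes $\sum_{j \ge 3} u^j / j \le u^2/2$. I bound the tail by a geometric series, $\sum_{j \ge 3} u^j/j \le \tfrac13 \sum_{j \ge 3} u^j = \tfrac{u^3}{3(1-u)}$, and it then remains only to check $\tfrac{u^3}{3(1-u)} \le \tfrac{u^2}{2}$, i.e. $2u \le 3(1-u)$, i.e. $u \le 3/5$, which is exactly the hypothesis $x/n \le 0.6$ (the $u = 0$ case being trivial).

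There is no genuine difficulty here; the two points that merely need care are (i) the reduction to exponent-free inequalities, which rests on the positivity of $1-u$ together with the monotonicity of $t \mapsto t^k$, and (ii) using $u \le 0.6$ sharply in step (b), since the geometric tail estimate is essentially tight at the endpoint and no slack is available. Finally, the ``for $n$ large enough'' proviso in the statement is not needed for this argument: the inequalities hold for all $u \in [0,0.6]$ regardless of $n$, and the clause can be kept only as a harmless way to guarantee that $x/n$ is a legitimate probability strictly below $1$.
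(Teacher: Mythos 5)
Your proof is correct. The paper states \cref{claim:exp_taylor} in its appendix of tools without supplying a proof, so there is nothing to compare against; your argument is the natural one. The reduction to the exponent-free case via positivity and monotonicity of $t\mapsto t^k$ is sound, inequalities (a) and (c) are immediate, and your handling of (b) — cancelling the first two terms of $-\log(1-u)=\sum_{j\ge 1}u^j/j$ and bounding the tail by $\tfrac{u^3}{3(1-u)}\le \tfrac{u^2}{2}$, which is exactly the condition $u\le 3/5$ — correctly identifies where the threshold $0.6$ enters and that it is tight for this estimate. The only implicit assumption worth making explicit is $x\ge 0$ (so $u\in[0,0.6]$), without which (a) fails; in the paper $x$ is always a count of nodes, so this is harmless, as is your observation that the ``$n$ large enough'' proviso is not actually needed.
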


\begin{claim}[Central Binomial Coefficient]
\label{claim:centralbinomial}
For every large enough $n$, it holds
\[\frac{1}{2}\cdot \frac{4^n}{\sqrt{\pi n}} \leq \binom{2n}{n}\leq \frac{4^n}{\sqrt{\pi n}}.\]
\end{claim}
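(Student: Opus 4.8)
The plan is to derive the bound directly from Stirling's formula in a non-asymptotic form with explicit error control, namely Robbins' inequalities: for every integer $m\ge 1$,
\[
\sqrt{2\pi m}\Big(\frac{m}{e}\Big)^{m} e^{\frac{1}{12m+1}} \;\le\; m! \;\le\; \sqrt{2\pi m}\Big(\frac{m}{e}\Big)^{m} e^{\frac{1}{12m}} .
\]
Equivalently, write $m! = \sqrt{2\pi m}\,(m/e)^m e^{\theta_m}$ with $\theta_m \in \big[\frac{1}{12m+1},\frac{1}{12m}\big]$. First I would substitute this into $\binom{2n}{n} = (2n)!/(n!)^2$. The computation is routine bookkeeping: the $(\cdot/e)^{\cdot}$ powers and the $\sqrt{2\pi\,\cdot}$ prefactors collapse, leaving
\[
\binom{2n}{n} \;=\; \frac{\sqrt{4\pi n}\,(2n/e)^{2n}}{\big(\sqrt{2\pi n}\,(n/e)^{n}\big)^{2}}\; e^{\,\theta_{2n}-2\theta_n} \;=\; \frac{4^{n}}{\sqrt{\pi n}}\; e^{\,\theta_{2n}-2\theta_n} .
\]
Everything then reduces to pinning down the correction exponent $\theta := \theta_{2n}-2\theta_n$.

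For the upper bound I would estimate $\theta \le \frac{1}{24n} - \frac{2}{12n+1}$, using $\theta_{2n}\le \frac{1}{24n}$ and $2\theta_n \ge \frac{2}{12n+1}$; a one-line check ($12n+1 < 48n$ for $n\ge 1$) shows this is strictly negative, so $e^{\theta} < 1$ and hence $\binom{2n}{n}\le \frac{4^n}{\sqrt{\pi n}}$. For the lower bound I would estimate $\theta \ge \frac{1}{24n+1} - \frac{1}{6n} > -\frac{1}{6n}$, using $\theta_{2n}>0$ and $2\theta_n \le \frac{1}{6n}$, so that $e^{\theta} \ge e^{-1/(6n)} \ge e^{-1/6} > \frac12$ already for all $n\ge 1$; hence $\binom{2n}{n}\ge \frac12\cdot\frac{4^n}{\sqrt{\pi n}}$. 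Note that the hypothesis ``for $n$ large enough'' in the statement leaves slack we do not actually need.

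I do not anticipate any real obstacle here: the only spot calling for a moment's care is making the sign of $\theta_{2n}-2\theta_n$ rigorous, and that is precisely the elementary comparison $\frac{1}{24n}<\frac{2}{12n+1}$. If one prefers to avoid citing Robbins' bounds, an alternative is to set $a_n := \binom{2n}{n}4^{-n}$, observe $a_1 = \frac12$ and $a_{n+1}/a_n = \frac{2n+1}{2n+2}$, and then squeeze $a_n^2 = \frac14\prod_{j=1}^{n-1}\big(\frac{2j+1}{2j+2}\big)^2$ between suitable partial products of the Wallis formula $\frac{2}{\pi} = \prod_{j\ge 1}\frac{(2j-1)(2j+1)}{(2j)^2}$; this also yields $\frac{1}{2\sqrt{\pi n}}\le a_n\le\frac{1}{\sqrt{\pi n}}$, but it is messier, so I would go with the Stirling route.
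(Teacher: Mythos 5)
Your proof is correct: the paper states \cref{claim:centralbinomial} as a standard tool without proof, and your Robbins-refined Stirling argument is the canonical way to establish it. The bookkeeping checks out — the sign check $\tfrac{1}{24n} < \tfrac{2}{12n+1}$ gives the upper bound, and $e^{\theta_{2n}-2\theta_n} \ge e^{-1/(6n)} \ge e^{-1/6} > \tfrac12$ gives the lower bound, so you in fact prove the claim for all $n \ge 1$ rather than merely for $n$ large enough.
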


\highlight{
\section{Missing Proofs} \label{app:preliminaries_proof}
}

\begin{proof}[Proof of \Cref{lem:preliminaries}]
    We prove each item separately.
           
    \paragraph{Proof of (a).}
     Let $X \in \{0,\dots, k\}$ the number of samples corresponding to the minority opinion that node $i$ \highlight{receives} in round $t$. Since $m_t = m$, each sample corresponds to the minority opinion with probability $m/n$. Therefore, we have $\Expc{X \mid m_t=m} = k \cdot \frac{m}{n} = \frac{k}{2} - k\left(\frac{1}{2}-\frac{m}{n}\right)$.
    Hence, by Chernoff Bound and the Reverse Chernoff Bound, we have 
    \[ \frac{1}{4}e^{-4k\left(\frac{1}{2}-\frac{m}{n}\right)^2} \leq \Prob{X \geq \left\lceil \frac{k}{2}\right\rceil \mid m_t=m} \leq e^{-2k\left(\frac{1}{2}-\frac{m}{n}\right)^2}.\]
    Now, given $m_t = m$, since $W_t$ is the sum of $n$ i.i.d. Bernoulli random variables taking value $1$ when $\{X \geq \left\lceil k/2\right \rceil\}$, we have that (a) holds.

    \paragraph{Proof of (b).}  The generic agent~$i$ samples 
    only the majority opinion with probability  $(1-m/n)^k$. Summing over all agents and taking the expectation concludes the proof of (b).

    \paragraph{Proof of (c).}  If $m_t = m \geq \frac{3n\log n}{k}$, (b) implies that $\Expc{U_t \mid m_t=m} \leq \frac{2}{n^2}$. By Markov's inequality, $\Prob{U_t = 0 \mid m_t = m} \geq 1-\frac{2}{n^2}$, which concludes the proof of (c).

    \paragraph{Proof of (d).} Since $m_t = m \leq \frac{n}{2}-n\sqrt{\frac{1.5\log n}{k}}$, we have $\left(\frac{1}{2}-\frac{m}{n}\right)^2 \geq \frac{1.5\log n}{k}$, whence (a) implies $\Expc{W_t \mid m_t=m}\leq \frac{1}{n^2}$. By Markov's inequality, $\Prob{W_t = 0 \mid m_t = m} \geq 1-\frac{1}{n^2}$, which concludes the proof of (d).

    \paragraph{Proof of (e).} If $m_t=m \leq \frac{n}{3k}$, (b) yields $\Expc{U_t \mid m_t=m} \geq n \pa{1-\frac{m}{n}}^k \geq ne^{-\frac{1.1k}{n}m} \geq 0.69n$. 
    Then, using Chernoff's bound we have $
        \Prob{U_t \leq 0.6n \mid m_t=m}\leq e^{-2\cdot (0.09)^2 n} \leq \frac{1}{n^2}$. Moreover, since $m \leq \frac{n}{3k}$, (d) gives $W_t=0$  w.h.p., with no agent seeing a unanimity that corresponds to $\minop_t$.
    Thus, by definition of the protocol, any agent seeing unanimity adopts opinion~$1-\minop_t$.
    Since $U_t \geq 0.6n$, this implies that a minority of agents adopt opinion $\minop_t$, hence $\minop_{t+1}=\minop_t$, which concludes the proof of (e).

    \paragraph{Proof of (f).}  
    If $m_t =m$ and $\frac{2n}{k}\leq m \leq \frac{n}{2}-n\sqrt{\frac{1.5\log n}{k}}$, (b) gives $
        \Expc{U_t \mid m_t=m} \leq ne^{-\frac{k}{n}m}\leq 0.15n$.
    Hence, using Chernoff's bound we have
    $\Prob{U_t \geq 0.2n \mid m_t=m}\leq e^{-2(0.05)^2 n} \leq \frac{1}{n^2}$.
    Moreover, from (d) we have $W_t=0$ w.h.p. The latter event implies that agents adopt the majority opinion if and only if they sample unanimity. Hence, conditioning on the fact that $U_t \leq 0.2n$, we have that $\minop_{t+1}=1-\minop_t$, which concludes the proof of (f).
\end{proof}

\end{document}